\documentclass[12pt,letterpaper,onecolumn]{article}

\usepackage{setspace}

\usepackage{natbib}
\usepackage{setspace}
\usepackage{arxiv}
\usepackage[dvipsnames]{xcolor}
\usepackage[utf8]{inputenc} 
\usepackage[T1]{fontenc}    
\usepackage{hyperref}       
\usepackage{url}            
\usepackage{booktabs}       
\usepackage{amsfonts}       
\usepackage{nicefrac}       
\usepackage{microtype}      
\usepackage{lipsum}
\usepackage{algorithm,algpseudocode}
\usepackage{graphicx}
\graphicspath{ {./images/} }
\usepackage{mathtools}
\usepackage{amssymb}
\usepackage{tikz, pgfplots}
\usetikzlibrary{positioning,fit,calc}

\usepackage{mathrsfs}
\usepackage{amsmath,mleftright,mathtools}
\allowdisplaybreaks[1]

\usetikzlibrary{arrows.meta}

\newcommand{\indep}{\perp\!\!\!\!\perp}
\newif\ifshowrevisions
\showrevisionstrue

\usepackage{float}
\usepackage{amsthm}
\newtheorem{prop}{Proposition}

\newtheorem{assumption}{Assumption}

\allowdisplaybreaks

\title{A Bayesian machine learning approach to direct and spillover effects in cluster-randomized trials with noncompliance}

\author{
 Saurabh Bhandari \\
  Department of Public Health Sciences\\
  University of Chicago\\
  Chicago, IL 60637 \\
  \texttt{sbhandari52@uchicago.edu} \\
   \And
  Fan Li \\
   Department of Biostatistics \\
   Yale University \\
   New Haven, CT 06510 \\
   \texttt{fan.f.li@yale.edu} \\
   \And
   Michael J. Daniels \\
   Department of Statistics\\
   University of Florida\\
   Gainesville, FL 32611 \\
   \texttt{daniels@ufl.edu} \\
}

\newif\ifkeepmathsize
\newcommand{\smallmathsize}{\ifkeepmathsize\else\edef\smallmathbls{\the\baselineskip}\small
  \setlength{\baselineskip}{\smallmathbls}\fi}
\newcommand{\footnotemathsize}{\keepmathsizetrue\edef\smallmathbls{\the\baselineskip}\fontsize{8}{9.5}\selectfont
  \setlength{\baselineskip}{\smallmathbls}}
\AddToHook{env/equation/before}{\begingroup\smallmathsize}
\AddToHook{env/equation/after}{\endgroup}
\AddToHook{env/equation*/before}{\begingroup\smallmathsize}
\AddToHook{env/equation*/after}{\endgroup}
\AddToHook{env/align/before}{\begingroup\smallmathsize}
\AddToHook{env/align/after}{\endgroup}
\AddToHook{env/align*/before}{\begingroup\smallmathsize}
\AddToHook{env/align*/after}{\endgroup}

\begin{document}
\doublespacing
\maketitle

\begin{singlespace}
\begin{abstract}
We develop a Bayesian machine learning approach for estimating direct and spillover effects in cluster-randomized trials with treatment noncompliance. To flexibly model outcomes and treatment receipt while accounting for within-cluster dependence, we combine Bayesian additive regression trees with cluster-level random effects. We augment the outcome regression with a peer-treatment clever covariate, defined as the predicted mean proportion of treated peers from a preliminary receipt model, to mitigate regularization-induced distortion of the effect decomposition. Within a Bayesian g-computation framework, we estimate cluster-average intent-to-treat effects and cluster-average local average treatment effects and examine sensitivity to the cross-world restrictions underlying their decomposition. Simulation studies show reduced bias under model misspecification, while the correctly specified parametric models show expected frequentist properties. We find that the clever covariate improves the decomposition of the total effect into direct and spillover components under moderate or strong spillover, with similar gains in smaller clusters. Its benefits vary with spillover strength, and the $95\%$ credible intervals remain conservative. An application to the Primary School Deworming Project in Kenya suggests reduced infection risk, with a larger estimated contribution from individual treatment receipt than from peer receipt. Overall, our approach provides a flexible Bayesian estimation of direct and spillover effect decomposition in the presence of treatment noncompliance.
\end{abstract}
\end{singlespace}

\keywords{Causal inference \and Clever covariate \and Cluster-randomized trials \and G-computation \and Local average treatment effect \and Mixed-effects Bayesian additive regression trees}

\newpage

\section{Introduction}\label{Sec1}

Cluster-randomized trials (CRTs) often evaluate interventions whose benefits extend beyond the individuals who receive treatment. In infectious disease studies, for example, treating one individual may reduce transmission to others in the same cluster. When treatment receipt differs from randomized assignment, the intervention can affect outcomes through changes in both an individual's own treatment and their peers' treatment. Separating direct and spillover effects is more difficult because treatment receipt is not randomized and the relevant contrasts combine receipt under different assignment regimes.

Our motivating example is the Primary School Deworming Project (PSDP), which randomized schools in Kenya to receive mass deworming \citep{miguel2004worms}. Some students in treatment-assigned schools did not take the medication but could still benefit if the deworming their schoolmates received reduced parasite transmission. The scientific question is therefore how much of the intervention's effect on infection operates through a student's own treatment receipt and how much operates through treatment of peers. We address this question through direct and spillover decompositions (Section \ref{Sec4}) of the cluster-average ITT effect and the cluster-level local average treatment effect (LATE).

The existing work on interference and noncompliance provides complementary foundations for this analysis. \citet{hudgens2008toward} develop causal contrasts under interference in two-stage randomized designs, with subsequent work extending identification and estimation under partial interference \citep{tchetgen2012causal}. For noncompliance, principal-stratification approaches distinguish assignment effects from effects within latent compliance groups \citep{angrist1995identification}. Methods accommodating both features include approaches for peer encouragement and two-stage randomized designs \citep{kang2016peer,imai2021causal}. In a CRT, however, all individuals within a cluster share the same assignment. This design provides limited information for separating individual and peer receipt effects without additional identifying restrictions \citep{keele2022introduction}.

In a related setting, \citet{park2022efficient} develop efficient influence functions and doubly robust estimators for network effects under partial interference. Allowing noncompliance with a monotone outcome restriction, they provide assumption-lean ITT analyses and sharp bounds for network effects within compliance strata. We instead study a g-formula approach under additional restrictions on cross-world outcome means and treatment receipt, without imposing outcome monotonicity. Similarly, \citet{cheng2025identification} develop point identification under extended principal ignorability and estimate principal causal effects through efficient influence functions. Their network assignment effect can include an assignment effect beyond treatment receipt, which our network exclusion restriction removes. Consequently, the two decompositions generally differ.

We develop a Bayesian g-computation approach \citep{josefsson2021bayesian,bhandari2025bayesian,bhandari2025causal} using mixed-effects Bayesian additive regression trees (BART-RE; Section \ref{Sec5}), combining flexible covariate adjustment with random intercepts for residual within-cluster dependence \citep{chipman2010bart}. Our primary modeling contribution is to augment the outcome regression with a peer-treatment \textit{clever covariate}, defined here as the predicted mean proportion of treated peers from a preliminary treatment-receipt model. Using this low-dimensional summary of measured determinants of peer uptake, we adapt propensity-based regularization strategies \citep{hahn2020bayesian,linero2025mediation} to peer exposure in CRTs. 

The remainder of the article is structured as follows. Section~\ref{Sec2} introduces the motivating example. Sections~\ref{Sec3} and~\ref{Sec4} describe the experimental design and causal estimands in detail. In Section~\ref{Sec5}, we present our proposed modeling approach using BART-RE. Section~\ref{Sec6} and Section~\ref{Sec7}  outline the non-parametric identification and estimation steps, respectively, for the causal parameters. Sections~\ref{Sec8} and~\ref{Sec9} report results from simulation studies and an empirical application, respectively. We conclude with a discussion in Section~\ref{Sec10}.

\section{Motivating example}\label{Sec2}

The PSDP trial evaluated school-based treatment for intestinal helminth infections in Busia, Kenya \citep{miguel2004worms}. The study enrolled $75$ primary schools with more than $30{,}000$ pupils aged $6$--$18$. Schools were stratified by administrative subzone and prior participation in nongovernmental assistance programs, then randomized to three groups of $25$ schools. Deworming began in $1998$, $1999$, and $2001$ for Groups~1, 2, and 3, respectively.

We analyze the initial phase, comparing assignment to deworming in $1998$ in Group~1 ($Z_j=1$) with Group~2 ($Z_j=0$). Group~3 is excluded because the relevant follow-up infection outcomes are unavailable. Individual receipt, $D_{ji}$, indicates whether a student received deworming medication in $1998$. The outcome, $Y_{ji}$, indicates any moderate-to-heavy helminth infection at the $1999$ follow-up. Of $4{,}018$ students in the survey, $2{,}421$ had observed follow-up infection status. Excluding $37$ students without receipt records and one with missing school assignment and cluster covariates yields $N=2{,}383$ students in $J=49$ schools ($25$ treatment and $24$ control). The analyzed cluster sizes range from $23$ to $80$ students.

Treatment receipt exhibits one-sided noncompliance. Among $871$ students in treatment schools, $157$ ($18\%$) did not receive medication, while none of the $1{,}512$ control students received treatment. Within-school spillover is plausible because deworming may reduce parasite contamination in the shared environment, protecting untreated schoolmates. 


Table~\ref{Table::PSDP_baseline} presents the individual-level and school-level baseline covariates used for adjustment. The unadjusted follow-up infection rates were $27.2\%$ in treatment schools and $52.2\%$ in control schools. This difference motivates assessment of the overall intervention effect, but does not determine how much operates through individual versus peer treatment receipt.

\begin{table}[ht]
\caption{Summary statistics of baseline covariates by treatment assignment in the PSDP analysis sample ($N = 2{,}383$, $J = 49$ schools).}
\label{Table::PSDP_baseline}
\centering
\begin{tabular}{@{}llcc@{}}
\hline
& & Treatment  & Control  \\
& & $n = 871$ & $n = 1{,}512$ \\
\hline
\multicolumn{4}{@{}l}{\textit{Individual-level covariates}} \\
& Male (\%)  & 53.0 & 48.5 \\
& Age (years) & 12.7 & 12.7 \\
& Has household latrine (\%) & 83.7 & 81.7 \\
& Owns livestock (\%) & 65.4 & 69.9 \\
& Weight-for-age $z$-score  & $-1.50$ & $-1.46$ \\
& Blood in stool (\%)  & 26.4 & 23.3 \\
& Often sick (\%)  & 8.6 & 10.5 \\
& Clean hands (\%)  & 60.0 & 65.3 \\
& Malaria (\%)  & 36.3 & 39.7 \\[4pt]
\multicolumn{4}{@{}l}{\textit{School-level covariates}} \\
& Zone baseline infection rate  & 0.369 & 0.365 \\
& School exam score 1996  & $-0.197$ & 0.217 \\
& Distance to Lake Victoria (km)  & 10.0 & 9.9 \\
& Latrines per pupil  & 0.007 & 0.006 \\[4pt]
\multicolumn{4}{@{}l}{\textit{Treatment receipt and outcome}} \\
& Received deworming, $D=1$ (\%)  & 82.0 & 0.0 \\
& Any infection at follow-up, $Y=1$ (\%)  & 27.2 & 52.2 \\
\hline
\end{tabular}
\end{table}

\section{Notation and setup}\label{Sec3}

Consider a CRT with $J$ clusters, where cluster $j$ contains $n_j$ individuals and $N=\sum_{j=1}^{J}n_j$ is the total sample size. Let $Z_j\in\{0,1\}$ denote randomized cluster assignment, $D_{ji}\in\{0,1\}$ individual treatment receipt, and $Y_{ji}\in\{0,1\}$ the binary outcome. The baseline covariate vector $\mathbf{X}_{ji}\in\mathbb{R}^p$ includes individual- and cluster-level characteristics, with $\mathbf{X}_j=\{\mathbf{X}_{ji}\}_{i=1}^{n_j}$ denoting all baseline information in cluster $j$.

In the PSDP, $J=49$ schools were randomized to deworming or control in 1998. Here, $D_{ji}$ indicates whether student $i$ in school $j$ received deworming medication, and $Y_{ji}$ ($Y_{ji} = 1$ if infected, $0$ otherwise) indicates moderate-to-heavy intestinal helminth infection at the 1999 follow-up. The $p=13$ baseline covariates are listed in Appendix~\ref{AppendixC3}.

We use the potential outcomes framework to define our causal effects. 
For each $z_j \in \{0,1\}$, let $D_{ji}(z_j)$ denote the potential treatment receipt status of subject $i$ in cluster $j$ if cluster $j$ were assigned to treatment $z_j$. The vector of potential treatment receipts in cluster $ j $ under assignment $ z_j $ is denoted by $ \textbf{D}_{j}(z_j)$.
The corresponding vector excluding subject $ i $ is denoted by $ \textbf{D}_{j(-i)}(z_j)$.
Let $ \textbf{z}_{(-j)}$
denote the vector of treatment assignments for all clusters except cluster $ j $, and let $ \textbf{D}_{(-j)}(\textbf{z}_{(-j)})$
denote the vector of potential treatment receipt statuses for all clusters excluding cluster $ j $.

Let $ Y_{ji}\big(z_j, \textbf{z}_{(-j)}, d_{ji}, \textbf{d}_{j(-i)}, \textbf{d}_{(-j)} \big) $ denote the potential outcome of subject $ i $ in cluster $ j $ if cluster $ j $ were assigned treatment $ z_j $, all other clusters were assigned $ \textbf{z}_{(-j)} $, subject $ i $'s treatment receipt were $ d_{ji} $, their within-cluster peers had treatment receipt statuses given by $ \textbf{d}_{j(-i)} $, and subjects in all other clusters had treatment receipt statuses given by $ \textbf{d}_{(-j)}$. Similarly, let $Y_{ji}\big(z_j, \textbf{z}_{(-j)}, D_{ji}(z_j), \textbf{D}_{j(-i)}(z_j), \textbf{D}_{(-j)}(z_{(-j)}) \big)$ denote the potential outcome of subject $ i $ in cluster $ j $ if cluster $ j $ were assigned treatment $ z_j $, all other clusters were assigned $ \textbf{z}_{(-j)} $, subject $ i $'s treatment receipt were $ D_{ji}(z_j) $, their within-cluster peers had treatment receipt statuses given by $ \textbf{D}_{j(-i)}(z_j) $, and subjects in all other clusters had treatment receipt statuses given by $ \textbf{D}_{(-j)}(z_{(-j)}) $. The observed outcome equals the potential outcome evaluated at the treatment receipt vector realized under the actual cluster assignment $Z_j$. 


Following \citet{park2023assumption}, we index potential receipt by the common cluster assignment, $D_{ji}(z_j)$. Individual assignments cannot vary within a cluster under this design. The resulting potential receipt pair defines the four conventional principal strata in Section~\ref{Sec32}. A directed acyclic graph (DAG) of individual receipt and within-cluster spillover pathways appears in Appendix~\ref{AppendixG}, Figure~\ref{Fig:DAG}. In summary, we define the collection of all random variables in cluster $j$ as $\textbf{W}_j = \Big\{\textbf{X}_j, \textbf{D}_j(0), \textbf{D}_j(1), \textbf{Y}_{j}\big(0,  \textbf{z}_{(-j)},  \textbf{d}_{j}, \textbf{d}_{(-j)}(z_{(-j)}) \big), \textbf{Y}_{j}\big(1, \textbf{z}_{(-j)}, \textbf{d}_{j}, \textbf{d}_{(-j)}(z_{(-j)}) \big)   \Big\}$
for all $\textbf{d}_{j} \in \{0,1\}^{n_j}$ and $\textbf{d}_{(-j)}(z_{(-j)}) \in \{0,1\}^{N - n_j}$. 


\subsection{Principal stratification}\label{Sec32}
Principal strata are defined by the potential treatment receipt indicators $(D_{ji}(0),D_{ji}(1))$, which classify individuals as compliers $(0,1)$, defiers $(1,0)$, always-takers $(1,1)$, or never-takers $(0,0)$. In the PSDP, treatment was unavailable in control schools, implying $D_{ji}(0)=0$ for all students and ruling out always-takers and defiers. Thus, the population consists of compliers, who receive deworming medication when assigned to treatment, and never-takers, who do not receive it regardless of assignment.

\subsection{Structural assumptions}\label{Sec33}

In this section, we formalize the assumptions that underpin our CRT framework and the associated treatment noncompliance.
\begin{assumption}[Partial interference]\label{Assump:1}
For any $\textbf{z}_{(-j)} \neq \textbf{z}'_{(-j)}$ and $\textbf{d}_{(-j)} \neq \textbf{d}'_{(-j)}$:
$Y_{ji}\big(z_j, \textbf{z}_{(-j)}, d_{ji}, \textbf{d}_{j(-i)}, \textbf{d}_{(-j)} \big) 
        = Y_{ji}\big(z_j, \textbf{z}'_{(-j)}, d_{ji}, \textbf{d}_{j(-i)}, \textbf{d}'_{(-j)} \big)
        = Y_{ji}\big(z_j, d_{ji}, \textbf{d}_{j(-i)} \big).$
\end{assumption}
Assumption~\ref{Assump:1} implies that the potential outcome for individual $i$ in cluster $j$ does not depend on the treatment assignment or treatment receipt of individuals in other clusters ($j' \neq j$). This allows for \textit{within-cluster interference}---i.e., the outcome of an individual may be influenced by peers in the same cluster---but rules out \textit{across-cluster interference}. This assumption is commonly adopted in analyses of CRTs and is plausible when clusters are sufficiently geographically separated that spillovers between clusters are negligible.

\begin{assumption}[Cluster randomization]\label{Assump:2}
The treatment assignment for each cluster is an independent realization from a Bernoulli distribution with $P\big(Z_j = 1\big) = \pi_z \in \big(0,1\big)$. 
\end{assumption}
We consider a CRT with $J$ clusters, of which $m < J$ are randomly assigned to the treatment group ($Z_j = 1$), and the remaining $J - m$ clusters are assigned to the control group ($Z_j = 0$). 
This assumption holds by design.

\begin{assumption}[Nonzero causal effect of $Z$ on $D$]\label{Assump:3}
$ \sum_{j=1}^{J}\sum_{i=1}^{n_j} \left\{ D_{ji}(1) - D_{ji}(0) \right\} \neq 0.$
\end{assumption} 
Assumption~\ref{Assump:3} ensures that $Z_j$ has a nonzero causal effect on $D_{ji}$. This is typically satisfied in CRTs and holds in the PSDP example.

\begin{assumption}[Network exclusion restriction]\label{Assump:4}
Given $d_{ji}$ and $\textbf{d}_{j(-i)}$:
$Y_{ji}\big(z_j = 1, d_{ji}, \textbf{d}_{j(-i)}\big) 
    = Y_{ji}\big(z_j = 0, d_{ji}, \textbf{d}_{j(-i)}\big) 
    = Y_{ji}\big(d_{ji}, \textbf{d}_{j(-i)}\big)$.
\end{assumption}
Assumption~\ref{Assump:4} is the network exclusion restriction used by \citet{park2023assumption}. It rules out effects of cluster assignment that persist when the entire treatment-receipt vector is held fixed. Assignment can affect an individual's outcome through both their own receipt and their peers' receipt. This restriction gives the spillover component a treatment-receipt interpretation, but does not by itself identify that component. It would be violated if assignment changed health education, sanitation, or other outcome determinants independently of medication receipt.

\begin{assumption}[Monotonicity of receipt]\label{Assump:5} For every individual $ji$:
$D_{ji}(0) \leq D_{ji}(1).$
\end{assumption}
Assumption~\ref{Assump:5} rules out the presence of \textit{defiers}---that is, we assume that no individual would receive the treatment under control assignment but not under treatment assignment. Formally, a defier is defined as an individual for whom $D_{ji}(1) < D_{ji}(0)$. Under the monotonicity of potential treatment receipt, such individuals do not exist in the population. This restricts the population to consist only of compliers, always-takers, and never-takers.

As in \citet{park2023assumption}, we assume partial interference, cluster randomization, treatment relevance, network exclusion, and treatment receipt monotonicity. These assumptions identify the overall ITT effect but do not generally lead to point identification of network effects within compliance strata. With additional boundedness and monotonicity restrictions on outcomes, \citet{park2023assumption} derive sharp bounds for these network effects. We instead impose conditional expected homogeneity and cross-world independence of treatment receipt across different individuals to obtain point identification of the direct and spillover components through the g-formula, without requiring outcome monotonicity (Section~\ref{Sec6}). Our identification strategy also differs from that of \citet{cheng2025identification}, who use extended principal ignorability to link conditional outcome means across compliance strata and obtain point identification without network exclusion. We retain network exclusion so that our decomposition describes effects operating through individual and peer treatment receipt.

\section{Causal estimands}\label{Sec4}
In the presence of noncompliance, the primary causal estimands of interest are the intent-to-treat (ITT) effect and the local average treatment effect (LATE) \citep{angrist1995identification}. The overall ITT effect is defined as the expected difference in potential outcomes under treatment versus control across all individuals. In contrast, the LATE quantifies the average causal effect among compliers. 

Under interference, compliers are of interest because their own receipt changes with cluster assignment. Always-takers and never-takers can nevertheless experience assignment effects through changes in their peers' receipt. Monotonicity removes defiers and identifies compliance probabilities from treatment-receipt differences, but further restrictions are needed to identify outcome contrasts within the latent compliance strata.

Within a CRT framework, we define the ITT and LATE effects while explicitly accounting for the cluster-level structure, given by:
\begin{equation}\label{Eq:clusterITT}
    \begin{aligned}
        \text{ITT}^{C} &= \mathbf{E}\Big[\frac{1}{n_j} \sum_{i=1}^{n_j} \big\{Y_{ji}\big(D_{ji}(1), \textbf{D}_{j(-i)}(1)\big) - Y_{ji}\big(D_{ji}(0), \textbf{D}_{j(-i)}(0)\big)\big\} \Big], \text{ and }
        \\
        \text{LATE}^{C} &= \mathbf{E}\Big[\frac{1}{n_j} \sum_{i=1}^{n_j} \big\{Y_{ji}\big(D_{ji}(1), \textbf{D}_{j(-i)}(1)\big) - Y_{ji}\big(D_{ji}(0), \textbf{D}_{j(-i)}(0)\big)\big\}\big| D_{ji}(0) = 0, D_{ji}(1) = 1 \Big],
    \end{aligned}
\end{equation}
where the expectation is taken with respect to the joint distribution of the cluster-level random variables $\mathbf{W}_j$ and the cluster size $n_j$, denoted 
$\mathcal{P}_{W,n} = \mathcal{P}_{W|n} \times \mathcal{P}_{n}$,
as implied by the super-population framework (Appendix~\ref{AppendixC1}).

The cluster-ITT effect averages the consequences of assigning a cluster to treatment rather than control, including changes in both individual and peer receipt. The cluster-LATE targets the corresponding assignment contrast among compliers. Under interference, this complier contrast includes both direct and spillover pathways and need not equal the conventional no-interference instrumental-variable estimand.

In the PSDP trial, the outcome is coded as $Y_{ji} = 1$ for infection, so a \emph{negative} $\text{ITT}^{C}$ indicates that assignment to the deworming intervention reduces the average probability of helminth infection at the school level, which is the beneficial direction. The $\text{ITT}^{C}$ captures this effect across all students regardless of whether they personally took the medication, reflecting the overall impact of the school-level deworming policy. The $\text{LATE}^{C}$ restricts attention to compliers, that is, the students who take the medication when offered and do not take it otherwise, and a negative $\text{LATE}^{C}$ indicates that, among this subpopulation, being assigned to deworming reduces infection probability.


In the presence of clustered interference, as in a CRT, the $\text{ITT}^{C}$ and $\text{LATE}^{C}$ effects can be further decomposed into direct and spillover effects. Specifically, the decomposition of the ITT$^{C}$ is:
\begin{itemize}
    \item[(1)] \textbf{Cluster-ITT direct effect} (ITT$^C_{DE}$):
    \begin{equation}\label{Eq:HDE_ITT}
        \begin{aligned}
            \text{ITT}^C_{DE} &= \mathbf{E}\Big[\frac{1}{n_j} \sum_{i=1}^{n_j} \big\{Y_{ji}\big(D_{ji}(1), \textbf{D}_{j(-i)}(0)\big) - Y_{ji}\big(D_{ji}(0), \textbf{D}_{j(-i)}(0)\big)\big\}\Big],
        \end{aligned}
    \end{equation}
    \end{itemize}
    which captures the effect of changing the individual’s treatment receipt from the level it would take under control assignment to that under treatment assignment, while holding the treatment receipt status of all other individuals in the cluster fixed at the level they would take under control assignment.
\begin{itemize}
    \item[(2)] \textbf{Cluster-ITT spillover effect} (ITT$^C_{SE}$):
    \begin{equation}\label{Eq:HSE_ITT}
        \begin{aligned}
            \text{ITT}^C_{SE} &= \mathbf{E}\Big[\frac{1}{n_j} \sum_{i=1}^{n_j} \big\{Y_{ji}\big(D_{ji}(1), \textbf{D}_{j(-i)}(1)\big) - Y_{ji}\big(D_{ji}(1), \textbf{D}_{j(-i)}(0)\big)\big\}\Big],
        \end{aligned}
    \end{equation}
    \end{itemize}
    which measures the effect of changing the treatment receipt status of all other individuals in the cluster from the levels they would take under control to those under treatment assignment, while holding the individual’s own treatment receipt fixed at the level it would take under treatment assignment.

Similarly, for LATE$^{C}$, we define:
\begin{itemize}
    \item[(1)] \textbf{Cluster-LATE direct effect} (LATE$^C_{DE}$):
    \begin{equation}\label{Eq:LATE_DE}
        \begin{aligned}
            \text{LATE}^C_{DE} &= \mathbf{E}\Big[\frac{1}{n_j} \sum_{i=1}^{n_j} \big\{Y_{ji}\big(D_{ji}(1), \textbf{D}_{j(-i)}(0)\big) - Y_{ji}\big(D_{ji}(0), \textbf{D}_{j(-i)}(0)\big)\big\} \big| D_{ji}(0) = 0, D_{ji}(1) = 1 \Big].
        \end{aligned}
    \end{equation}
    \item[(2)] \textbf{Cluster-LATE spillover effect} (LATE$^C_{SE}$):
    \begin{equation}\label{Eq:LATE_SE}
        \begin{aligned}
             \text{LATE}^C_{SE} &= \mathbf{E}\Big[\frac{1}{n_j} \sum_{i=1}^{n_j} \big\{Y_{ji}\big(D_{ji}(1), \textbf{D}_{j(-i)}(1)\big) - Y_{ji}\big(D_{ji}(1), \textbf{D}_{j(-i)}(0)\big)\big\} \big| D_{ji}(0) = 0, D_{ji}(1) = 1 \Big].
        \end{aligned}
    \end{equation}
\end{itemize}
In the PSDP, the direct effects (ITT$^C_{DE}$ and LATE$^C_{DE}$) capture the change in a student's infection probability attributable to their own deworming receipt, holding fixed the treatment receipt of their schoolmates at the level that would occur under the control assignment. The spillover effects (ITT$^C_{SE}$ and LATE$^C_{SE}$) capture the additional change in a student's infection probability that arises from the increase in schoolmates' deworming receipt when the school moves from the control to the treatment assignment, while holding the student's own treatment receipt fixed at the level under treatment assignment. 


The ITT components also have a principal-stratum interpretation. Let $S_{ji}$ denote compliance type, and select a cluster from the super-population followed by an individual $I$ uniformly within that cluster. Define $\omega_s=\Pr(S_{jI}=s)=\mathbf{E}[n_j^{-1}\sum_{i=1}^{n_j}\mathbf{1}(S_{ji}=s)]$. If $\mathrm{PCE}^{C}_{s,DE}$ and $\mathrm{PCE}^{C}_{s,SE}$ denote the corresponding direct and spillover contrasts conditional on $S_{jI}=s$, then $\mathrm{ITT}^{C}_{DE}=\sum_s\omega_s\mathrm{PCE}^{C}_{s,DE}$ and $\mathrm{ITT}^{C}_{SE}=\sum_s\omega_s\mathrm{PCE}^{C}_{s,SE}$. The complier components are $\mathrm{LATE}^{C}_{DE}$ and $\mathrm{LATE}^{C}_{SE}$. Under monotonicity and network exclusion, the direct components for always-takers and never-takers are zero because their own receipt is unchanged, whereas their spillover components can be nonzero.

\section{Model specification}\label{Sec5}

We extend probit Bayesian additive regression trees (BART) \citep{chipman2010bart} by incorporating cluster-specific random intercepts to account for residual within-cluster dependence in CRTs. The resulting mixed-effects BART (BART-RE) model combines a flexible sum-of-trees regression function, $f(\mathbf{X})=\sum_{t=1}^{T}g(\mathbf{X};\mathcal{T}_t,\mathcal{M}_t)$, with cluster-level random effects. Here, $\mathcal{T}_t$ and $\mathcal{M}_t$ denote the tree structure and terminal node parameters, respectively. Two key advantages of BART-RE in our context are: i) the tree ensemble captures nonlinear relationships and interactions without prespecifying their functional forms, and ii) the random intercepts account for residual between-cluster heterogeneity. We use the default BART priors to regularize individual trees and retain posterior uncertainty quantification.

\subsection{Outcome model with a clever covariate}\label{Sec51}

In specifying the outcome model, it is neither meaningful nor practical to include the entire vector of within-cluster treatment receipt indicators $\mathbf{D}_{j(-i)}$ as covariates, since these are exchangeable across peers and do not provide subject-specific explanatory power beyond their aggregate influence. Instead, we summarize peer treatment receipt by the proportion of treated peers,
$\bar{D}_{j(-i)} = (n_j - 1)^{-1}\sum_{\ell \neq i} D_{j\ell}$.
This mapping treats peers symmetrically and assumes that their receipt influences the conditional outcome mean through its aggregate proportion. We additionally include the peer-treatment clever covariate, $\hat{\pi}^{\bar{D}}_{j(-i)}$, which summarizes predicted peer uptake and is distinct from the observed proportion $\bar{D}_{j(-i)}$. We specify the outcome model as
\begin{equation}\label{Eq::OutcomeModel}
\begin{aligned}
\Pr\big(Y_{ji} = 1 \mid Z_j, D_{ji}, \bar{D}_{j(-i)}, \mathbf{X}_{ji}, n_j, b_j^{Y}\big)
&= \Phi\Big(f^{Y}\big(Z_j, D_{ji}, \bar{D}_{j(-i)}, \mathbf{X}_{ji}, n_j,
\hat{\pi}^{\bar{D}}_{j(-i)}\big) + b_j^{Y}\Big), \quad
f^{Y}\big(\cdot\big) &\sim \text{BART},
\end{aligned}
\end{equation}
where $\Phi(\cdot)$ is the standard normal cumulative distribution function and
$b_j^{Y} \overset{\text{i.i.d.}}{\sim} \mathcal{N}(0,\tau_{b^Y}^2)$,
with a suitable prior on $\tau_{b^Y}^2$, such as an inverse-Gamma or half-Cauchy prior. The BART ensemble flexibly captures relationships among treatment assignment, individual and peer receipt, and baseline covariates, while the random intercept accounts for residual within-cluster dependence. We obtain posterior draws using Albert--Chib data augmentation \citep{albert1993bayesian}, introducing
$U^Y_{ji}=f^Y(\cdot)+b_j^Y+\varepsilon^Y_{ji}$,
where $\varepsilon^Y_{ji}\sim\mathcal{N}(0,1)$ and
$Y_{ji}=\mathbf{1}(U^Y_{ji}>0)$.

\paragraph{Clever covariate for spillover deconfounding.}
Individual and peer receipt may be correlated because they share measured cluster characteristics and residual cluster-level influences. Under a regularizing tree prior as in BART, the outcome variation associated with these predictors can be allocated unevenly, distorting the decomposition even when their sum is estimated reasonably well \citep{hahn2020bayesian,linero2025mediation}. We therefore augment the outcome regression with the fitted mean peer-treatment proportion. This peer-treatment clever covariate is an auxiliary predictor for regularization, rather than the inverse-probability fluctuation covariate of targeted maximum likelihood estimation \citep{van2011targeted}. Its use provides no double-robustness guarantee. We define it as 
\begin{equation}\label{Eq::CleverCovariate}
\hat{\pi}^{\bar{D}}_{j(-i)}
=
\frac{1}{n_j-1}
\sum_{\ell\neq i}
\Phi\!\left(
\hat{\xi}_0+\hat{\xi}_Z Z_j+
\hat{\boldsymbol{\xi}}_X^{\top}\mathbf{X}_{j\ell}
\right).
\end{equation}
The estimates $(\hat\xi_0,\hat\xi_Z,\hat{\boldsymbol\xi}_X)$ come from a parametric probit regression fitted to the observed treatment receipt before the BART models. We hold the resulting predictor fixed during outcome-model MCMC. It summarizes measured determinants of peer uptake, and does not include the latent receipt random effect or account for uncertainty in the pre-fit. This adapts propensity-augmented Bayesian tree modeling to a peer exposure \citep{linero2025mediation,hahn2020bayesian}. Its potential benefit is empirical and depends on the quality of the receipt summary, the regularization prior, and the support of the receipt configurations. The current implementation evaluates this predictor at the peer assignment $z_*$ in g-computation under $\Theta_{(z,z_*)}$, $\hat{\pi}^{\bar{D}}_{j(-i)}(z_*)
=
\frac{1}{n_j-1}
\sum_{\ell\neq i}
\Phi\!\left(
\hat{\xi}_0+\hat{\xi}_Z z_*+
\hat{\boldsymbol{\xi}}_X^{\top}\mathbf{X}_{j\ell}
\right),$ and holds it fixed across Monte Carlo replicates. This convention requires an explicit justification for mixed assignment pairs, because the observed predictor is a deterministic function of the factual assignment and baseline covariates. It does not create support for otherwise unobserved receipt configurations.

\subsection{Treatment receipt indicator model}\label{Sec52} 

For the binary treatment receipt data, we adopt a probit-BART model with a cluster-level random effect $b_j^{D}$. Specifically, we let
\begin{equation}\label{Eq::TrtReceiptModel}
\begin{aligned}
\Pr(D_{ji}=1 \mid Z_j, \mathbf{X}_{ji}, n_j, b_j^{D})
&= \Phi\Big(f^{D}\big(Z_j, \mathbf{X}_{ji}, n_j\big) + b_j^{D}\Big), \qquad
f^{D}\big(\cdot\big) &\sim \text{BART},
\end{aligned}
\end{equation}
where $b_j^{D} \overset{\text{i.i.d.}}{\sim} \mathcal{N}(0,\tau_{b^D}^2)$, with $\tau_{b^D}^2$ given a suitable prior. Here, $\Phi(\cdot)$ denotes the standard normal cumulative distribution function. The details of the posterior sampling algorithm for the observed-data BART models are provided in Appendix~\ref{AppendixB}.

Note that the specification of the treatment receipt model in \eqref{Eq::TrtReceiptModel} implicitly assumes conditional independence of peers’ treatment receipt statuses within a cluster, given the cluster-level random effect $b_j^{D}$. The shared cluster-level influences are explicitly captured through the cluster-level random effect $b_j^{D}$ (and, when available, cluster-level covariates). Formally, we assume that, 
conditional on $(Z_j,\mathbf{X}_j,b_j^{D})$, the individual receipt indicators are independent. 


\subsection{Baseline covariates}\label{Sec53} 

Let $\mathbf{X}_j=\{\mathbf{X}_{ji}\}_{i=1}^{n_j}$ denote the $n_j\times p$ matrix of baseline covariates in cluster $j$. Suppose $(\mathbf{X},n)$ denotes a synthetic cluster with size $n$ and corresponding $n\times p$ covariate matrix $\mathbf{X}$. To model the joint super-population distribution $G$ of cluster sizes and baseline covariates (Appendix~\ref{AppendixC1}), we use the Bayesian Bootstrap (BB) \citep{rubin1981bayesian}.  
An advantage of using BB in this setup is that it preserves the observed within-cluster covariate structure without imposing parametric assumptions.

\section{Non-parametric identification of causal parameters}\label{Sec6}

In this section, we introduce additional assumptions, beyond Assumptions~\ref{Assump:1}--\ref{Assump:5}, that are required for the nonparametric identification of the causal estimands from the observed data. We also outline the corresponding identification strategies. Specifically, we require the following conditions to hold:

\begin{assumption}[Conditional expected homogeneity]\label{Assump:6} 
For $z \neq z_{*}$:
\begin{equation*}
    \begin{aligned}
        & \mathbf{E}\Big[\sum_{i=1}^{n} \big\{ Y_{.i}\big(D_{.i}(z) = d_{.i}, \textbf{D}_{.(-i)}(z_{*}) = \textbf{d}_{.(-i)}\big)\big\}\big| Z=z, D_{.i}(z) = d_{.i}, \textbf{D}_{.(-i)}(z_{*}) = \textbf{d}_{.(-i)},   \textbf{X}_{.i} = \textbf{x}_{.i}, n\Big]
        \\&
        = \mathbf{E}\Big[\sum_{i=1}^{n} \big\{ Y_{.i}\big(D_{.i}(z) = d_{.i}, \textbf{D}_{.(-i)}(z) = \textbf{d}_{.(-i)}\big)\big\}\big| Z=z, D_{.i}(z) = d_{.i}, \textbf{D}_{.(-i)}(z) = \textbf{d}_{.(-i)},   \textbf{X}_{.i} = \textbf{x}_{.i}, n\Big] 
    \end{aligned}
\end{equation*}

\end{assumption}
Assumption~\ref{Assump:6} requires that, conditional on baseline covariates and cluster size, the expected outcome at a fixed individual and peer receipt configuration is the same whether peers would have that configuration under assignment $z$ or $z_*$. This restricts unmeasured confounding by ruling out differences in mean outcomes between the groups defined by potential peer receipt under the two assignments. In the PSDP, unmeasured factors affecting both uptake and infection could violate this assumption if they produce differences in expected infection between these groups after covariate adjustment.

A related conditional homogeneity assumption appears in \citet{ohnishi2026bayesian}. They assume that the cluster-average effect through one mediator remains unchanged after conditioning on potential values of another mediator, given baseline covariates and cluster size. Our assumption equates outcome means at the same receipt configuration across groups defined by potential peer receipt under different assignments. Both impose restrictions on conditional means, but neither assumption implies the other. We examine sensitivity to violations of our assumption in Appendix~\ref{AppendixD}.

\begin{assumption}[Cross-world independence of the within-cluster peer treatment receipt status]\label{Assump:7}
For all $j$, $i \neq i^{'}$ and $z \neq z_{*}$:
$D_{ji}\big(z\big) \indep D_{ji^{'}}\big(z_{*}\big)\big| \textbf{X}_{ji} = \textbf{x}_{ji}, n_j$.
\end{assumption}
Assumption~\ref{Assump:7} states that, conditional on $\mathbf{X}_{ji}$ and $n_j$, different individuals' treatment receipt statuses are independent when defined under different assignments. It rules out residual dependence due to shared, unmeasured determinants of compliance. The treatment receipt may still be dependent across individuals under the same assignment and across assignments for the same individual. Note that this assumption restricts treatment receipt, whereas Assumption~\ref{Assump:6} restricts expected outcomes. A similar independence assumption for different individuals' potential mediators is used by \citet{ohnishi2026bayesian}.

In the PSDP trial, one-sided noncompliance implies $D_{ji}(0)=0$ for every student. Assumption~\ref{Assump:7} therefore holds automatically because one student's receipt under treatment assignment is independent of another student's constant receipt under control assignment. The negligible changes in the sensitivity analysis in Appendix~\ref{AppendixE} should therefore be interpreted in light of one-sided noncompliance. 


\paragraph{Remark.} It is important to distinguish between two types of counterfactuals: those where $z_j = z_{j,*}$ and those where $z_j \neq z_{j,*}$. 
In the former case ($z_j = z_{j,*}$), identification relies solely on the randomization of the treatment assignment and does not require Assumptions~\ref{Assump:6} and~\ref{Assump:7}. 
In contrast, counterfactuals of the form $Y_{ji}\big(D_{ji}(z_j), \mathbf{D}_{j(-i)}(z_{j,*})\big)$ with $z_j \neq z_{j,*}$ require both the randomization of the treatment assignment and Assumptions~\ref{Assump:6} and~\ref{Assump:7} for identification.

\begin{assumption}[Consistency]\label{Assump:8} For any $z_{j} \in \{0,1\}$: \\
i) $D_{ji}(z_{j}) = D_{ji}$  given  $Z_j = z_{j}$, ii) $\mathbf{D}_{j(-i)}(z_{j}) = \mathbf{D}_{j(-i)}$  given  $Z_j = z_{j}$, and iii) $Y_{ji}\big(d_{ji}, \textbf{d}_{j(-i)}\big) = Y_{ji}$ given $D_{ji} =d_{ji}$ and $\textbf{D}_{j(-i)} = \textbf{d}_{j(-i)}$.
\end{assumption}

\begin{assumption}[Positivity]\label{Assump:9}
We assume $0 < \Pr\big(Z_j = 1 \mid \mathbf{X}_j, n_j\big) < 1,$
where $\mathbf{X}_j$ contains the baseline covariates of all individuals in cluster $j$, cluster-level covariates, and any randomization-stratum indicators.
\end{assumption}
Assumption~\ref{Assump:9} requires that clusters, regardless of baseline covariate profile and size in the target population, have a positive probability of assignment to either arm. This condition holds by design in a CRT when both arms have positive allocation probabilities within every randomization stratum.

For notational simplicity, let $n$ denote an arbitrary realization drawn from the unknown distribution $\mathcal{P}_{n}$, that is, a single cluster size sampled from the conceptual infinite population of clusters in the super-population framework. Similarly, let $Y_{.i}$, $D_{.i}$, and $\mathbf{D}_{.(-i)}$ denote the outcome, individual treatment receipt status, and the vector of treatment receipt statuses of all other individuals within this arbitrarily chosen cluster of size $n$. The notation $\textbf{b}_{.} = \big\{ b_{.}^{Y}, b_{.}^{D}\big\}$ denotes the vector of random effects for this cluster. Proposition~\ref{Prop1:ITTidenditification} provides non-parametric identification results for the cluster-ITT and cluster-LATE effects.

For the set of treatment assignment pairs
$\big(z,z_{*}\big) \in \{(1,1),\,(0,0),\,(1,0)\}$,
we define the cluster-ITT and cluster-LATE estimands as
\begin{equation}\label{Eq::GcompEstimand}
    \begin{aligned}
        \Theta_{(z,z_{*})}^{ITT}
        &= \int_{\textbf{b}_{.}}\mathbf{E}_{(\mathbf{X},n)\sim G}
        \Bigg[ \frac{1}{n}\sum_{i=1}^{n}
        \int_{d_{.i}} \int_{\mathbf{d}_{.(-i)}}
        \mathbf{E}\Big[
        Y_{.i}\mid Z = z,\; D_{.i} = d_{.i},\;
        \mathbf{D}_{.(-i)} = \mathbf{d}_{.(-i)},\;
        \mathbf{X}_{.i} = \mathbf{x}_{.i},\;
        n,\; b^{Y}_{.}\Big]
        \\& \qquad \qquad
        dF_{\mathbf{D}_{.(-i)}\mid Z = z_{*},
        \mathbf{X}_{.i},n,\; b^{D}_{.}}
        (\mathbf{d}_{.(-i)})\,
        dF_{D_{.i}\mid Z = z,
        \mathbf{X}_{.i},n,\; b^{D}_{.}}
        (d_{.i})\Bigg]\times dF_{\textbf{b}_{.}},
        \text{ and}
        \\
        \Theta_{(z,z_{*})}^{LATE}
        &= \int_{\textbf{b}_{.}}\mathbf{E}_{(\mathbf{X},n)\sim G}
        \Bigg[ \frac{1}{n}\sum_{i=1}^{n}
        \int_{\mathbf{d}_{.(-i)}}
        \mathbf{E}\Big[
        Y_{.i}\mid Z = z,\; D_{.i} = z,\;
        \mathbf{D}_{.(-i)} = \mathbf{d}_{.(-i)},\;
        \mathbf{X}_{.i} = \mathbf{x}_{.i},\;
        n,\; b^{Y}_{.}\Big]
        \\& \qquad \qquad
        dF_{\mathbf{D}_{.(-i)}\mid Z = z_{*},
        \mathbf{X}_{.i},n,\; b^{D}_{.}}
        (\mathbf{d}_{.(-i)})\Bigg]
        \times dF_{\textbf{b}_{.}}, 
    \end{aligned}
\end{equation}
respectively, where the outer expectation is taken with respect to the
super-population distribution $G$ of cluster-level baseline
covariates and cluster size.

\begin{prop}\label{Prop1:ITTidenditification}
Under Assumptions~\ref{Assump:1}--\ref{Assump:9}, we can
identify the cluster-ITT effects and the cluster-LATE effects from the observed data
distributions, respectively, as:
\begin{equation}\label{Eq::Prop1ITT}
    \begin{aligned}
        \text{ITT}^C_{DE}
        = \Theta_{(1,0)}^{ITT} - \Theta_{(0,0)}^{ITT}, \quad
        \text{ITT}^C_{SE}
        = \Theta_{(1,1)}^{ITT} - \Theta_{(1,0)}^{ITT},
        \text{ and} \quad
        \text{ITT}^C
        = \Theta_{(1,1)}^{ITT} - \Theta_{(0,0)}^{ITT}.
    \end{aligned}
\end{equation}

\noindent and
\begin{equation}\label{Eq::Prop2LATE}
    \begin{aligned}
        \text{LATE}^C_{DE}
        = \Theta_{(1,0)}^{LATE}
        - \Theta_{(0,0)}^{LATE},
        \quad
        \text{LATE}^C_{SE}
        = \Theta_{(1,1)}^{LATE}
        - \Theta_{(1,0)}^{LATE},
        \text{ and} \quad
        \text{LATE}^C
        = \Theta_{(1,1)}^{LATE}
        - \Theta_{(0,0)}^{LATE}.
    \end{aligned}
\end{equation}
\end{prop}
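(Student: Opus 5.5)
The plan is to show a single identity and read off all six contrasts by differencing. For each pair $(z,z_*)\in\{(1,1),(0,0),(1,0)\}$ I will show that the cross-world mean
\[
\mu_{(z,z_*)}=\mathbf{E}\Big[\frac{1}{n}\sum_{i=1}^{n} Y_{.i}\big(D_{.i}(z),\mathbf{D}_{.(-i)}(z_*)\big)\Big]
\]
equals $\Theta^{ITT}_{(z,z_*)}$. I will also show that the complier analogue equals $\Theta^{LATE}_{(z,z_*)}$. Given these identities, the definitions \eqref{Eq:HDE_ITT}--\eqref{Eq:LATE_SE} are exactly $\mu_{(1,0)}-\mu_{(0,0)}$ and $\mu_{(1,1)}-\mu_{(1,0)}$, and these telescope to the total effect. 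This uses $D_{.i}(0)$ in the first term of the direct effect and $D_{.i}(1)$ in the spillover effect, as in the definitions. Assumption~\ref{Assump:1} removes dependence on other clusters, and Assumption~\ref{Assump:4} drops $z_j$ from the outcome argument, so the potential outcomes are $Y_{.i}(d,\mathbf{d})$ throughout.

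First step: condition on $(\mathbf{X},n)$ and, by iterated expectations, on the realized potential receipts, $D_{.i}(z)=d$ and $\mathbf{D}_{.(-i)}(z_*)=\mathbf{d}$. This writes $\mu_{(z,z_*)}$ as $\mathbf{E}_G$ of $n^{-1}\sum_i\sum_{d,\mathbf{d}}\mathbf{E}[Y_{.i}(d,\mathbf{d})\mid D_{.i}(z)=d,\mathbf{D}_{.(-i)}(z_*)=\mathbf{d},\mathbf{X},n]$, weighted by the joint law of $(D_{.i}(z),\mathbf{D}_{.(-i)}(z_*))$ given $(\mathbf{X},n)$.

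Second step, for $z\neq z_*$: apply Assumption~\ref{Assump:6} to replace the cross-world conditioning event by the same-world event $\{D_{.i}(z)=d,\mathbf{D}_{.(-i)}(z)=\mathbf{d}\}$. Cluster randomization (Assumption~\ref{Assump:2}, with positivity, Assumption~\ref{Assump:9}) allows adding $Z=z$ to the conditioning. Consistency (Assumption~\ref{Assump:8}) then turns this into the observed regression $\mathbf{E}[Y_{.i}\mid Z=z,D_{.i}=d,\mathbf{D}_{.(-i)}=\mathbf{d},\mathbf{X}_{.i},n]$.

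Third step: Assumption~\ref{Assump:7} factorizes the joint weight into $dF_{D_{.i}(z)\mid\mathbf{X},n}\,dF_{\mathbf{D}_{.(-i)}(z_*)\mid\mathbf{X},n}$. Randomization and consistency identify each factor with the observed receipt law given $Z=z$ and $Z=z_*$, respectively.

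Final step: introduce the random effects $\mathbf{b}$ as latent cluster variables and integrate them out against $dF_{\mathbf{b}}$. This reproduces \eqref{Eq::GcompEstimand}. The step needs the conditional-on-$\mathbf{b}$ versions of the above, which I will take as holding because $\mathbf{b}$ is a baseline cluster quantity independent of $Z$. For $z=z_*$ only randomization and consistency are needed, per the Remark.

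For the LATE identities I will condition additionally on $D_{.i}(0)=0$ and $D_{.i}(1)=1$. Under one-sided noncompliance (the PSDP setting, or monotonicity with $D(0)=0$), compliers are exactly those with $D_{.i}(z)=z$ evaluated in world $z=1$, and their own receipt is fixed at $z$. The own-receipt integral then collapses to the point mass at $d_{.i}=z$, giving $\Theta^{LATE}$.

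The main obstacle is this LATE step. The complier event involves both potential receipts, whereas $\Theta^{LATE}$ only conditions on $D_{.i}=z$ in world $z$. For $z=0$ the event $D=0$ mixes compliers and never-takers. I will try to argue, via Assumption~\ref{Assump:6} and the stated one-sided structure, that the regression at $D_{.i}=z$ equals the complier-specific mean. If this cannot be done in general, I will state the result under one-sided noncompliance, where $\Theta^{LATE}$ is interpreted through the normalization implicit in \eqref{Eq::GcompEstimand}. The weight factorization in the third step is comparatively routine.
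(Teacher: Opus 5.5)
Your ITT argument is essentially the paper's Appendix~\ref{AppendixA} argument: iterated expectations over $(\mathbf{X},n)$ and the potential receipts, randomization to insert $Z=z$, Assumption~6 for the mixed pair, Assumption~7 to factor the receipt weights, consistency, then integration over $\mathbf{b}$. It shares one weak point at the last step, discussed below.

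The genuine gap is the LATE step you flagged. Neither of your repairs closes it, and it cannot be closed from Assumptions~1--9 at all.

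\begin{itemize}
\item \emph{One-sided noncompliance handles only $z=1$.} It removes always-takers, so $\{Z=1\}\cap\{D_{.i}(1)=1,D_{.i}(0)=0\}=\{Z=1,D_{.i}=1\}$, and the outcome regressions in $\Theta^{LATE}_{(1,1)}$ and $\Theta^{LATE}_{(1,0)}$ are complier regressions. At $z=0$, never-takers remain in $\{Z=0,D_{.i}=0\}$, so the regression in $\Theta^{LATE}_{(0,0)}$ is a complier/never-taker mixture.
\item \emph{Assumption~6 cannot turn this mixture into a complier mean.} It holds own receipt $D_{.i}(z)=d_{.i}$ fixed and only changes the assignment that defines peers' receipt. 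Under one-sided noncompliance, applied with $(z,z_*)=(1,0)$, $d_{.i}=0$ and $\mathbf{d}=\mathbf{0}$, it actually identifies the never-taker control mean as $\mathbf{E}[Y_{.i}\mid Z=1,D_{.i}=0,\mathbf{D}_{.(-i)}=\mathbf{0},\mathbf{X}_{.i},n]$. That leads to a Wald-type complier formula, which differs from $\Theta^{LATE}_{(0,0)}$ unless compliers and never-takers share that mean.
\item \emph{Counterexample.} Take no spillover ($Y_{.i}(d,\mathbf{d})$ free of $\mathbf{d}$), no covariates, and students' (compliance type, potential outcome) pairs i.i.d. Let half be compliers with $\Pr\{Y_{.i}(0,\mathbf{d})=1\}=0.6$ and half be never-takers with probability $0.2$. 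Assumptions~1--9 all hold, yet $\Theta^{LATE}_{(0,0)}=0.4$ while the complier control mean is $0.6$. Hence $\text{LATE}^C_{DE}$ and $\text{LATE}^C$ are misidentified.
\end{itemize}

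The paper's proof passes this point by exactly the substitution you distrust. It uses $D_{.i}=ZD_{.i}(1)+(1-Z)D_{.i}(0)$ to replace the complier event by $\{D_{.i}=1\}$ given $Z=1$, and obtains the control term by ``following the arguments \ldots\ with $Z=0$''. To complete the LATE part you must add an explicit principal-ignorability condition, such as $\mathbf{E}[Y_{.i}(0,\mathbf{d})\mid D_{.i}(1)=1,D_{.i}(0)=0,\mathbf{D}_{.(-i)}(0)=\mathbf{d},\mathbf{X}_{.i},n]=\mathbf{E}[Y_{.i}(0,\mathbf{d})\mid D_{.i}(0)=0,\mathbf{D}_{.(-i)}(0)=\mathbf{d},\mathbf{X}_{.i},n]$. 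You also need its analogue at $z=1$ if always-takers are allowed, since Assumption~5 alone does not exclude them.

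Two further steps need repair.

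\begin{itemize}
\item \emph{Conditioning on complier status does more than collapse the own-receipt integral.} It also changes the law of $(\mathbf{X}_{.i},n)$, so the uniform $1/n$ outer average is right only if $\text{LATE}^C$ is read as covariate-standardized. Through the shared $b^D$, it changes the law of peers' receipt too. The peer weight in the complier $(1,1)$ mean is $dF_{\mathbf{D}_{.(-i)}\mid Z=1,D_{.i}=1,\mathbf{X}_{.i},n}$. This in general differs from $\int dF_{\mathbf{D}_{.(-i)}\mid Z=1,\mathbf{X}_{.i},n,b^D}\,dF_{\mathbf{b}}$ when $\tau^2_{b^D}>0$, because conditioning on $D_{.i}=1$ tilts the law of $b^D$. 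The paper's final LATE step drops this conditioning in the same way.
\item \emph{Your third step and your random-effects step do not fit together when $z\neq z_*$.} Assumption~7 given $(\mathbf{X}_{.i},n)$ yields the product of the two marginal receipt laws. By contrast, $\Theta_{(1,0)}$ integrates the product of the $b^D$-conditional laws against one shared $b^D$. These generally differ when $\tau^2_{b^D}>0$ and both laws are nondegenerate; under one-sided noncompliance they agree only because $\mathbf{D}_{.(-i)}(0)\equiv\mathbf{0}$.
\end{itemize}

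Independence of $\mathbf{b}$ and $Z$ does not supply Assumptions~6--7 conditional on $\mathbf{b}$. You must assume them explicitly and run the whole derivation given $\mathbf{b}$ before integrating, as the paper does implicitly (Appendix~\ref{AppendixE} imposes Assumption~7 given $b^D$). When both receipt laws depend on $b^D$, that conditional version contradicts Assumption~7 as stated rather than refining it. Finally, Assumption~7 is pairwise, so factoring against the whole vector $\mathbf{D}_{.(-i)}(z_*)$ requires a joint version.
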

\begin{proof}
    See Appendix~\ref{AppendixA}
\end{proof}

\paragraph{Connection to causal mediation analysis.}
Certain components of our direct and spillover effects bear resemblance to the direct and indirect effects in causal mediation analysis, with peers' treatment receipt playing a role similar to that of mediators. A key distinction is the network exclusion restriction, under which cluster assignment $Z_j$ affects $Y_{ji}$ only through treatment receipt, with no direct effect of assignment on the outcome.

\section{G-computation}\label{Sec7}

The G-computation algorithm used to approximate $\Theta_{(z,z_{*})}^{ITT}$, Algorithm \ref{Algo1}, proceeds as follows. As described in Section~\ref{Sec5}, we estimate the joint distribution of $(\mathbf{X},n)$ under the super-population framework using BB. Specifically, given the observed data $\{(\mathbf{X}_j, n_j)\}_{j=1}^J$, the BB places a Dirichlet$(1,\ldots,1)$ prior on the probability masses $(p_1,\ldots,p_J)$ and samples $(p^{(q)}_1,\ldots,p^{(q)}_J) \sim \mathrm{Dirichlet}(1,\ldots,1)$ at MCMC iteration $q$. Each draw then defines a super-population distribution,
$\widehat{G}(\cdot \mid p^{(q)}) \;=\; \sum_{j=1}^J p^{(q)}_j \,\delta_{(\mathbf{X}_{j},n_j)}(\cdot).$

Because the BB places probability mass directly on the observed clusters, each posterior draw $q$ reduces to computing weighted averages over these clusters. This eliminates the need to sample synthetic clusters. Specifically, for each cluster $(\mathbf{X}_j,n_j)$, we use its posterior weight $p_j^{(q)}$ and draw the cluster-level random effects
$b^{D,(q)}_{j} \sim \mathcal{N}(0,\tau^{2,(q)}_{b^{D}})$ and
$b^{Y,(q)}_{j} \sim \mathcal{N}(0,\tau^{2,(q)}_{b^{Y}})$.
Then, for each individual $i=1,\ldots,n_j$, we generate $M$ Monte Carlo draws of the treatment receipt under the specified treatment regimes, denoted by $D_{ji}^{(q,m)} \sim F_{D_{ji}\mid Z_j=z,\mathbf{X}_{ji},n_j,b^{D,(q)}_{j}}, 
\mathbf{D}_{j(-i)}^{(q,m)} \sim F_{\mathbf{D}_{.(-i)}\mid Z_j=z_{*},\mathbf{X}_{ji},n_j,b^{D,(q)}_{j}},
 m = 1, \ldots, M.$
The predicted probability of the outcome for the $m^{th}$ Monte Carlo sample is $\widehat{y}_{ji}^{(q,m)} = \Phi\Big(f^{Y,(q)}\big(z, D_{ji}^{(q,m)}, \bar{D}_{j(-i)}^{(q,m)}, \mathbf{X}_{ji}, n_j, \hat{\pi}^{\bar{D}}_{j(-i)}(z_*)\big) + b^{Y,(q)}_{j}\Big),$ where $\bar{D}_{j(-i)}^{(q,m)} = \frac{1}{n_j -1}\sum_{\ell \neq i}D_{j\ell}^{(q,m)}$. This yields the inner integral approximation 
$\widehat{\Phi}_{ji}^{ITT,(q)} = \frac{1}{M}\sum_{m=1}^M \widehat{y}_{ji}^{(q,m)},$
and the corresponding cluster-level functional
$\widehat{\Phi}_{j}^{ITT,(q)}(\mathbf{X}_j,n_j) = \frac{1}{n_j}\sum_{i=1}^{n_j} \widehat{\Phi}_{ji}^{ITT,(q)}.$  

The final approximation of the ITT estimand in iteration $q$ is therefore $\widehat\Theta_{(z,z_{*})}^{ITT,(q)}
= \sum_{j=1}^J p_j^{(q)} \,\widehat{\Phi}_{j}^{ITT,(q)}(\mathbf{X}_j,n_j),$
where the BB weights $p_j^{(q)}$ directly define the contribution of each observed cluster. The posterior distribution of $\Theta_{(z,z_{*})}^{ITT}$ is then obtained from $\{\widehat{\Theta}_{(z,z_{*})}^{ITT,(q)}\}_{q=1}^Q$.

\paragraph{Remark.} We integrate the cluster random effects out rather than conditioning on the fitted effects of the observed clusters. 
Combined with the Bayesian bootstrap distribution of $(\mathbf{X}_j,n_j)$, this yields super-population causal estimands. Within each posterior draw, we use the same simulated cluster effects across all $(z,z_*)$ combinations so that causal contrasts are evaluated under a common realization of cluster-level heterogeneity.

For clarity, the full approximation of the ITT estimand in iteration $q$ is $\widehat\Theta_{(z,z_{*})}^{ITT,(q)}
\;\approx\; \sum_{j=1}^{J} p_j^{(q)}
\Big[\frac{1}{n_{j}}\sum_{i=1}^{n_{j}} \Big\{ \frac{1}{M}\sum_{m=1}^{M}\widehat{y}_{ji}^{(q,m)}\Big\}  \Big]$ and the details are given in Algorithm~\ref{Algo1}.
The G-computation algorithm for the cluster-LATE estimand (Algorithm~\ref{Algo2}) is provided in Appendix~\ref{AppendixC}.

\begin{algorithm}[htbp]
\caption{G-computation algorithm for $\Theta_{(z,z_{*})}^{ITT,(q)}$ under the super-population framework }\label{Algo1}
\begin{algorithmic}[1]
\State Compute the clever covariate $\hat{\pi}^{\bar{D}}_{j(-i)}(z_*)$ via Equation~\eqref{Eq::CleverCovariate} for $z_* \in \{0,1\}$ and append $\hat{\pi}^{\bar{D}}_{j(-i)}(Z_j)$ as a fixed column of the outcome design matrix.
\State Fit the BART models specified in Section~\ref{Sec5}. Discard the first $B$ MCMC iterations as burn-in, and retain $Q$ posterior draws.
\For{posterior draw $q = 1,\dots,Q$}
    \State Sample weights $(p^{(q)}_1,\ldots,p^{(q)}_J) \sim \mathrm{Dirichlet}(1,\ldots,1)$.
    \State For each cluster $j = 1,\dots,J$, draw the cluster random effects $b^{Y,(q)}_{j} \sim \mathcal{N}\big(0, \tau^{2,(q)}_{b^{Y}}\big)$ and $b^{D,(q)}_{j} \sim \mathcal{N}\big(0, \tau^{2,(q)}_{b^{D}}\big)$, where $\tau^{2,(q)}_{b^{Y}}$ and $\tau^{2,(q)}_{b^{D}}$ are the $q$-th posterior draws of the variance components.
    \For{cluster $j = 1,\dots,J$}
        \For{individual $i = 1,\dots,n_j$}
            \For{Monte Carlo replicate $m = 1,\dots,M$}
                \State Sample $D_{ji}^{(q,m)} \sim F_{D_{ji}\mid Z_j=z,\mathbf{X}_{ji},n_j,b^{D,(q)}_{j}}$.
                \State Sample $\mathbf{D}_{j(-i)}^{(q,m)} \sim F_{\mathbf{D}_{.(-i)}\mid Z_j=z_{*},\mathbf{X}_{ji},n_j,b^{D,(q)}_{j}}$.
                \State Compute $\widehat{y}_{ji}^{(q,m)} = \Phi\Big(f^{Y,(q)}\big(z, D_{ji}^{(q,m)}, \bar{D}_{j(-i)}^{(q,m)}, \mathbf{X}_{ji}, n_j, \hat{\pi}^{\bar{D}}_{j(-i)}(z_*)\big) + b^{Y,(q)}_{j}\Big)$, where  $\bar{D}_{j(-i)}^{(q,m)} = \frac{1}{n_j -1}\sum_{\ell \neq i}D_{j\ell}^{(q,m)}$, and $\hat{\pi}^{\bar{D}}_{j(-i)}(z_*)$ is the clever covariate evaluated at the peer assignment $z_*$.
            \EndFor
            \State $\widehat{\Phi}_{ji}^{ITT,(q)} \gets \frac{1}{M}\sum_{m=1}^M \widehat{y}_{ji}^{(q,m)}$.
        \EndFor
        \State $\widehat{\Phi}_{j}^{ITT,(q)}(\mathbf{X}_j,n_j) \gets \frac{1}{n_j}\sum_{i=1}^{n_j} \widehat{\Phi}_{ji}^{ITT,(q)}$.
    \EndFor
    \State $\widehat{\Theta}_{(z,z_{*})}^{ITT,(q)} \gets \sum_{j=1}^J p_j^{(q)} \,\widehat{\Phi}_{j}^{ITT,(q)}(\mathbf{X}_j,n_j)$.
\EndFor
\State Posterior samples: $\{\widehat{\Theta}_{(z,z_{*})}^{ITT,(q)}\}_{q=1}^Q$.
\end{algorithmic}
\end{algorithm}

\section{Simulation studies}\label{Sec8}

We conduct two simulation studies to evaluate the proposed approach. Section~\ref{Sec81} compares effect estimation using BART-RE and a parametric probit model under PSDP-calibrated data-generating mechanisms (DGMs). Section~\ref{Sec82} examines the contribution of the peer-treatment clever covariate through paired comparisons under varying spillover strengths.

\subsection{Comparison with a parametric probit model}\label{Sec81}
In this study, we compare effect estimation using BART-RE and a parametric probit model under PSDP-calibrated linear and nonlinear DGMs. Specifically, we calibrate the DGMs using posterior estimates from a parametric probit model fitted to the PSDP data and evaluate estimation of direct, spillover, and total effects under correct model specification and nonlinear misspecification at two sample sizes ($N=1{,}500$ with $J=50$ clusters and $N=2{,}250$ with $J=75$ clusters). Details of the simulation design and data generation are provided in Appendix~\ref{AppendixFa}.

Tables~\ref{TableSim1} and~\ref{TableSim2} report the simulation results, showing a clear trade-off between the two approaches. Under the correctly specified DGM (Table~\ref{TableSim1}), the parametric model performs well, with low bias and close-to-nominal coverage across effects. BART-RE shows greater bias for the direct and spillover components, while recovering the total effect with little bias and near-nominal coverage. This difference is expected when the parametric model is correctly specified, as its stronger functional-form assumptions provide efficiency that a more flexible model cannot generally match in moderate samples.

The results also suggest that BART-RE has difficulty separating direct and spillover effects when individual and peer treatment receipt are correlated, even though it recovers the total effect accurately. The undercoverage observed for the spillover effect under the correctly specified DGM likely arises from uncertainty in decomposing the overall effect, rather than from poor estimation of the underlying potential-outcome means. As shown in Table~\ref{TabSuppS1}, coverage for all three potential-outcome means ($\Theta_{(1,1)}$, $\Theta_{(0,0)}$, and $\Theta_{(1,0)}$) remains closer to the nominal $95\%$ level. Increasing the sample size from $N=1{,}500$ to $N=2{,}250$ further improves coverage for the BART-RE spillover effect, from approximately $88\%$--$89\%$ to $92\%$ for both the cluster-ITT and cluster-LATE estimands.

The advantage of BART-RE becomes more apparent under parametric model misspecification (Table~\ref{TableSim2}). When nonlinear covariate effects and treatment-effect heterogeneity are present, the parametric model exhibits substantial bias and poor coverage. For example, coverage for the cluster-ITT spillover effect is only $8.2\%$ and $0.6\%$ for $N=1{,}500$ and $N=2{,}250$, respectively, while coverage for the cluster-LATE direct effect is $0\%$ at both sample sizes. In contrast, BART-RE maintains substantially smaller bias and coverage close to or marginally above the nominal level across both sample sizes and estimands. Thus, while the parametric model can be preferable when its functional form is correctly specified, BART-RE provides greater protection against misspecification while retaining reasonable performance for the total effect. These results support the use of BART-RE when the functional forms governing treatment receipt and outcomes are uncertain, as is typically the case in real applications.

\begin{table}[htbp]
\caption{Simulation results under the correctly specified DGM: bias, RMSE, and $95\%$ credible interval coverage across $500$ replicated datasets.}
\label{TableSim1}
\scalebox{0.8}{%
\begin{tabular}{@{}lcccccccccc@{}}
\hline
Effect type & $N$ & $J$ & Effects & Ground Truth & \multicolumn{3}{c}{Parametric} &\multicolumn{3}{c}{BART-RE} \\
\cline{6-8}\cline{9-11}%
 &  &   &  &   & Bias & Cov. & RMSE & Bias & Cov. & RMSE \\
\hline
      & 1500 & 50 & {Direct}   & $0.14$ & $\phantom{-}0.0051$ & $95.0\%$ & $0.055$ & $-0.084$ & $95.0\%$ & $0.094$ \\
          & & & {Spillover} & $-0.059$ & $-0.013$ & $94.6\%$ & $0.028$ & $\phantom{-}0.091$ & $87.8\%$ & $0.095$ \\
{cluster-ITT}    & & & {Total}    & $\phantom{-}0.085$ & $-0.0074$ & $94.0\%$ & $0.051$ & $\phantom{-}0.0068$ & $94.8\%$ & $0.054$ \\
\cline{2-11}
{effects}    & 2250 & 75 & {Direct}   & $\phantom{-}0.14$ & $\phantom{-}0.0058$ & $94.8\%$ & $0.044$ & $-0.080$ & $96.8\%$ & $0.088$ \\
            & & & {Spillover}  & $-0.059$ & $-0.017$ & $90.2\%$ & $0.028$ & $\phantom{-}0.087$ & $92.0\%$ & $0.091$ \\
            & & & {Total}     & $\phantom{-}0.085$ & $-0.011$ & $94.4\%$ & $0.042$ & $\phantom{-}0.0064$ & $95.8\%$ & $0.045$ \\
\cline{1-11}
     & 1500 & 50 & {Direct}   & $\phantom{-}0.0034$ & $\phantom{-}0.0006$ & $94.2\%$ & $0.063$ & $-0.093$ & $93.0\%$ & $0.10$ \\
             & & & {Spillover}  & $-0.062$ & $-0.0078$ & $96.8\%$ & $0.026$ & $\phantom{-}0.091$ & $89.2\%$ & $0.095$ \\
{cluster-LATE}    & & & {Total}   & $-0.059$ & $-0.0071$ & $93.8\%$ & $0.055$ & $-0.0015$ & $95.4\%$ & $0.056$ \\
\cline{2-11}
 {effects}   & 2250 & 75 & {Direct}   & $\phantom{-}0.0034$ & $\phantom{-}0.0026$ & $93.2\%$ & $0.050$ & $-0.089$ & $93.0\%$ & $0.098$ \\
                   & & & {Spillover}  & $-0.062$ & $-0.012$ & $91.2\%$ & $0.025$ & $\phantom{-}0.087$ & $91.6\%$ & $0.091$ \\
                   & & & {Total}   & $-0.059$ & $-0.0098$ & $94.8\%$ & $0.046$ & $-0.0016$ & $94.6\%$ & $0.046$ \\
\hline
\end{tabular}}
\end{table}

\begin{table}[htbp]
\caption{Simulation results under the misspecified DGM: bias, RMSE, and $95\%$ credible interval coverage across $500$ replicated datasets. The parametric probit model omits the nonlinear terms in the true DGM.}
\label{TableSim2}
\scalebox{0.8}{%
\begin{tabular}{@{}lcccccccccc@{}}
\hline
Effect type & $N$ & $J$ & Effects & Ground Truth & \multicolumn{3}{c}{Parametric} &\multicolumn{3}{c}{BART-RE} \\
\cline{6-8}\cline{9-11}%
 &  &   &  &   & Bias & Cov. & RMSE & Bias & Cov. & RMSE \\
\hline
      & 1500 & 50 & {Direct}   & $\phantom{-}0.083$ & $\phantom{-}0.050$ & $76.4\%$ & $0.060$ & $-0.024$ & $\phantom{0}99.2\%$ & $0.062$ \\
          & & & {Spillover} & $-0.026$ & $-0.077$ & $\phantom{0}8.2\%$ & $0.080$ & $\phantom{-}0.028$ & $\phantom{0}99.8\%$ & $0.055$ \\
{cluster-ITT}    & & & {Total}    & $\phantom{-}0.057$ & $-0.028$ & $91.4\%$ & $0.045$ & $\phantom{-}0.0040$ & $\phantom{0}97.2\%$ & $0.035$ \\
\cline{2-11}
{effects}    & 2250 & 75 & {Direct}   & $\phantom{-}0.083$ & $\phantom{-}0.050$ & $59.6\%$ & $0.057$ & $-0.026$ & $\phantom{0}99.4\%$ & $0.058$ \\
            & & & {Spillover}  & $-0.026$ & $-0.079$ & $\phantom{0}0.6\%$ & $0.081$ & $\phantom{-}0.029$ & $\phantom{0}99.8\%$ & $0.054$ \\
            & & & {Total}     & $\phantom{-}0.057$ & $-0.030$ & $84.3\%$ & $0.043$ & $\phantom{-}0.0025$ & $\phantom{0}95.2\%$ & $0.029$ \\
\cline{1-11}
     & 1500 & 50 & {Direct}   & $-0.019$ & $\phantom{-}0.30$ & $\phantom{0}0.0\%$ & $0.30$ & $\phantom{-}0.017$ & $\phantom{0}99.2\%$ & $0.061$ \\
             & & & {Spillover}  & $-0.026$ & $-0.060$ & $\phantom{0}9.8\%$ & $0.063$ & $\phantom{-}0.023$ & $\phantom{0}99.8\%$ & $0.050$ \\
{cluster-LATE}    & & & {Total}   & $-0.045$ & $\phantom{-}0.24$ & $\phantom{0}0.0\%$ & $0.25$ & $\phantom{-}0.040$ & $\phantom{0}88.2\%$ & $0.056$ \\
\cline{2-11}
 {effects}   & 2250 & 75 & {Direct}   & $-0.019$ & $\phantom{-}0.300$ & $\phantom{0}0.0\%$ & $0.302$ & $\phantom{-}0.0038$ & $\phantom{0}99.2\%$ & $0.053$ \\
                   & & & {Spillover}  & $-0.026$ & $-0.062$ & $\phantom{0}1.2\%$ & $0.064$ & $\phantom{-}0.024$ & $\phantom{0}99.8\%$ & $0.051$ \\
                   & & & {Total}   & $-0.045$ & $\phantom{-}0.238$ & $\phantom{0}0.0\%$ & $0.241$ & $\phantom{-}0.028$ & $\phantom{0}87.7\%$ & $0.043$ \\
\hline
\end{tabular}}
\end{table}

Tables~\ref{TabSuppS3} and~\ref{TabSuppS4} in Appendix~\ref{AppendixF2} show that, under the correctly specified DGM, the parametric model produces narrower intervals for the direct and spillover effects, whereas BART-RE yields wider intervals while maintaining comparable uncertainty for the total effect. Under misspecification, however, the narrow intervals from the parametric model are misleading as they become even narrower despite the substantial bias and severe undercoverage observed for the direct and spillover effects, particularly for the cluster-LATE estimand. BART-RE appropriately expresses greater uncertainty about these decomposed effects, while its intervals for the total effect remain similar in width to those from the parametric model. Taken together, the apparent precision of the parametric model under misspecification is not supported by its inferential performance, whereas BART-RE provides a more reliable assessment of uncertainty for the direct and spillover effects.

\subsection{Contribution of the peer-treatment clever covariate}\label{Sec82}

In this study, we assess the contribution of the clever covariate by fitting otherwise identical BART-RE models with and without $\hat{\pi}^{\bar D}_{j(-i)}$ in simulated datasets. Specifically, we analyze $500$ datasets for each spillover strength using $50$ trees. The two contrasting model fits (\emph{with} versus \emph{without} clever covariate) share treatment-receipt posterior draws and BB weights. Each dataset contains $49$ clusters of $48$ individuals, with $24$ clusters assigned to treatment. Noncompliance is one-sided, with average uptake of $82\%$ in the treated arm. We hold the individual receipt coefficient fixed and vary the peer receipt coefficient over $\gamma_{\bar D}\in\{0,-0.25,-0.5,-0.75,-1\}$. This changes the true spillover effect while leaving the direct effect fixed, allowing us to assess whether the models distinguish these pathways. Appendix~\ref{AppendixFb} describes the DGM, the calculation of the true effects, the MCMC settings, and the procedure for integrating over treatment receipt.

The design also highlights a difficulty relevant to the PSDP trial. Estimating $\Theta_{(1,0)}$ requires predictions for individuals whose own receipt is generated under treatment assignment but whose peers are untreated. Treated clusters have high uptake, so the corresponding receipt configurations are not represented in the simulated data. The decomposition therefore depends on the outcome model's extrapolation.

Figure~\ref{FigSimCC} shows that accurate estimation of the total effect can conceal errors in its decomposition. When we omit the clever covariate from the outcome model, it increasingly assigns the benefit of peer treatment to individual receipt as spillover strengthens. The resulting component biases have opposite signs and largely cancel in their sum. The clever covariate improves this decomposition, reducing the MSE of both components by more than half under moderate or strong spillover, with little change in total-effect estimation. Its benefit is conditional because with weak spillover, the gains are mixed, and without spillover, it increases component error. The evidence therefore supports adding the clever covariate for improving the decomposition when peer treatment actually contributes to the outcome. Table~\ref{TableSim3} in Appendix~\ref{AppendixFb} provides a tabulation of results in Figure~\ref{FigSimCC}.

\begin{figure}[htbp]
\centering
\includegraphics[width=\textwidth]{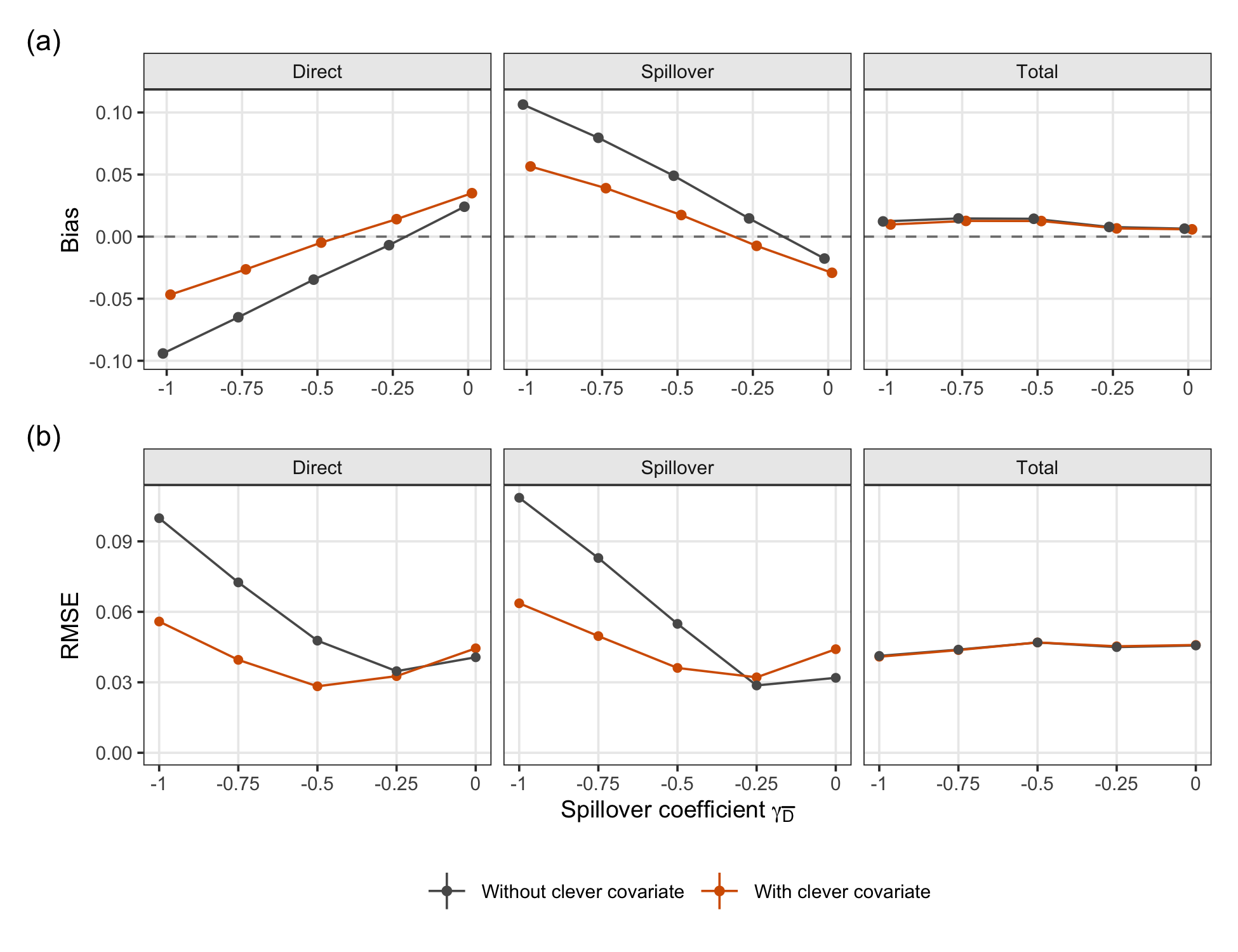}
\caption{Clever-covariate study ($J=49$, $n_j=48$): (a) bias ($\pm2$ Monte Carlo standard errors) and (b) RMSE for the cluster-ITT direct, spillover, and total effects, with spillover coefficient $\gamma_{\bar D}$ based on $500$ replicated datasets.}
\label{FigSimCC}
\end{figure}

The results in Appendix~\ref{AppendixFb} show that including the clever covariate mainly improves estimation of $\Theta_{(1,0)}$. Estimates of $\Theta_{(1,1)}$ and $\Theta_{(0,0)}$ remain similar with and without the clever covariate. Although the true $\Theta_{(1,0)}$ is constant across spillover strengths, its estimates change with the treated-arm mean. Including the clever covariate reduces this dependence, improving the direct and spillover estimates with little change in the total-effect estimate. However, the clever covariate varies little within each assignment arm in this design, so these results provide limited evidence about its performance when cluster covariates produce substantial differences in predicted peer uptake.

We note that better point estimates do not yield precise component inference. Both models produce wide, conservative direct and spillover intervals, reflecting substantial posterior uncertainty about the mixed mean. The covariate does not resolve the missing receipt support. An independent study with smaller clusters reproduces the component improvements under strong spillover (Appendix~\ref{AppendixFc}). Together, the studies support the clever covariate as a useful regularization device for effect decomposition, with spillover strength and receipt support as important determinants of its performance.

\section{Application to PSDP trial}\label{Sec9}
We apply the proposed BART-RE methodology to the PSDP analysis sample described in Section~\ref{Sec2}, comprising $N = 2{,}383$ students nested within $J = 49$ schools. The outcome and treatment models are each specified using $50$ regression trees. The outcome model includes the clever covariate $\hat{\pi}^{\bar{D}}_{j(-i)}$ described in Section~\ref{Sec5}, computed from a parametric probit pre-fit of the treatment receipt model. Posterior inference is based on $10{,}000$ burn-in iterations, followed by $1{,}000$ retained draws (thinned every 5 iterations), with $M = 10{,}000$ Monte Carlo samples per draw for the G-computation step. Table~\ref{Table::CaseStudyResults} summarizes posterior inference for the cluster-level ITT and LATE estimands. Because the outcome is coded so that $Y_{ji}=1$ indicates infection, negative values correspond to reductions in infection risk.

The results suggest that the reduction in infection risk is driven primarily by students' own treatment receipt. For example, the ITT total effect estimate shows that assignment to a treatment school reduces the infection probability by an estimated $0.150$ (95\% CrI: $-0.277,-0.029$). We see that most of this effect is attributable to the direct component, estimated at $-0.123$ (95\% CrI: $-0.272, 0.016$). In contrast, the estimated spillover effect is smaller, with limited posterior evidence for a beneficial effect of the treatment.

The corresponding LATE estimates show a similar pattern, with somewhat larger direct effects among compliers. The estimated spillover effect remains small and uncertain. The larger magnitude of the LATE relative to the ITT is consistent with attenuation of the ITT by treatment noncompliance, since the ITT averages treatment assignment effects over students who do and do not comply. Overall, the strongest evidence concerns a reduction in the total infection risk, with most of the estimated reduction attributable to students' own treatment receipt. We find comparatively little evidence for an additional spillover effect.

The sensitivity analyses preserve the main finding of reduced total infection risk and the negative posterior means of the direct effects (Appendices~\ref{AppendixD} and~\ref{AppendixE}). Under the calibrated departure from conditional expected homogeneity, the estimated total effects remain unchanged because the adjustment enters the direct and spillover components with opposite signs. The adjusted direct-effect means remain negative, although uncertainty increases as the ITT interval widens to $(-0.583,0.350)$ and the LATE interval to $(-0.652,0.292)$, both including zero. Varying latent cross-world receipt dependence produces negligible changes in posterior means over the specified grid of sensitivity parameters. This stability is expected under structural one-sided noncompliance, where control-world receipt is constant, and provides limited evidence about sensitivity to more general receipt dependence. Thus, the sensitivity analyses preserve our overall conclusion while indicating greater uncertainty about how the total effect is divided between individual and peer treatment receipt.

\begin{table}[htbp]
\caption{Posterior summaries of causal effect estimates from the PSDP case study using BART-RE. Negative estimates indicate reductions in infection probability. $\Pr(<0)$ denotes the posterior probability of a negative effect.}
\label{Table::CaseStudyResults}
\centering
\begin{tabular}{@{}llcccc@{}}
\hline
 &  & Post.\ Mean & $2.5\%$ & $97.5\%$ & $\Pr(<0)$ \\
\hline
\textit{ITT effects} & Direct  & $-0.123$ & $-0.272$ & $\phantom{-}0.016$ & $0.959$ \\
          & Spillover  & $-0.027$ & $-0.160$ & $0.098$ & $0.662$ \\
          & Total   & $-0.150$ & $-0.277$ & $-0.029$ & $0.990$ \\[4pt]
\textit{LATE effects} & Direct  & $-0.175$ & $-0.320$ & $-0.034$ & $0.991$ \\
          & Spillover  & $-0.027$ & $-0.161$ & $\phantom{-}0.092$ & $0.664$ \\
          & Total   & $-0.202$ & $-0.324$ & $-0.080$ & $0.999$ \\
\hline
\end{tabular}
\end{table}

\section{Discussion}\label{Sec10}
We developed a Bayesian framework for separating the effects of individual and peer treatment receipt in CRTs with noncompliance and within-cluster interference.  Our BART-RE models combine flexible adjustment for individual- and cluster-level covariates with random intercepts for residual within-cluster dependence. We also introduce a peer-treatment clever covariate to mitigate regularization-induced bias of the direct and spillover components when individual and peer receipt are correlated. Together, the framework provides a flexible approach to estimate interpretable direct and spillover effects in CRTs with noncompliance.


Our framework complements the principal-stratification approach of \citet{cheng2025identification}, who obtain point identification under extended principal ignorability and allow treatment assignment to affect outcomes beyond treatment receipt. We retain network exclusion so that the total effect decomposition describes effects through individual and peer treatment receipt. We also define restrictions on cross-world outcome means and receipt dependence, which do not generally imply principal ignorability. In addition, our direct effect holds peer receipt at its control-assignment level, while their individual compliance effect holds it at its treatment-assignment level. This means the two decompositions can differ when individual and peer receipt interact. Our contribution therefore lies in the receipt-based targets, our identifying assumptions, and the Bayesian inference framework.

Our simulation studies show that BART-RE reduces bias under model misspecification. Comparisons of BART-RE models fitted with and without the clever covariate further show that accurate total-effect estimation can conceal errors in its decomposition between individual and peer receipt. The clever covariate improves the decomposition under moderate or strong spillover, although benefits are mixed under weak spillover. These findings support its role as a regularization tool, while the conservative credible intervals highlight the difficulty of separating individual and peer-level effects when the data contain few or no treated individuals whose peers are all untreated. In the PSDP, we find evidence of reduced infection risk, with a larger estimated contribution from individual treatment receipt. 



Separating the direct and spillover effects requires predicting outcomes for individuals who receive treatment while their peers remain untreated. When this treatment pattern is absent or rare in the observed data, the decomposition relies on extrapolation from the fitted outcome model. Assessing sensitivity to receipt dependence is also limited in the PSDP trial because one-sided noncompliance fixes receipt under control assignment at zero. These limitations motivate further evaluation under two-sided noncompliance, where receipt can vary under both assignments. Future work can also further examine the contribution of the clever covariate and compare the peer treatment proportions used in g-computation with their observed ranges among treated and untreated individuals. These extensions would strengthen the framework's use across CRTs with different compliance patterns.

\section*{Code availability}

All code for the data analysis and simulation studies is available at \url{https://github.com/SBstats/BART-RE-causal-inference-CRT-noncompliance-spillover}.

\section*{Acknowledgements}
Bhandari and Daniels were partially supported by NIH grants R01 HL158963 and R01 HL166324. Li was supported by the Patient-Centered Outcomes Research Institute\textsuperscript{\textregistered} (PCORI\textsuperscript{\textregistered} Award ME-2023C1-31350). The statements presented in this article are solely the responsibility of the authors and do not necessarily represent the views of PCORI\textsuperscript{\textregistered}, its Board of Governors or Methodology Committee.

\bibliographystyle{apalike}
\bibliography{ref}

\newpage

\appendix

\section{Non-parametric identification proofs}\label{AppendixA}
\allowdisplaybreaks
This appendix presents the proof of Proposition~\ref{Prop1:ITTidenditification}. Throughout this section, let $\textbf{b}_j = \big\{ b_j^{Y}, b_j^{D}\big\}$ denote a vector of random effects for cluster $j$ for conciseness.
\subsection{Identification for ITT effects}\label{AppendixA1}
To identify the ITT direct and spillover effects from the observed data distributions, we must be able to identify the following expectations:
\begin{align*}
    & \mathbf{E}\Big[\frac{1}{n_j} \sum_{i=1}^{n_j} \big\{Y_{ji}\big(D_{ji}(1), \textbf{D}_{j(-i)}(1)\big)\big\}\Big], 
    \\
    &  \mathbf{E}\Big[\frac{1}{n_j} \sum_{i=1}^{n_j} \big\{Y_{ji}\big(D_{ji}(0), \textbf{D}_{j(-i)}(0)\big)\big\}\Big], \text{ and }
    \\
    & \mathbf{E}\Big[\frac{1}{n_j} \sum_{i=1}^{n_j} \big\{Y_{ji}\big(D_{ji}(1), \textbf{D}_{j(-i)}(0)\big)\big\}\Big].
\end{align*}

Recall that these expectations are taken over the joint distribution of the cluster-level random variables $\mathbf{W}_j$ and the cluster size $n_j$, denoted 
\[
\mathcal{P}_{W,n} = \mathcal{P}_{W|n} \times \mathcal{P}_{n},
\] 
as implied by the super-population framework. For notational simplicity, let $n$ denote an arbitrary realization drawn from the unknown distribution $\mathcal{P}_{n}$. The expectations above can then be rewritten in terms of this single realization of the cluster size as:

\begin{align*}
    & \mathbf{E}\Big[\frac{1}{n} \sum_{i=1}^{n} \big\{Y_{.i}\big(D_{.i}(1), \textbf{D}_{.(-i)}(1)\big)\big\}\Big],
    \\
    & \mathbf{E}\Big[\frac{1}{n} \sum_{i=1}^{n} \big\{Y_{.i}\big(D_{.i}(0), \textbf{D}_{.(-i)}(0)\big)\big\}\Big], \text{ and }
    \\
    & \mathbf{E}\Big[\frac{1}{n} \sum_{i=1}^{n} \big\{Y_{.i}\big(D_{.i}(1), \textbf{D}_{.(-i)}(0)\big)\big\}\Big].
\end{align*}

In this representation, the notations $Y_{.i}, D_{.i}$ and $\textbf{D}_{.(-i)}$ indicate that the random variables belong to an arbitrary cluster of size $n$. The notation $\textbf{b}_{.} = \big\{ b_{.}^{Y}, b_{.}^{D}\big\}$ denotes the vector of random effects for this cluster. The identification steps for the expectations involved in the direct, spillover, and overall ITT effects are outlined below.

1) Identification steps for $\mathbf{E}\Big[\frac{1}{n} \sum_{i=1}^{n} \big\{Y_{.i}\big(D_{.i}(1), \textbf{D}_{.(-i)}(1)\big)\big\}\Big]$:
    \begin{equation*}
        \begin{aligned}
            & \mathbf{E}\Big[\frac{1}{n} \sum_{i=1}^{n} \big\{Y_{.i}\big(D_{.i}(1), \textbf{D}_{.(-i)}(1)\big)\big\}\Big]
            \\&
            = \mathbf{E}_{\textbf{X},n}\Big[\mathbf{E}\Big[\frac{1}{n} \sum_{i=1}^{n} \big\{Y_{.i}\big(D_{.i}(1), \textbf{D}_{.(-i)}(1)\big)\big\}\Big|\textbf{X}_{.i},n \Big]\Big] \quad \quad \big\{ \text{By the  law of iterated expectations}\big\}
            \\&
            =  \mathbf{E}_{\textbf{X},n}\Big[\int_{d_{.i}}\int_{\textbf{d}_{.(-i)}} \mathbf{E}\Big[\frac{1}{n}\sum_{i=1}^{n} \big\{ Y_{.i}\big(d_{.i}, \textbf{d}_{.(-i)}\big)\big\}\big| D_{.i}(1) = d_{.i}, \textbf{D}_{.(-i)}(1) = \textbf{d}_{.(-i)},   \textbf{X}_{.i} = \textbf{x}_{.i}, n\Big]  \\&  dF_{\textbf{D}_{.(-i)}(1)| D_{.i}(1), \textbf{X}_{.i},n}(\textbf{d}_{.(-i)}) dF_{D_{.i}(1) |  \textbf{X}_{.i},n}(d_{.i}) \Big] 
            \quad \quad \big\{ \text{By the  law of iterated expectations}\big\}
            \\& 
            =  \mathbf{E}_{\textbf{X},n}\Big[\int_{d_{.i}}\int_{\textbf{d}_{.(-i)}} \mathbf{E}\Big[\frac{1}{n}\sum_{i=1}^{n} \big\{ Y_{.i}\big(d_{.i}, \textbf{d}_{.(-i)}\big)\big\}\big| Z = 1, D_{.i}(1) = d_{.i}, \textbf{D}_{.(-i)}(1) = \textbf{d}_{.(-i)},   \textbf{X}_{.i} = \textbf{x}_{.i},n\Big]  \\&  dF_{\textbf{D}_{.(-i)}(1)| Z=1, D_{.i}(1), \textbf{X}_{.i},n}(\textbf{d}_{.(-i)}) dF_{D_{.i}(1) | Z=1,  \textbf{X}_{.i},n}(d_{.i}) \Big] 
            \quad \quad \big\{ \text{By the randomization of treatment assignment}\big\}
            \\& 
            =  \mathbf{E}_{\textbf{X},n}\Big[\frac{1}{n}\sum_{i=1}^{n}\int_{d_{.i}}\int_{\textbf{d}_{.(-i)}} \mathbf{E}\Big[  Y_{.i}\big(d_{.i}, \textbf{d}_{.(-i)}\big)\big| Z = 1, D_{.i}(1) = d_{.i}, \textbf{D}_{.(-i)}(1) = \textbf{d}_{.(-i)},   \textbf{X}_{.i} = \textbf{x}_{.i},n\Big]  \\&  dF_{\textbf{D}_{.(-i)}(1)| Z=1, D_{.i}(1), \textbf{X}_{.i},n}(\textbf{d}_{.(-i)}) dF_{D_{.i}(1) | Z=1,  \textbf{X}_{.i},n}(d_{.i}) \Big] 
            \\& \big\{ \text{By the linearity of expectation and switching the order of summation and integration}\big\}
            \\& 
            =  \mathbf{E}_{\textbf{X},n}\Big[\frac{1}{n}\sum_{i=1}^{n}\int_{d_{.i}}\int_{\textbf{d}_{.(-i)}} \mathbf{E}\Big[  Y_{.i}\big(d_{.i}, \textbf{d}_{.(-i)}\big)\big| Z = 1, D_{.i} = d_{.i}, \textbf{D}_{.(-i)} = \textbf{d}_{.(-i)},   \textbf{X}_{.i} = \textbf{x}_{.i},n\Big]  \\&  dF_{\textbf{D}_{.(-i)}| Z=1, D_{.i}, \textbf{X}_{.i},n}(\textbf{d}_{.(-i)}) dF_{D_{.i} | Z=1,  \textbf{X}_{.i},n}(d_{.i}) \Big]  
            \quad \quad \big\{ \text{By consistency assumptions 1 and 2}\big\}
            \\& 
            =  \mathbf{E}_{\textbf{X},n}\Big[\frac{1}{n}\sum_{i=1}^{n}\int_{d_{.i}}\int_{\textbf{d}_{.(-i)}} \mathbf{E}\Big[  Y_{.i}\big| Z = 1, D_{.i} = d_{.i}, \textbf{D}_{.(-i)} = \textbf{d}_{.(-i)},   \textbf{X}_{.i} = \textbf{x}_{.i},n\Big]  \\&  dF_{\textbf{D}_{.(-i)}| Z=1, D_{.i}, \textbf{X}_{.i},n}(\textbf{d}_{.(-i)}) dF_{D_{.i} | Z=1,  \textbf{X}_{.i},n}(d_{.i}) \Big]
            \quad \quad \big\{ \text{By consistency assumption 3}\big\}
            \\& 
            =  \int_{\textbf{b}_{.}}\mathbf{E}_{\textbf{X},n}\Big[\frac{1}{n}\sum_{i=1}^{n}\int_{d_{.i}}\int_{\textbf{d}_{.(-i)}} \mathbf{E}\Big[  Y_{.i}\big| Z = 1, D_{.i} = d_{.i}, \textbf{D}_{.(-i)} = \textbf{d}_{.(-i)},   \textbf{X}_{.i} = \textbf{x}_{.i},n, b_{.}^{Y}\Big]  \\&  dF_{\textbf{D}_{.(-i)}| Z=1,  \textbf{X}_{.i},n, b_{.}^{D}}(\textbf{d}_{.(-i)}) dF_{D_{.i} | Z=1,  \textbf{X}_{.i},n, b_{.}^{D}}(d_{.i}) \Big] \times  dF_{\textbf{b}_{.}}
        \\& \big\{ \text{By the law of iterated expectations for random effects and assuming } \textbf{D}_{.(-i)} \indep D_{.i} \text{ given } Z, \textbf{X}_{.i},n, b_{.}^{D}\big\}
        \end{aligned}
    \end{equation*}

2)  Identification steps for $\mathbf{E}\Big[\frac{1}{n} \sum_{i=1}^{n} \big\{Y_{.i}\big(D_{.i}(0), \textbf{D}_{.(-i)}(0)\big)\big\}\Big]$: \\
    Following the arguments in the identification steps of $\mathbf{E}\Big[\frac{1}{n} \sum_{i=1}^{n} \big\{Y_{.i}\big(D_{.i}(1), \textbf{D}_{.(-i)}(1)\big)\big\}\Big]$, we get:
    \begin{equation*}
        \begin{aligned}
            & \mathbf{E}\Big[\frac{1}{n} \sum_{i=1}^{n} \big\{Y_{.i}\big(D_{.i}(0), \textbf{D}_{.(-i)}(0)\big)\big\}\Big]
            \\& 
            =  \int_{\textbf{b}_{.}}\mathbf{E}_{\textbf{X},n}\Big[\frac{1}{n}\sum_{i=1}^{n}\int_{d_{.i}}\int_{\textbf{d}_{.(-i)}} \mathbf{E}\Big[  Y_{.i}\big| Z = 0, D_{.i} = d_{.i}, \textbf{D}_{.(-i)} = \textbf{d}_{.(-i)},   \textbf{X}_{.i} = \textbf{x}_{.i},n, b_{.}^{Y}\Big]   dF_{\textbf{D}_{.(-i)}| Z=0,  \textbf{X}_{.i},n, b_{.}^{D}}(\textbf{d}_{.(-i)}) \\&
            dF_{D_{.i} | Z=0,  \textbf{X}_{.i},n, b_{.}^{D}}(d_{.i}) \Big] \times  dF_{\textbf{b}_{.}} 
        \end{aligned}
    \end{equation*}

3)  Identification steps for $\mathbf{E}\Big[\frac{1}{n} \sum_{i=1}^{n} \big\{Y_{.i}\big(D_{.i}(1), \textbf{D}_{.(-i)}(0)\big)\big\}\Big]$:

        \begin{align*}
            & \mathbf{E}\Big[\frac{1}{n} \sum_{i=1}^{n} \big\{Y_{.i}\big(D_{.i}(1), \textbf{D}_{.(-i)}(0)\big)\big\}\Big]
            \\& 
            = \mathbf{E}_{\textbf{X},n}\Big[\mathbf{E}\Big[\frac{1}{n} \sum_{i=1}^{n} \big\{Y_{.i}\big(D_{.i}(1), \textbf{D}_{.(-i)}(0)\big)\big\}\Big|\textbf{X}_{.i},n \Big]\Big] \quad \quad \big\{ \text{By the  law of iterated expectations}\big\}
            \\&
            =  \mathbf{E}_{\textbf{X},n}\Big[\int_{d_{.i}}\int_{\textbf{d}_{.(-i)}} \mathbf{E}\Big[\frac{1}{n}\sum_{i=1}^{n} \big\{ Y_{.i}\big(d_{.i}, \textbf{d}_{.(-i)}\big)\big\}\big| D_{.i}(1) = d_{.i}, \textbf{D}_{.(-i)}(0) = \textbf{d}_{.(-i)},   \textbf{X}_{.i} = \textbf{x}_{.i}, n\Big]  \\&  dF_{\textbf{D}_{.(-i)}(0)| D_{.i}(1), \textbf{X}_{.i},n}(\textbf{d}_{.(-i)}) dF_{D_{.i}(1) |\textbf{X}_{.i},n}(d_{.i}) \Big] 
            \quad \quad \big\{ \text{By the  law of iterated expectations}\big\}
            \\&
            =  \mathbf{E}_{\textbf{X},n}\Big[\int_{d_{.i}}\int_{\textbf{d}_{.(-i)}} \mathbf{E}\Big[\frac{1}{n}\sum_{i=1}^{n} \big\{ Y_{.i}\big(d_{.i}, \textbf{d}_{.(-i)}\big)\big\}\big| D_{.i}(1) = d_{.i}, \textbf{D}_{.(-i)}(0) = \textbf{d}_{.(-i)},   \textbf{X}_{.i} = \textbf{x}_{.i}, n\Big]  \\&   dF_{\textbf{D}_{.(-i)}(0)|  \textbf{X}_{.i},n}(\textbf{d}_{.(-i)}) dF_{D_{.i}(1) |\textbf{X}_{.i},n}(d_{.i})\Big] 
            \quad \quad \big\{ \text{By Assumption 7}\big\}
            \\&
            =  \mathbf{E}_{\textbf{X},n}\Big[\int_{d_{.i}}\int_{\textbf{d}_{.(-i)}} \mathbf{E}\Big[\frac{1}{n}\sum_{i=1}^{n} \big\{ Y_{.i}\big(d_{.i}, \textbf{d}_{.(-i)}\big)\big\}\big| Z=1, D_{.i}(1) = d_{.i}, \textbf{D}_{.(-i)}(0) = \textbf{d}_{.(-i)},   \textbf{X}_{.i} = \textbf{x}_{.i}, n\Big]  \\&   dF_{\textbf{D}_{.(-i)}(0)|  Z=1, \textbf{X}_{.i},n}(\textbf{d}_{.(-i)}) dF_{D_{.i}(1) |Z=1, \textbf{X}_{.i},n}(d_{.i})\Big]  
            \quad \quad \big\{ \text{By the randomization of treatment assignment (Assumption 2)}\big\}
            \\&
            =  \mathbf{E}_{\textbf{X},n}\Big[\int_{d_{.i}}\int_{\textbf{d}_{.(-i)}} \mathbf{E}\Big[\frac{1}{n}\sum_{i=1}^{n} \big\{ Y_{.i}\big(d_{.i}, \textbf{d}_{.(-i)}\big)\big\}\big| Z=1, D_{.i}(1) = d_{.i}, \textbf{D}_{.(-i)}(1) = \textbf{d}_{.(-i)},   \textbf{X}_{.i} = \textbf{x}_{.i}, n\Big]  \\&   dF_{\textbf{D}_{.(-i)}(0)|  Z=1, \textbf{X}_{.i},n}(\textbf{d}_{.(-i)}) dF_{D_{.i}(1) |Z=1, \textbf{X}_{.i},n}(d_{.i})\Big] \quad \quad
           \big\{ \text{By Assumption 6}\big\}
           \\&
            =  \mathbf{E}_{\textbf{X},n}\Big[\int_{d_{.i}}\int_{\textbf{d}_{.(-i)}} \mathbf{E}\Big[\frac{1}{n}\sum_{i=1}^{n} \big\{ Y_{.i}\big(d_{.i}, \textbf{d}_{.(-i)}\big)\big\}\big| Z=1, D_{.i}(1) = d_{.i}, \textbf{D}_{.(-i)}(1) = \textbf{d}_{.(-i)},   \textbf{X}_{.i} = \textbf{x}_{.i}, n\Big]  \\&   dF_{\textbf{D}_{.(-i)}(0)|  Z=0, \textbf{X}_{.i},n}(\textbf{d}_{.(-i)}) dF_{D_{.i}(1) |Z=1, \textbf{X}_{.i},n}(d_{.i})\Big] \quad \quad
           \big\{ \text{By the randomization of treatment assignment (Assumption 2)}\big\}
            \\&
            =  \mathbf{E}_{\textbf{X},n}\Big[\frac{1}{n}\sum_{i=1}^{n}\int_{d_{.i}}\int_{\textbf{d}_{.(-i)}}  \mathbf{E}\Big[  Y_{.i}\big(d_{.i}, \textbf{d}_{.(-i)}\big)\big| Z=1, D_{.i}(1) = d_{.i}, \textbf{D}_{.(-i)}(1) = \textbf{d}_{.(-i)},  \textbf{X}_{.i} = \textbf{x}_{.i}, n\Big]  \\&  dF_{\textbf{D}_{.(-i)}(0)|  Z=0, \textbf{X}_{.i},n}(\textbf{d}_{.(-i)}) dF_{D_{.i}(1) |Z=1, \textbf{X}_{.i},n}(d_{.i})\Big] \\&
           \big\{ \text{By the linearity of expectation and switching the order of summation and integration}\big\}
            \\&
            =  \mathbf{E}_{\textbf{X},n}\Big[\frac{1}{n}\sum_{i=1}^{n}\int_{d_{.i}}\int_{\textbf{d}_{.(-i)}}  \mathbf{E}\Big[  Y_{.i}\big| Z=1, D_{.i} = d_{.i}, \textbf{D}_{.(-i)} = \textbf{d}_{.(-i)},  \textbf{X}_{.i} = \textbf{x}_{.i}, n\Big]  \\&  dF_{\textbf{D}_{.(-i)}|  Z=0, \textbf{X}_{.i},n}(\textbf{d}_{.(-i)}) dF_{D_{.i} |Z=1, \textbf{X}_{.i},n}(d_{.i})\Big]
             \quad \quad \big\{ \text{By sequentially applying consistency assumptions 1, 2 and 3}  \big\}
             \\& 
             =  \int_{\textbf{b}_{.}}\mathbf{E}_{\textbf{X},n}\Big[\frac{1}{n}\sum_{i=1}^{n}\int_{d_{.i}}\int_{\textbf{d}_{.(-i)}}  \mathbf{E}\Big[  Y_{.i}\big| Z=1, D_{.i} = d_{.i}, \textbf{D}_{.(-i)} = \textbf{d}_{.(-i)},  \textbf{X}_{.i} = \textbf{x}_{.i}, n, b_{.}^{Y}\Big]  \\&  dF_{\textbf{D}_{.(-i)}|  Z=0, \textbf{X}_{.i},n, b_{.}^{D}}(\textbf{d}_{.(-i)}) dF_{D_{.i} |Z=1, \textbf{X}_{.i},n, b_{.}^{D}}(d_{.i})\Big] \times  dF_{\textbf{b}_{.}}
        \\& \big\{ \text{By the law of iterated expectations for random effects and assuming } \textbf{D}_{.(-i)} \indep D_{.i} \text{ given } Z, \textbf{X}_{.i},n, b_{.}^{D}\big\}
        \end{align*}

Therefore, the cluster-ITT direct effects can be identified as:
\begin{align}
    \text{ITT}^C_{DE} &= \mathbf{E}\Big[\frac{1}{n_j} \sum_{i=1}^{n_j} \big\{Y_{ji}\big(D_{ji}(1), \textbf{D}_{j(-i)}(0)\big) - Y_{ji}\big(D_{ji}(0), \textbf{D}_{j(-i)}(0)\big)\big\}\Big]
    \nonumber\\&
    = \int_{\textbf{b}_{.}}\mathbf{E}_{\textbf{X},n}\Big[\frac{1}{n}\sum_{i=1}^{n}\int_{d_{.i}}\int_{\textbf{d}_{.(-i)}}  \mathbf{E}\Big[  Y_{.i}\big| Z=1, D_{.i} = d_{.i}, \textbf{D}_{.(-i)} = \textbf{d}_{.(-i)},  \textbf{X}_{.i} = \textbf{x}_{.i}, n, b_{.}^{Y}\Big]   dF_{\textbf{D}_{.(-i)}|  Z=0, \textbf{X}_{.i},n, b_{.}^{D}}(\textbf{d}_{.(-i)})\nonumber\\&  dF_{D_{.i} |Z=1, \textbf{X}_{.i},n, b_{.}^{D}}(d_{.i})\Big] \times  dF_{\textbf{b}_{.}}
    \nonumber\\&\quad-
    \nonumber\\& \int_{\textbf{b}_{.}}\mathbf{E}_{\textbf{X},n}\Big[\frac{1}{n}\sum_{i=1}^{n}\int_{d_{.i}}\int_{\textbf{d}_{.(-i)}}  \mathbf{E}\Big[  Y_{.i}\big| Z=0, D_{.i} = d_{.i}, \textbf{D}_{.(-i)} = \textbf{d}_{.(-i)},  \textbf{X}_{.i} = \textbf{x}_{.i}, n, b_{.}^{Y}\Big]   dF_{\textbf{D}_{.(-i)}|  Z=0, \textbf{X}_{.i},n, b_{.}^{D}}(\textbf{d}_{.(-i)})\nonumber\\&  dF_{D_{.i} |Z=0, \textbf{X}_{.i},n, b_{.}^{D}}(d_{.i})\Big] \times  dF_{\textbf{b}_{.}}
\end{align}

Similarly, the cluster-ITT spillover effects can be identified as:
\begin{align}
    \text{ITT}^C_{SE} &= \mathbf{E}\Big[\frac{1}{n_j} \sum_{i=1}^{n_j} \big\{Y_{ji}\big(D_{ji}(1), \textbf{D}_{j(-i)}(1)\big) - Y_{ji}\big(D_{ji}(1), \textbf{D}_{j(-i)}(0)\big)\big\}\Big]
    \nonumber\\&
    = \int_{\textbf{b}_{.}}\mathbf{E}_{\textbf{X},n}\Big[\frac{1}{n}\sum_{i=1}^{n}\int_{d_{.i}}\int_{\textbf{d}_{.(-i)}}  \mathbf{E}\Big[  Y_{.i}\big| Z=1, D_{.i} = d_{.i}, \textbf{D}_{.(-i)} = \textbf{d}_{.(-i)},  \textbf{X}_{.i} = \textbf{x}_{.i}, n, b_{.}^{Y}\Big]   dF_{\textbf{D}_{.(-i)}|  Z=1, \textbf{X}_{.i},n, b_{.}^{D}}(\textbf{d}_{.(-i)})\nonumber\\&  dF_{D_{.i} |Z=1, \textbf{X}_{.i},n, b_{.}^{D}}(d_{.i})\Big] \times  dF_{\textbf{b}_{.}}
    \nonumber\\&\quad-
    \nonumber\\&       \int_{\textbf{b}_{.}}\mathbf{E}_{\textbf{X},n}\Big[\frac{1}{n}\sum_{i=1}^{n}\int_{d_{.i}}\int_{\textbf{d}_{.(-i)}}  \mathbf{E}\Big[  Y_{.i}\big| Z=1, D_{.i} = d_{.i}, \textbf{D}_{.(-i)} = \textbf{d}_{.(-i)},  \textbf{X}_{.i} = \textbf{x}_{.i}, n, b_{.}^{Y}\Big]   dF_{\textbf{D}_{.(-i)}|  Z=0, \textbf{X}_{.i},n, b_{.}^{D}}(\textbf{d}_{.(-i)})\nonumber\\&  dF_{D_{.i} |Z=1, \textbf{X}_{.i},n, b_{.}^{D}}(d_{.i})\Big] \times  dF_{\textbf{b}_{.}}
\end{align}

These two effects decompose the cluster-ITT overall effects, which can be identified as:
\begin{align}
    \text{ITT}^C &= \mathbf{E}\Big[\frac{1}{n_j} \sum_{i=1}^{n_j} \big\{Y_{ji}\big(D_{ji}(1), \textbf{D}_{j(-i)}(1)\big) - Y_{ji}\big(D_{ji}(0), \textbf{D}_{j(-i)}(0)\big)\big\}\Big]
    \nonumber\\&
    = \int_{\textbf{b}_{.}}\mathbf{E}_{\textbf{X},n}\Big[\frac{1}{n}\sum_{i=1}^{n}\int_{d_{.i}}\int_{\textbf{d}_{.(-i)}}  \mathbf{E}\Big[  Y_{.i}\big| Z=1, D_{.i} = d_{.i}, \textbf{D}_{.(-i)} = \textbf{d}_{.(-i)},  \textbf{X}_{.i} = \textbf{x}_{.i}, n, b_{.}^{Y}\Big]   dF_{\textbf{D}_{.(-i)}|  Z=1, \textbf{X}_{.i},n, b_{.}^{D}}(\textbf{d}_{.(-i)})\nonumber\\&  dF_{D_{.i} |Z=1, \textbf{X}_{.i},n, b_{.}^{D}}(d_{.i})\Big] \times  dF_{\textbf{b}_{.}}
    \nonumber\\&\quad-
    \nonumber\\& \int_{\textbf{b}_{.}}\mathbf{E}_{\textbf{X},n}\Big[\frac{1}{n}\sum_{i=1}^{n}\int_{d_{.i}}\int_{\textbf{d}_{.(-i)}}  \mathbf{E}\Big[  Y_{.i}\big| Z=0, D_{.i} = d_{.i}, \textbf{D}_{.(-i)} = \textbf{d}_{.(-i)},  \textbf{X}_{.i} = \textbf{x}_{.i}, n, b_{.}^{Y}\Big]   dF_{\textbf{D}_{.(-i)}|  Z=0, \textbf{X}_{.i},n, b_{.}^{D}}(\textbf{d}_{.(-i)})\nonumber\\&  dF_{D_{.i} |Z=0, \textbf{X}_{.i},n, b_{.}^{D}}(d_{.i})\Big] \times  dF_{\textbf{b}_{.}}
\end{align}


\subsection{Identification for LATE effects}\label{AppendixA2}
To identify the LATE direct and spillover effects from the observed data distributions, we must be able to identify the following expectations:
\begin{equation*}
    \begin{aligned}
        \mathbf{E}\Big[\frac{1}{n_j} \sum_{i=1}^{n_j} \big\{Y_{ji}\big(D_{ji}(1), \textbf{D}_{j(-i)}(1)\big)\big\}\big| D_{ji}(1) =1, D_{ji}(0) = 0\Big],
    \end{aligned}
\end{equation*}
\begin{equation*}
    \begin{aligned}
        \mathbf{E}\Big[\frac{1}{n_j} \sum_{i=1}^{n_j} \big\{Y_{ji}\big(D_{ji}(0), \textbf{D}_{j(-i)}(0)\big)\big\}\big| D_{ji}(1) =1, D_{ji}(0) = 0\Big],
    \end{aligned}
\end{equation*}
and 
\begin{equation*}
    \begin{aligned}
        \mathbf{E}\Big[\frac{1}{n_j} \sum_{i=1}^{n_j} \big\{Y_{ji}\big(D_{ji}(1), \textbf{D}_{j(-i)}(0)\big)\big\}\big| D_{ji}(1) =1, D_{ji}(0) = 0\Big].
    \end{aligned}
\end{equation*}

Like before, these expectations are taken over the joint distribution of the cluster-level random variables $\mathbf{W}_j$ and the cluster size $n_j$, denoted 
\[
\mathcal{P}_{W,n} = \mathcal{P}_{W|n} \times \mathcal{P}_{n},
\] 
as implied by the super-population framework. Again, for notational simplicity, let $n$ denote an arbitrary realization drawn from the unknown distribution $\mathcal{P}_{n}$. The expectations above can then be rewritten in terms of this single realization of the cluster size as:

\begin{equation*}
    \begin{aligned}
        \mathbf{E}\Big[\frac{1}{n} \sum_{i=1}^{n} \big\{Y_{.i}\big(D_{.i}(1), \textbf{D}_{.(-i)}(1)\big)\big\}\big| D_{.i}(1) =1, D_{.i}(0) = 0\Big],
    \end{aligned}
\end{equation*}
\begin{equation*}
    \begin{aligned}
        \mathbf{E}\Big[\frac{1}{n} \sum_{i=1}^{n} \big\{Y_{.i}\big(D_{.i}(0), \textbf{D}_{.(-i)}(0)\big)\big\}\big| D_{.i}(1) =1, D_{.i}(0) = 0\Big],
    \end{aligned}
\end{equation*}
and 
\begin{equation*}
    \begin{aligned}
        \mathbf{E}\Big[\frac{1}{n} \sum_{i=1}^{n} \big\{Y_{.i}\big(D_{.i}(1), \textbf{D}_{.(-i)}(0)\big)\big\}\big| D_{.i}(1) =1, D_{.i}(0) = 0\Big].
    \end{aligned}
\end{equation*}
In this representation, the notations $Y_{.i}, D_{.i}$ and $\textbf{D}_{.(-i)}$ indicate that the random variables belong to an arbitrary cluster of size $n$. Similarly, the notation $\textbf{b}_{.} = \big\{ b_{.}^{Y}, b_{.}^{D}\big\}$ denotes the vector of random effects for this cluster.
The identification steps for the expectations involved in the direct, spillover, and overall LATE effects are outlined below.


1)  Identification steps for $\mathbf{E}\Big[\frac{1}{n} \sum_{i=1}^{n} \big\{Y_{.i}\big(D_{.i}(1), \textbf{D}_{.(-i)}(1)\big)\big\}\big| D_{.i}(1) =1, D_{.i}(0) = 0\Big]$:
    \begin{align*}
            & \mathbf{E}\Big[\frac{1}{n} \sum_{i=1}^{n} \big\{Y_{.i}\big(D_{.i}(1), \textbf{D}_{.(-i)}(1)\big)\big\}\big| D_{.i}(1) =1, D_{.i}(0) = 0\Big]
            \\&
            = \mathbf{E}_{\textbf{X},n}\Big[\mathbf{E}\Big[\frac{1}{n} \sum_{i=1}^{n} \big\{Y_{.i}\big(D_{.i}(1), \textbf{D}_{.(-i)}(1)\big)\big\}\Big|D_{.i}(1) =1, D_{.i}(0) = 0, \textbf{X}_{.i},n \Big]\Big] \quad \quad \big\{ \text{By the  law of iterated expectations}\big\}
            \\& 
            =  \mathbf{E}_{\textbf{X},n}\Big[ \int_{\textbf{d}_{.(-i)}} \mathbf{E}\Big[\frac{1}{n}\sum_{i=1}^{n} \big\{ Y_{.i}\big(D_{.i}(1) = 1, \textbf{D}_{.(-i)}(1) = \textbf{d}_{.(-i)}\big)\big\}\big| D_{.i}(1) = 1, D_{.i}(0) = 0, \textbf{D}_{.(-i)}(1) = \textbf{d}_{.(-i)},   \textbf{X}_{.i} = \textbf{x}_{.i}, n\Big]  \\&  dF_{\textbf{D}_{.(-i)}(1)| D_{.i}(1) =1, D_{.i}(0) =0, \textbf{X}_{.i},n}(\textbf{d}_{.(-i)})\Big] 
            \quad \quad \big\{ \text{By the  law of iterated expectations}\big\}
            \\&
            =  \mathbf{E}_{\textbf{X},n}\Big[ \int_{\textbf{d}_{.(-i)}} \mathbf{E}\Big[\frac{1}{n}\sum_{i=1}^{n} \big\{ Y_{.i}\big(D_{.i}(1) = 1, \textbf{D}_{.(-i)}(1) = \textbf{d}_{.(-i)}\big)\big\}\big| Z =1, D_{.i}(1) = 1, D_{.i}(0) = 0, \textbf{D}_{.(-i)}(1) = \textbf{d}_{.(-i)}, \\&  \textbf{X}_{.i} = \textbf{x}_{.i}, n\Big]    dF_{\textbf{D}_{.(-i)}(1)|  Z =1, D_{.i}(1) =1, D_{.i}(0) =0, \textbf{X}_{.i},n}(\textbf{d}_{.(-i)})\Big] 
            \quad \quad \big\{ \text{By the randomization of treatment assignment}\big\}
            \\&
            =  \mathbf{E}_{\textbf{X},n}\Big[ \int_{\textbf{d}_{.(-i)}} \mathbf{E}\Big[\frac{1}{n}\sum_{i=1}^{n} \big\{ Y_{.i}\big(1, \textbf{d}_{.(-i)}\big)\big\}\big| Z =1, D_{.i} = 1,  \textbf{D}_{.(-i)}(1) = \textbf{d}_{.(-i)},   \textbf{X}_{.i} = \textbf{x}_{.i}, n\Big]  \\&  dF_{\textbf{D}_{.(-i)}(1)|  Z =1, D_{.i} =1, \textbf{X}_{.i},n}(\textbf{d}_{.(-i)})\Big] 
            \\& \big\{  \text{Given } Z = 1, \text{ for complier stratum } (D_{.i}(1) =1,  D_{.i}(0) =0), \text{ we have } D_{.i} = Z\times D_{.i}(1) + (1-Z) \times D_{.i}(0)  =1 \big\}
            \\&
            =  \mathbf{E}_{\textbf{X},n}\Big[ \frac{1}{n}\sum_{i=1}^{n}\int_{\textbf{d}_{.(-i)}} \mathbf{E}\Big[  Y_{.i}\big(1, \textbf{d}_{.(-i)}\big)\big| Z =1, D_{.i} = 1,  \textbf{D}_{.(-i)}(1) = \textbf{d}_{.(-i)},   \textbf{X}_{.i} = \textbf{x}_{.i}, n\Big]  \\&  dF_{\textbf{D}_{.(-i)}(1)|  Z =1, D_{.i} =1, \textbf{X}_{.i},n}(\textbf{d}_{.(-i)})\Big] 
            \\& \big\{  \text{By the linearity of expectation and switching the order of summation and integration}\big\}
            \\&
            =  \mathbf{E}_{\textbf{X},n}\Big[ \frac{1}{n}\sum_{i=1}^{n}\int_{\textbf{d}_{.(-i)}} \mathbf{E}\Big[  Y_{.i}\big(1, \textbf{d}_{.(-i)}\big)\big| Z =1, D_{.i} = 1,  \textbf{D}_{.(-i)} = \textbf{d}_{.(-i)},   \textbf{X}_{.i} = \textbf{x}_{.i}, n\Big]  \\&  dF_{\textbf{D}_{.(-i)}|  Z =1, D_{.i} =1, \textbf{X}_{.i},n}(\textbf{d}_{.(-i)})\Big] 
            \quad \quad \big\{  \text{By consistency assumption 2}\big\}
            \\&
            =  \mathbf{E}_{\textbf{X},n}\Big[ \frac{1}{n}\sum_{i=1}^{n}\int_{\textbf{d}_{.(-i)}} \mathbf{E}\Big[  Y_{.i}\big| Z =1, D_{.i} = 1,  \textbf{D}_{.(-i)} = \textbf{d}_{.(-i)},   \textbf{X}_{.i} = \textbf{x}_{.i}, n\Big]    dF_{\textbf{D}_{.(-i)}|  Z =1, D_{.i} =1, \textbf{X}_{.i},n}(\textbf{d}_{.(-i)})\Big] 
            \\& \big\{  \text{By consistency assumption 3}\big\}
            \\& 
            =  \int_{\textbf{b}_{.}}\mathbf{E}_{\textbf{X},n}\Big[\frac{1}{n}\sum_{i=1}^{n} \int_{\textbf{d}_{.(-i)}} \mathbf{E}\Big[  Y_{.i}\big| Z = 1, D_{.i} = 1, \textbf{D}_{.(-i)} = \textbf{d}_{.(-i)},   \textbf{X}_{.i} = \textbf{x}_{.i},n, b_{.}^{Y}\Big]    dF_{\textbf{D}_{.(-i)}| Z = 1, \textbf{X}_{.i},n, b_{.}^{D}}(\textbf{d}_{.(-i)})\Big] \times  dF_{\textbf{b}_{.}}
        \\& \big\{ \text{By the law of iterated expectations for random effects and assuming } \textbf{D}_{.(-i)} \indep D_{.i} \text{ given } Z, \textbf{X}_{.i},n, b_{.}^{D}\big\}
    \end{align*}

2) Identification steps for $\mathbf{E}\Big[\frac{1}{n} \sum_{i=1}^{n} \big\{Y_{.i}\big(D_{.i}(0), \textbf{D}_{.(-i)}(0)\big)\big\}\big| D_{.i}(1) =1, D_{.i}(0) = 0\Big]$: \\
    Following the arguments in the identification steps of $\mathbf{E}\Big[\frac{1}{n} \sum_{i=1}^{n} \big\{Y_{.i}\big(D_{.i}(1), \textbf{D}_{.(-i)}(1)\big)\big\}\big| D_{.i}(1) =1, D_{.i}(0) = 0\Big]$ with $Z =0$, we get:
\begin{equation*}
        \begin{aligned}
            & \mathbf{E}\Big[\frac{1}{n} \sum_{i=1}^{n} \big\{Y_{.i}\big(D_{.i}(0), \textbf{D}_{.(-i)}(0)\big)\big\}\big| D_{.i}(1) =1, D_{.i}(0) = 0\Big]
            \\&
            =  \int_{\textbf{b}_{.}}\mathbf{E}_{\textbf{X},n}\Big[\frac{1}{n}\sum_{i=1}^{n} \int_{\textbf{d}_{.(-i)}} \mathbf{E}\Big[  Y_{.i}\big| Z = 0, D_{.i} = 0, \textbf{D}_{.(-i)} = \textbf{d}_{.(-i)},   \textbf{X}_{.i} = \textbf{x}_{.i},n, b_{.}^{Y}\Big]    dF_{\textbf{D}_{.(-i)}| Z = 0, \textbf{X}_{.i},n, b_{.}^{D}}(\textbf{d}_{.(-i)})\Big] \times  dF_{\textbf{b}_{.}}
        \end{aligned}
    \end{equation*}

 3) Identification steps for $\mathbf{E}\Big[\frac{1}{n} \sum_{i=1}^{n} \big\{Y_{.i}\big(D_{.i}(1), \textbf{D}_{.(-i)}(0)\big)\big\}\big| D_{.i}(1) =1, D_{.i}(0) = 0\Big]$:
    \begin{align*}
            & \mathbf{E}\Big[\frac{1}{n} \sum_{i=1}^{n} \big\{Y_{.i}\big(D_{.i}(1), \textbf{D}_{.(-i)}(0)\big)\big\}\big| D_{.i}(1) =1, D_{.i}(0) = 0\Big]
            \\&
            = \mathbf{E}_{\textbf{X},n}\Big[\mathbf{E}\Big[\frac{1}{n} \sum_{i=1}^{n} \big\{Y_{.i}\big(D_{.i}(1), \textbf{D}_{.(-i)}(0)\big)\big\}\Big|D_{.i}(1) =1, D_{.i}(0) = 0, \textbf{X}_{.i},n \Big]\Big] \quad \quad \big\{ \text{By the  law of iterated expectations}\big\}
            \\& 
            =  \mathbf{E}_{\textbf{X},n}\Big[ \int_{\textbf{d}_{.(-i)}} \mathbf{E}\Big[\frac{1}{n}\sum_{i=1}^{n} \big\{ Y_{.i}\big(D_{.i}(1) = 1, \textbf{D}_{.(-i)}(0) = \textbf{d}_{.(-i)}\big)\big\}\big| D_{.i}(1) = 1, D_{.i}(0) = 0, \textbf{D}_{.(-i)}(0) = \textbf{d}_{.(-i)},  \\&  \textbf{X}_{.i} = \textbf{x}_{.i}, n\Big]   dF_{\textbf{D}_{.(-i)}(0)| D_{.i}(1) =1, D_{.i}(0) =0, \textbf{X}_{.i},n}(\textbf{d}_{.(-i)})\Big] 
            \quad \quad \big\{ \text{By the  law of iterated expectations}\big\}
            \\& 
            =  \mathbf{E}_{\textbf{X},n}\Big[ \int_{\textbf{d}_{.(-i)}} \mathbf{E}\Big[\frac{1}{n}\sum_{i=1}^{n} \big\{ Y_{.i}\big( 1,  \textbf{d}_{.(-i)}\big)\big\}\big| D_{.i}(1) = 1, D_{.i}(0) = 0, \textbf{D}_{.(-i)}(0) = \textbf{d}_{.(-i)},    \textbf{X}_{.i} = \textbf{x}_{.i}, n\Big]   \\& dF_{\textbf{D}_{.(-i)}(0)|  D_{.i}(0) =0, \textbf{X}_{.i},n}(\textbf{d}_{.(-i)})\Big] 
            \quad \quad \big\{ \text{By Assumption 7}\big\}
            \\&
            =  \mathbf{E}_{\textbf{X},n}\Big[ \int_{\textbf{d}_{.(-i)}} \mathbf{E}\Big[\frac{1}{n}\sum_{i=1}^{n} \big\{ Y_{.i}\big(1, \textbf{d}_{.(-i)}\big)\big\}\big|Z=1, D_{.i}(1) = 1, D_{.i}(0) = 0, \textbf{D}_{.(-i)}(0) = \textbf{d}_{.(-i)},   \textbf{X}_{.i} = \textbf{x}_{.i}, n\Big]  \\&  dF_{\textbf{D}_{.(-i)}(0)|Z=1,  D_{.i}(0) =0, \textbf{X}_{.i},n}(\textbf{d}_{.(-i)})\Big]
            \quad \quad \big\{ \text{By the  randomization of treatment assignment (Assumption 2)}\big\}
            \\&
            =  \mathbf{E}_{\textbf{X},n}\Big[ \int_{\textbf{d}_{.(-i)}} \mathbf{E}\Big[\frac{1}{n}\sum_{i=1}^{n} \big\{ Y_{.i}\big( 1, \textbf{d}_{.(-i)}\big)\big\}\big|Z=1, D_{.i} = 1,  \textbf{D}_{.(-i)}(0) = \textbf{d}_{.(-i)},    \textbf{X}_{.i} = \textbf{x}_{.i}, n\Big] \\&  dF_{\textbf{D}_{.(-i)}(0)|Z=1,  D_{.i}(0) =0, \textbf{X}_{.i},n}(\textbf{d}_{.(-i)})\Big]
            \\& \big\{   \text{Given } Z = 1, \text{ for complier stratum } (D_{.i}(1) =1,  D_{.i}(0) =0), \text{ we have } D_{.i} = Z\times D_{.i}(1) + (1-Z) \times D_{.i}(0)  =1\big\}
            \\&
            =  \mathbf{E}_{\textbf{X},n}\Big[ \int_{\textbf{d}_{.(-i)}} \mathbf{E}\Big[\frac{1}{n}\sum_{i=1}^{n} \big\{ Y_{.i}\big(1, \textbf{d}_{.(-i)}\big)\big\}\big|Z=1, D_{.i} = 1,  \textbf{D}_{.(-i)}(1) = \textbf{d}_{.(-i)},    \textbf{X}_{.i} = \textbf{x}_{.i}, n\Big] \\&  dF_{\textbf{D}_{.(-i)}(0)|Z=1,  D_{.i}(0) =0, \textbf{X}_{.i},n}(\textbf{d}_{.(-i)})\Big]
            \quad \quad \big\{  \text{By Assumption 6}\big\}
            \\& 
            =  \mathbf{E}_{\textbf{X},n}\Big[ \int_{\textbf{d}_{.(-i)}} \mathbf{E}\Big[\frac{1}{n}\sum_{i=1}^{n} \big\{ Y_{.i}\big( 1,  \textbf{d}_{.(-i)}\big)\big\}\big|Z=1, D_{.i} = 1,  \textbf{D}_{.(-i)}(1) = \textbf{d}_{.(-i)},    \textbf{X}_{.i} = \textbf{x}_{.i}, n\Big]  \\& dF_{\textbf{D}_{.(-i)}(0)|Z=0,  D_{.i}(0) =0, \textbf{X}_{.i},n}(\textbf{d}_{.(-i)})\Big] 
            \quad  \big\{ \text{By the randomization of treatment assignment (Assumption 2)}\big\}
            \\&
            =  \mathbf{E}_{\textbf{X},n}\Big[ \frac{1}{n}\sum_{i=1}^{n}\int_{\textbf{d}_{.(-i)}} \mathbf{E}\Big[  Y_{.i}\big(1,  \textbf{d}_{.(-i)}\big)\big| Z=1, D_{.i} = 1,  \textbf{D}_{.(-i)}(1) = \textbf{d}_{.(-i)},    \textbf{X}_{.i} = \textbf{x}_{.i}, n\Big]   \\& dF_{\textbf{D}_{.(-i)}(0)|Z=0,  D_{.i}(0) =0, \textbf{X}_{.i},n}(\textbf{d}_{.(-i)})\Big] 
            \\& \big\{ \text{By the linearity of expectation and switching the order of summation and integration}  \big\}
            \\&
            =  \mathbf{E}_{\textbf{X},n}\Big[ \frac{1}{n}\sum_{i=1}^{n}\int_{\textbf{d}_{.(-i)}} \mathbf{E}\Big[  Y_{.i}\big(1,  \textbf{d}_{.(-i)}\big)\big| Z =1,  D_{.i} = 1, \textbf{D}_{.(-i)} = \textbf{d}_{.(-i)},  \textbf{X}_{.i} = \textbf{x}_{.i}, n\Big]   dF_{\textbf{D}_{.(-i)}|Z =0, D_{.i} =0,  \textbf{X}_{.i},n}  (\textbf{d}_{.(-i)})\Big] 
            \\&    \big\{ \text{By consistency assumption 1 }  \big\}
            \\&
            =  \mathbf{E}_{\textbf{X},n}\Big[ \frac{1}{n}\sum_{i=1}^{n}\int_{\textbf{d}_{.(-i)}} \mathbf{E}\Big[  Y_{.i}\big| Z =1,  D_{.i} = 1, \textbf{D}_{.(-i)} = \textbf{d}_{.(-i)},  \textbf{X}_{.i} = \textbf{x}_{.i}, n\Big]     dF_{\textbf{D}_{.(-i)}|Z =0, D_{.i} =0,  \textbf{X}_{.i},n}  (\textbf{d}_{.(-i)})\Big] 
            \\& \big\{ \text{By consistency assumption 2 }  \big\}
            \\& 
            =  \int_{\textbf{b}_{.}}\mathbf{E}_{\textbf{X},n}\Big[\frac{1}{n}\sum_{i=1}^{n} \int_{\textbf{d}_{.(-i)}} \mathbf{E}\Big[  Y_{.i}\big| Z = 1, D_{.i} = 1, \textbf{D}_{.(-i)} = \textbf{d}_{.(-i)},   \textbf{X}_{.i} = \textbf{x}_{.i},n, b_{.}^{Y}\Big]    dF_{\textbf{D}_{.(-i)}| Z = 0, \textbf{X}_{.i},n, b_{.}^{D}}(\textbf{d}_{.(-i)})\Big] \times  dF_{\textbf{b}_{.}}
        \\& \big\{ \text{By the law of iterated expectations for random effects and assuming } \textbf{D}_{.(-i)} \indep D_{.i} \text{ given } Z, \textbf{X}_{.i},n, b_{.}^{D}\big\}
    \end{align*}

Therefore, the cluster-LATE direct effects can be identified as:
{\footnotemathsize
\begin{align}
    \text{LATE}^C_{DE} &= \mathbf{E}\Big[\frac{1}{n_j} \sum_{i=1}^{n_j} \big\{Y_{ji}\big(D_{ji}(1), \textbf{D}_{j(-i)}(0)\big) - Y_{ji}\big(D_{ji}(0), \textbf{D}_{j(-i)}(0)\big)\big\} \big| D_{ji}(1) =1, D_{ji}(0) = 0\Big]
    \nonumber\\&
    = \int_{\textbf{b}_{.}}\mathbf{E}_{\textbf{X},n}\Big[\frac{1}{n}\sum_{i=1}^{n} \int_{\textbf{d}_{.(-i)}} \mathbf{E}\Big[  Y_{.i}\big| Z = 1, D_{.i} = 1, \textbf{D}_{.(-i)} = \textbf{d}_{.(-i)},   \textbf{X}_{.i} = \textbf{x}_{.i},n, b_{.}^{Y}\Big]    dF_{\textbf{D}_{.(-i)}| Z = 0, \textbf{X}_{.i},n, b_{.}^{D}}(\textbf{d}_{.(-i)})\Big] \times  dF_{\textbf{b}_{.}}
    \nonumber\\&\quad-
    \nonumber\\& \int_{\textbf{b}_{.}}\mathbf{E}_{\textbf{X},n}\Big[\frac{1}{n}\sum_{i=1}^{n} \int_{\textbf{d}_{.(-i)}} \mathbf{E}\Big[  Y_{.i}\big| Z = 0, D_{.i} = 0, \textbf{D}_{.(-i)} = \textbf{d}_{.(-i)},   \textbf{X}_{.i} = \textbf{x}_{.i},n, b_{.}^{Y}\Big]    dF_{\textbf{D}_{.(-i)}| Z = 0, \textbf{X}_{.i},n, b_{.}^{D}}(\textbf{d}_{.(-i)})\Big] \times  dF_{\textbf{b}_{.}}
\end{align}}

Similarly, the cluster-LATE spillover effects can be identified as:
{\footnotemathsize
\begin{align}
    \text{LATE}^C_{SE} &= \mathbf{E}\Big[\frac{1}{n_j} \sum_{i=1}^{n_j} \big\{Y_{ji}\big(D_{ji}(1), \textbf{D}_{j(-i)}(1)\big) - Y_{ji}\big(D_{ji}(1), \textbf{D}_{j(-i)}(0)\big)\big\} \big| D_{ji}(1) =1, D_{ji}(0) = 0\Big]
    \nonumber\\&
    = \int_{\textbf{b}_{.}}\mathbf{E}_{\textbf{X},n}\Big[\frac{1}{n}\sum_{i=1}^{n} \int_{\textbf{d}_{.(-i)}} \mathbf{E}\Big[  Y_{.i}\big| Z = 1, D_{.i} = 1, \textbf{D}_{.(-i)} = \textbf{d}_{.(-i)},   \textbf{X}_{.i} = \textbf{x}_{.i},n, b_{.}^{Y}\Big]    dF_{\textbf{D}_{.(-i)}| Z = 1, \textbf{X}_{.i},n, b_{.}^{D}}(\textbf{d}_{.(-i)})\Big] \times  dF_{\textbf{b}_{.}}
    \nonumber\\&\quad-
    \nonumber\\& \int_{\textbf{b}_{.}}\mathbf{E}_{\textbf{X},n}\Big[\frac{1}{n}\sum_{i=1}^{n} \int_{\textbf{d}_{.(-i)}} \mathbf{E}\Big[  Y_{.i}\big| Z = 1, D_{.i} = 1, \textbf{D}_{.(-i)} = \textbf{d}_{.(-i)},   \textbf{X}_{.i} = \textbf{x}_{.i},n, b_{.}^{Y}\Big]    dF_{\textbf{D}_{.(-i)}| Z = 0, \textbf{X}_{.i},n, b_{.}^{D}}(\textbf{d}_{.(-i)})\Big] \times  dF_{\textbf{b}_{.}}
\end{align}}

These two effects decompose the cluster-LATE overall effects, which can be identified as:
{\footnotemathsize
\begin{align}
    \text{LATE}^C &= \mathbf{E}\Big[\frac{1}{n_j} \sum_{i=1}^{n_j} \big\{Y_{ji}\big(D_{ji}(1), \textbf{D}_{j(-i)}(1)\big) - Y_{ji}\big(D_{ji}(0), \textbf{D}_{j(-i)}(0)\big)\big\} \big| D_{ji}(1) =1, D_{ji}(0) = 0\Big]
    \nonumber\\&
    = \int_{\textbf{b}_{.}}\mathbf{E}_{\textbf{X},n}\Big[\frac{1}{n}\sum_{i=1}^{n} \int_{\textbf{d}_{.(-i)}} \mathbf{E}\Big[  Y_{.i}\big| Z = 1, D_{.i} = 1, \textbf{D}_{.(-i)} = \textbf{d}_{.(-i)},   \textbf{X}_{.i} = \textbf{x}_{.i},n, b_{.}^{Y}\Big]    dF_{\textbf{D}_{.(-i)}| Z = 1, \textbf{X}_{.i},n, b_{.}^{D}}(\textbf{d}_{.(-i)})\Big] \times  dF_{\textbf{b}_{.}}
    \nonumber\\&\quad-
    \nonumber\\& \int_{\textbf{b}_{.}}\mathbf{E}_{\textbf{X},n}\Big[\frac{1}{n}\sum_{i=1}^{n} \int_{\textbf{d}_{.(-i)}} \mathbf{E}\Big[  Y_{.i}\big| Z = 0, D_{.i} = 0, \textbf{D}_{.(-i)} = \textbf{d}_{.(-i)},   \textbf{X}_{.i} = \textbf{x}_{.i},n, b_{.}^{Y}\Big]    dF_{\textbf{D}_{.(-i)}| Z = 0, \textbf{X}_{.i},n, b_{.}^{D}}(\textbf{d}_{.(-i)})\Big] \times  dF_{\textbf{b}_{.}}
\end{align}}

\section{Details on MCMC sampling framework for BART with cluster-level random effects}\label{AppendixB}
In this section, we provide details on the Markov chain Monte Carlo (MCMC) algorithm for posterior inference under the models described in Section~\ref{Sec5}. The sampler fits (i) a probit-BART model for the binary outcome with an additive cluster-level Gaussian random effect $b^{Y}_j$, and (ii) a probit-BART model for treatment receipt with an additive cluster-level Gaussian random effect $b^{D}_j$. Both models share the same Albert--Chib latent-variable augmentation structure. We first restate the models and priors, then provide the full conditional distributions and the blocked Gibbs sampler.

\subsection{Outcome model (binary data):}
\begin{align}
Y_{ji} \mid f^Y, b_j^{Y} &\sim \operatorname{Bernoulli}\big(\Phi\{f^Y(Z_j, D_{ji}, \bar{D}_{j(-i)}, \mathbf{X}_{ji}, n_j, \hat{\pi}^{\bar{D}}_{j(-i)}) + b_j^{Y}\}\big),
\label{eq:outcome}
\end{align}
where $f^Y(\cdot)$ is represented by a BART ensemble of $T_Y$ regression trees:
\[
f^Y(\cdot) = \sum_{t =1}^{T_Y} g_{t}^Y\big(\cdot;\mathcal{T}^Y_t,\mathcal{M}^Y_t \big),
\]
with $\mathcal{T}^Y_t$ denoting the tree structure and $\mathcal{M}^Y_t$ denoting the terminal-node parameters for tree $t$. The cluster effect $b_j^{Y}$ enters additively outside $f^Y(\cdot)$, ensuring separation between fixed-effect (BART) and random-effect components.

\subsection*{Priors:}
\begin{align*}
 \mathcal{T}^Y_t,\mathcal{M}^Y_t &\text{: standard probit-BART priors (tree depth/birth/death probabilities and } \mu\text{-prior)}, \\
b_j^{Y} &\overset{\text{i.i.d.}}{\sim}\mathcal{N}(0,\tau_{b^Y}^2),\qquad
\tau_{b^Y}^2 \sim \pi_{\tau_Y}(\cdot).
\end{align*}

\subsection*{Latent-variable augmentation for probit-BART (outcome model):}

Introduce latent $U^Y_{ji}$ such that
\[
U^Y_{ji} = f^Y(Z_j, D_{ji}, \bar{D}_{j(-i)}, \mathbf{X}_{ji}, n_j, \hat{\pi}^{\bar{D}}_{j(-i)}) + b_j^{Y} + \varepsilon^{Y}_{ji},\quad \varepsilon^{Y}_{ji}\sim \mathcal{N}(0,1),
\]
with $Y_{ji}=1$ iff $U^Y_{ji}>0$. Conditional on $Y_{ji}$, each $U^Y_{ji}$ is drawn from a truncated normal:
\[
U^Y_{ji}\mid Y_{ji}=1 \sim \mathcal{TN}_{(0,\infty)}\big(f^Y(\cdot)+b_j^Y,1\big), \quad
U^Y_{ji}\mid Y_{ji}=0 \sim \mathcal{TN}_{(-\infty,0]}\big(f^Y(\cdot)+b_j^Y,1\big).
\]

\subsection*{Full posterior:}

Denote by $\mathcal{D}^Y=\{Y_{ji},Z_j,D_{ji},\bar D_{j(-i)},\mathbf{X}_{ji},n_j,\hat{\pi}^{\bar{D}}_{j(-i)}\}$ the observed data for the outcome model. The posterior, augmented with latent utilities $\{U^Y_{ji}\}$, is given by:
\[
p\Big(\{\mathcal{T}^Y_t,\mathcal{M}^Y_t\},\{U^Y_{ji}\},\{b_j^{Y}\},\tau_{b^Y}^2 \mid \mathcal{D}^Y \Big).
\]
Conditional on the latent utilities $U^Y_{ji}$ and cluster-level effects $b_j^Y$, the BART trees are updated using standard Metropolis--Hastings moves with partial residuals $R^Y_{ji,(t)} = U^Y_{ji} - b_j^Y - \sum_{s \ne t} g_s^Y(\cdot)$.

\subsection{Treatment receipt model (binary data):}
\begin{align}
D_{ji} \mid f^D, b_j^{D} &\sim \operatorname{Bernoulli}\big(\Phi\{f^D(Z_j,\mathbf{X}_{ji}, n_j) + b_j^{D}\}\big),
\label{eq:receipt}
\end{align}
with
\[
f^D(\cdot) = \sum_{t =1}^{T_D} g_{t}^D\big(\cdot;\mathcal{T}^D_t,\mathcal{M}^D_t \big).
\]
Here, $b_j^{D}$ also enters additively outside $f^D(\cdot)$.

\subsection*{Priors:} 
\begin{align*}
\mathcal{T}^D_t,\mathcal{M}^D_t &\text{: standard probit-BART priors}, \\
b_j^{D} &\overset{\text{i.i.d.}}{\sim}\mathcal{N}(0,\tau_{b^D}^2),\qquad
\tau_{b^D}^2 \sim \pi_{\tau_D}(\cdot).
\end{align*}

\subsection*{Full posterior:}

Denote by $\mathcal{D}^D=\{D_{ji},Z_j,\mathbf{X}_{ji},n_j\}$ the observed data for the treatment receipt status model. The posterior, augmented with latent utilities $\{U^D_{ji}\}$, is given by:
\[
p\Big(\{\mathcal{T}^D_t,\mathcal{M}^D_t\},\{U^D_{ji}\},\{b_j^{D}\},\tau_{b^D}^2 \mid \mathcal{D}^D \Big).
\]
Conditional on latent utilities $U^D_{ji}$ and cluster-level effects $b_j^D$, each BART tree is updated via standard Metropolis--Hastings steps with partial residuals $R^D_{ji,(t)} = U^D_{ji} - b_j^D - \sum_{s \ne t} g_s^D(\cdot)$.

\subsection*{Latent-variable augmentation for probit-BART (treatment receipt model):}

Introduce latent $U^D_{ji}$ such that
\[
U^D_{ji} = f^D(Z_j,\mathbf{X}_{ji},n_j) + b_j^{D} + \varepsilon^{D}_{ji},\quad \varepsilon^{D}_{ji}\sim \mathcal{N}(0,1),
\]
with $D_{ji}=1$ iff $U^D_{ji}>0$. Conditional on $D_{ji}$, each $U^D_{ji}$ is drawn from a truncated normal:
\[
U^D_{ji}\mid D_{ji}=1 \sim \mathcal{TN}_{(0,\infty)}\big(f^D(\cdot)+b_j^D,1\big), \quad
U^D_{ji}\mid D_{ji}=0 \sim \mathcal{TN}_{(-\infty,0]}\big(f^D(\cdot)+b_j^D,1\big).
\]

\subsection{MCMC components}

\subsubsection*{1. BART tree updates}
Conditional on all other parameters (including the current values of cluster-level predictors $b_j^Y$ or $b_j^D$), update each tree $t=1,\dots,T$ in the ensemble using standard BART moves:
\begin{itemize}
\item Grow / Prune / Change / Swap proposals for $\mathcal{T}_t$,
\item Compute marginal likelihood for affected terminal nodes using residuals/partial residuals,
\item Accept/reject via Metropolis--Hastings ratio.
\end{itemize}

For each model independently, conditional on current latent utilities and cluster effects, update each tree $t$ via standard BART Metropolis--Hastings using partial residuals:
\[
R^Y_{ji,(t)} = U^Y_{ji} - b_j^Y - \sum_{s\ne t} g_s^Y(\cdot), \quad
R^D_{ji,(t)} = U^D_{ji} - b_j^D - \sum_{s\ne t} g_s^D(\cdot).
\]
Since both models operate on Gaussian latent utilities with unit variance, the tree updates for $f^Y$ and $f^D$ follow the standard continuous-BART Metropolis--Hastings procedure applied to the pseudo-responses $U^Y_{ji}$ and $U^D_{ji}$, respectively.

\subsubsection*{2. Latent utilities $U^Y_{ji}$ and $U^D_{ji}$}
For the outcome model, conditional on the current tree fit $f^Y_{ji}$ and cluster effect $b_j^Y$, draw
\[
U^Y_{ji}\mid Y_{ji}, f^Y_{ji}, b_j^Y \;\sim\;
\begin{cases}
\mathcal{TN}_{(0,\infty)}\big(f^Y_{ji}+b_j^Y,\,1\big), & Y_{ji}=1,\\[4pt]
\mathcal{TN}_{(-\infty,0]}\big(f^Y_{ji}+b_j^Y,\,1\big), & Y_{ji}=0.
\end{cases}
\]
For the treatment receipt model, conditional on $f^D_{ji}$ and $b_j^D$, draw
\[
U^D_{ji}\mid D_{ji}, f^D_{ji}, b_j^D \;\sim\;
\begin{cases}
\mathcal{TN}_{(0,\infty)}\big(f^D_{ji}+b_j^D,\,1\big), & D_{ji}=1,\\[4pt]
\mathcal{TN}_{(-\infty,0]}\big(f^D_{ji}+b_j^D,\,1\big), & D_{ji}=0,
\end{cases}
\]
where \(\mathcal{TN}_{(a,b)}(\mu,\sigma^2)\) denotes a normal distribution with mean \(\mu\) and variance \(\sigma^2\) truncated to the interval \((a,b)\).

\subsubsection*{3. Conjugate updates for cluster-level random effects \(b_j^Y\) and \(b_j^D\)}

The full conditionals of random effects are Gaussian and admit closed-form Gibbs updates (conditioning on the current values of the BART fits and hyperparameters). Denote the current fitted tree contributions by
\[
f^Y_{ji} \equiv f^Y(Z_j,D_{ji},\bar D_{j(-i)},\mathbf{X}_{ji},n_j,\hat{\pi}^{\bar{D}}_{j(-i)}),\qquad
f^D_{ji} \equiv f^D(Z_j,\mathbf{X}_{ji},n_j),
\]
evaluated at the current tree parameters.

\paragraph{Outcome model (probit).}
Using the latent augmentation $U^Y_{ji}$ (sampled from truncated normals as described above), the likelihood for cluster $j$ is
\[
U^Y_{ji}\mid f^Y_{ji}, b_j^Y \overset{\text{ind}}{\sim} \mathcal{N}(f^Y_{ji}+b_j^Y,\,1),\qquad i=1,\dots,n_j,
\]
with prior $b_j^Y\sim\mathcal{N}(0,\tau_{b^Y}^2)$. Define
\[
S_j^Y \;=\; \sum_{i=1}^{n_j}\big(U^Y_{ji}-f^Y_{ji}\big).
\]
The posterior precision and variance are
\[
\lambda_{j}^Y \;=\; n_j + \frac{1}{\tau_{b^Y}^2}, \qquad
\mathrm{Var}(b_j^Y\mid\text{rest}) \;=\; (\lambda_{j}^Y)^{-1}.
\]
The posterior mean is
\[
\mathbf{E}[b_j^Y\mid\text{rest}] \;=\; (\lambda_{j}^Y)^{-1}\,S_j^Y.
\]
Hence the Gibbs update is
\[
b_j^Y \mid \text{rest} \;\sim\; \mathcal{N}\!\left( (\lambda_{j}^Y)^{-1}S_j^Y,\; (\lambda_{j}^Y)^{-1} \right).
\]

\paragraph{Treatment receipt status (probit) model.}  
Using the latent augmentation $U^D_{ji}$ (sampled from truncated normals as described above), the likelihood for cluster $j$ is
\[
U^D_{ji}\mid f^D_{ji}, b_j^D \overset{\text{ind}}{\sim} \mathcal{N}(f^D_{ji}+b_j^D,\,1),\qquad i=1,\dots,n_j,
\]
with prior \(b_j^D\sim\mathcal{N}(0,\tau_{b^D}^2)\). Define
\[
S_j^D \;=\; \sum_{i=1}^{n_j}\big(U^D_{ji}-f^D_{ji}\big).
\]
The posterior precision and variance are
\[
\lambda_{j}^D \;=\; n_j + \frac{1}{\tau_{b^D}^2}, \qquad
\mathrm{Var}(b_j^D\mid\text{rest}) \;=\; (\lambda_{j}^D)^{-1}.
\]
The posterior mean is
\[
\mathbf{E}[b_j^D\mid\text{rest}] \;=\; (\lambda_{j}^D)^{-1}\,S_j^D,
\]
and the Gibbs update is
\[
b_j^D \mid \text{rest} \;\sim\; \mathcal{N}\!\left( (\lambda_{j}^D)^{-1}S_j^D,\; (\lambda_{j}^D)^{-1} \right).
\]

Note that the quantities \(f^Y_{ji}\) and \(f^D_{ji}\) are evaluated at the current BART tree parameters when computing \(S_j^Y\) and \(S_j^D\).

\subsubsection*{4. Hyperpriors $\tau_{b^Y}^2$ and $\tau_{b^D}^2$}
If conjugate inverse-gamma prior:
\[
\tau_{b^Y}^2 \mid \{b_j^Y\} \sim \text{Inv-Gamma}\Big(a_{\tau_Y}+\frac{J}{2},\, b_{\tau_Y}+\frac{1}{2}\sum_{j=1}^J (b_j^Y)^2\Big),
\]
and analogously for $\tau_{b^D}^2$.

The above steps are repeated for a large number of iterations, with the first $B$ iterations discarded as burn-in, thinning as needed, and $Q$ iterations retained to collect posterior draws and posterior predictive samples for each model independently.

\section{Additional modeling and estimation details}\label{AppendixC}

\subsection{Super-population framework}\label{AppendixC1}

We view the $J$ observed clusters, each containing $n_j$ individuals, as independent draws from a super-population of clusters with finite sizes. We assume that $n_j \sim \mathcal{P}_n$, where $\mathcal{P}_n$ is an unknown distribution with finite support on $\mathbf{N}^{+}$. Under independent cluster-level randomization, we factor the joint distribution as 
\begin{equation*}
    \mathcal{P}_{W,Z,n}
=
\mathcal{P}_{Y\mid D,X,Z,n}
\times \mathcal{P}_{D\mid X,Z,n}
\times \mathcal{P}_{X\mid n}
\times \mathcal{P}_{n}
\times \mathcal{P}_{Z}.
\end{equation*}

This formulation accommodates variation in cluster size and the dimension of $\mathbf{W}_j$, with a common marginal distribution obtained by mixing over $\mathcal{P}_n$. We adopt this framework to define average cluster effects in a target population beyond the clusters enrolled in the trial (Section~\ref{Sec53}), assigning equal weight to each cluster. We express these estimands as expectations of cluster-level potential-outcome summaries over the joint distribution of cluster sizes and baseline covariates, and use iterated expectations to develop their identification. Similar super-population frameworks appear in \citet{wang2024model} and \citet{ohnishi2026bayesian}.

\subsection{G-computation algorithm for cluster-LATE effects}

The cluster-LATE estimand $\Theta_{(z,z_{*})}^{LATE}$ is defined in Equation~\eqref{Eq::GcompEstimand}, and Proposition~\ref{Prop1:ITTidenditification} expresses the cluster-LATE direct, spillover, and total effects as contrasts of $\Theta_{(z,z_{*})}^{LATE}$ (Equation~\eqref{Eq::Prop2LATE}).

The difference between the G-computation algorithm for the cluster-ITT estimand (Algorithm~\ref{Algo1}) and that for the cluster-LATE effects (Algorithm~\ref{Algo2}) lies in the treatment receipt modeling. Specifically, for LATE, we do not sample from the posterior predictive model for the subject’s treatment receipt $D_{ji}$. Instead, we fix $D_{ji}$ deterministically to $Z_j$, thereby restricting to the complier stratum where treatment assignment equals treatment receipt.

\begin{algorithm}[htbp]
\caption{G-computation algorithm for $\Theta_{(z,z_{*})}^{LATE,(q)}$ under the super-population framework }\label{Algo2}
\begin{algorithmic}[1]
\State Compute the clever covariate $\hat{\pi}^{\bar{D}}_{j(-i)}(z_*)$ via Equation~\eqref{Eq::CleverCovariate} for $z_* \in \{0,1\}$ and append $\hat{\pi}^{\bar{D}}_{j(-i)}(Z_j)$ as a fixed column of the outcome design matrix.
\State Fit the BART models specified in Section~\ref{Sec5}. Discard the first $B$ MCMC iterations as burn-in, and retain $Q$ posterior draws.
\For{posterior draw $q = 1,\dots,Q$}
    \State Sample weights $(p^{(q)}_1,\ldots,p^{(q)}_J) \sim \mathrm{Dirichlet}(1,\ldots,1)$.
    \State For each cluster $j = 1,\dots,J$, draw the cluster random effects $b^{Y,(q)}_{j} \sim \mathcal{N}\big(0, \tau^{2,(q)}_{b^{Y}}\big)$ and $b^{D,(q)}_{j} \sim \mathcal{N}\big(0, \tau^{2,(q)}_{b^{D}}\big)$, where $\tau^{2,(q)}_{b^{Y}}$ and $\tau^{2,(q)}_{b^{D}}$ are the $q$-th posterior draws of the variance components.
    \For{cluster $j = 1,\dots,J$}
        \For{individual $i = 1,\dots,n_j$}
            \For{Monte Carlo replicate $m = 1,\dots,M$}
                \State Fix $D_{ji}^{(q,m)} = z$ if $Z_j = z$ and $D_{ji}^{(q,m)} = z_{*}$ if $Z_j = z_{*}$ (complier stratum).
                \State Sample $\mathbf{D}_{j(-i)}^{(q,m)} \sim F_{\mathbf{D}_{j(-i)}\mid Z_j=z_{*},\mathbf{X}_{ji}, n_j, b^{D,(q)}_{j}}$.
                \State Compute $\widehat{y}_{ji}^{(q,m)} = \Phi\Big(f^{Y,(q)}\big(z, D_{ji}^{(q,m)}, \bar{D}_{j(-i)}^{(q,m)}, \mathbf{X}_{ji}, n_j, \hat{\pi}^{\bar{D}}_{j(-i)}(z_*)\big) + b^{Y,(q)}_{j}\Big)$, where  $\bar{D}_{j(-i)}^{(q,m)} = \frac{1}{n_j -1}\sum_{\ell \neq i}D_{j\ell}^{(q,m)}$, and $\hat{\pi}^{\bar{D}}_{j(-i)}(z_*)$ is the clever covariate evaluated at the peer assignment $z_*$.
            \EndFor
            \State $\widehat{\Phi}_{ji}^{LATE,(q)} \gets \frac{1}{M}\sum_{m=1}^M \widehat{y}_{ji}^{(q,m)}$.
        \EndFor
        \State $\widehat{\Phi}_{j}^{LATE,(q)}(\mathbf{X}_j,n_j) \gets \frac{1}{n_j}\sum_{i=1}^{n_j} \widehat{\Phi}_{ji}^{LATE,(q)}$.
    \EndFor
    \State $\widehat{\Theta}_{(z,z_{*})}^{LATE,(q)} \gets \sum_{j=1}^J p_j^{(q)} \,\widehat{\Phi}_{j}^{LATE,(q)}(\mathbf{X}_j,n_j)$.
\EndFor
\State Posterior samples: $\{\widehat{\Theta}_{(z,z_{*})}^{LATE,(q)}\}_{q=1}^Q$.
\end{algorithmic}
\end{algorithm}

\subsection{Additional details on baseline covariates}\label{AppendixC3}
The PSDP analysis adjusts for $p=13$ baseline covariates. The nine individual-level covariates are sex, age, household latrine availability, livestock ownership, weight-for-age $z$-score, blood in stool, illness frequency, hand cleanliness, and malaria status. The four school-level covariates are the zone baseline infection rate, school examination score, distance to Lake Victoria, and latrines per pupil. Table~\ref{Table::PSDP_baseline} summarizes these covariates by treatment assignment.

\section{Sensitivity analysis for the violation of Assumption 6}\label{AppendixD}

Recall that Assumption 6 states the following:
\begin{equation}\label{Assump:6SA} 
    \begin{aligned}
        & \mathbf{E}\Big[\sum_{i=1}^{n} \big\{ Y_{.i}\big(D_{.i}(z) = d_{.i}, \textbf{D}_{.(-i)}(z_{*}) = \textbf{d}_{.(-i)}\big)\big\}\big| Z=z, D_{.i}(z) = d_{.i}, \textbf{D}_{.(-i)}(z_{*}) = \textbf{d}_{.(-i)},   \textbf{X}_{.i} = \textbf{x}_{.i}, n\Big]
        \\&
        = \mathbf{E}\Big[\sum_{i=1}^{n} \big\{ Y_{.i}\big(D_{.i}(z) = d_{.i}, \textbf{D}_{.(-i)}(z) = \textbf{d}_{.(-i)}\big)\big\}\big| Z=z, D_{.i}(z) = d_{.i}, \textbf{D}_{.(-i)}(z) = \textbf{d}_{.(-i)},   \textbf{X}_{.i} = \textbf{x}_{.i}, n\Big] 
    \end{aligned}
\end{equation}
for  $z \neq z_{*}$.

In this section, we evaluate the robustness of the estimated cluster-ITT and cluster-LATE effects to violations of Assumption 6, assuming all other identification assumptions hold. Our proposed sensitivity analysis framework is adapted from \cite{daniels2023bayesian}.

\subsection{Sensitivity parameter}
Suppose that all identification assumptions hold except for Assumption 6. Our sensitivity analysis is based on the following parameter measuring departure from (\ref{Assump:6SA}): 
\begin{align*}
    \rho\big(1,0\big) 
    &= \mathbf{E}\Big[\frac{1}{n}\sum_{i=1}^{n} \big\{ Y_{.i}\big(D_{.i}(1) = d_{.i}, \textbf{D}_{.(-i)}(0) = \textbf{d}_{.(-i)}\big)\big\}\big| Z=1, D_{.i}(1) = d_{.i}, \textbf{D}_{.(-i)}(0) = \textbf{d}_{.(-i)},   \textbf{X}_{.i} = \textbf{x}_{.i}, n\Big]
        \\&
        - \mathbf{E}\Big[\frac{1}{n}\sum_{i=1}^{n} \big\{ Y_{.i}\big(D_{.i}(1) = d_{.i}, \textbf{D}_{.(-i)}(1) = \textbf{d}_{.(-i)}\big)\big\}\big| Z=1, D_{.i}(1) = d_{.i}, \textbf{D}_{.(-i)}(1) = \textbf{d}_{.(-i)},   \textbf{X}_{.i} = \textbf{x}_{.i}, n\Big]
\end{align*}
Using Assumption 2 the sensitivity parameters  $\rho\big(1,0\big)$ can be simplified to:

\begin{equation}\label{Eq::SAReducedrho}
\begin{aligned}
    \rho\big(1,0\big) 
    &= \mathbf{E}\Big[\frac{1}{n}\sum_{i=1}^{n} \big\{ Y_{.i}\big(D_{.i}(1) = d_{.i}, \textbf{D}_{.(-i)}(0) = \textbf{d}_{.(-i)}\big)\big\}\big|  D_{.i}(1) = d_{.i}, \textbf{D}_{.(-i)}(0) = \textbf{d}_{.(-i)},   \textbf{X}_{.i} = \textbf{x}_{.i}, n\Big]
        \\&
        - \mathbf{E}\Big[\frac{1}{n}\sum_{i=1}^{n} \big\{ Y_{.i}\big(D_{.i}(1) = d_{.i}, \textbf{D}_{.(-i)}(1) = \textbf{d}_{.(-i)}\big)\big\}\big|  D_{.i}(1) = d_{.i}, \textbf{D}_{.(-i)}(1) = \textbf{d}_{.(-i)},   \textbf{X}_{.i} = \textbf{x}_{.i}, n\Big].
\end{aligned}
\end{equation}

\subsection{Calibrating sensitivity parameters}
To calibrate $\rho\big(1,0\big)$ using the observed data, we begin by computing the total amount of variability in the outcomes $Y_{ji}$ under $Z_j = 1$ that is explained by $d_{ji}$, $\textbf{d}_{j(-i)}$, $\textbf{x}_{ji}$ and $n_j$. Specifically, we compute the coefficient of determination from the regression of $Y$ on $\big(d_{ji},\textbf{d}_{j(-i)},\textbf{x}_{ji},n_j \big)$ for subjects with $Z_j =1$, denoted by $R_1$. 

Now, suppose $\kappa$ represents the percentage of total variance not explained by $d_{ji}$, $\textbf{d}_{j(-i)}$, $\textbf{x}_{ji}$ and $n_j$. Using $R_1$ and $\kappa$, we assume a bound on $\rho\big(1,0\big)$ of the form:
\begin{equation}\label{Eq::SArhobound}
    \big|\rho\big(1,0\big)\big| \leq \sqrt{\text{Var}(Y_{ji}\big|Z_j = 1)\times (1-R_1) \times \kappa}.
\end{equation}

Intuitively, this implies that any violation of Assumption 6 is expected to have an effect bounded by the variance of $Y_{ji}\big|Z_j = 1$ scaled by $100 \times \kappa\%$ of the variability not explained by $d_{ji}$, $\textbf{d}_{j(-i)}$, $\textbf{x}_{ji}$ and $n_j$. Thus, we define the sensitivity parameter $\kappa$ bounded in $[0,1]$. 

We place a $\text{Beta}(2,1)$ prior on $\kappa$, which has density $f(\kappa) = 2\kappa$ on $[0,1]$ and corresponds to a triangular distribution that increases linearly from zero to its mode at $\kappa = 1$. This choice reflects the view that, a priori, larger departures from Assumption~6 should receive at least as much posterior mass as smaller ones, thereby ensuring that the sensitivity analysis meaningfully stresses the identifying assumption. A $\text{Uniform}(0,1)$ prior, by contrast, assigns half of its mass to $\kappa \in [0, 0.5]$---the region in which the sensitivity bound (\ref{Eq::SArhobound}) is smallest---and would therefore concentrate the adjusted estimates near their unadjusted values, understating the potential impact of violations of Assumption~6.

\subsection{Estimation of causal effects under the violation of Assumption 6, assuming all other assumptions hold}
To conduct posterior inference for the cluster-ITT and cluster-LATE estimates, we begin by taking posterior samples from the usual difference in means to compute the effect estimates, assuming Assumption 6 holds. Next, we generate random draws, one for each MCMC iteration, from the distribution of the sensitivity parameter $\kappa$ and use these to compute the bound on $\rho\big(1,0\big)$ given in (\ref{Eq::SArhobound}). Finally, we adjust the causal effect estimates obtained under Assumption 6 by subtracting the bound values from the estimates. Mathematically, we can express 
the sensitivity-adjusted cluster-ITT direct and spillover effects  using $\rho\big(1,0\big)$ as:

\begin{align}
    \text{ITT}^C_{DE} &= \mathbf{E}\Big[\frac{1}{n_j} \sum_{i=1}^{n_j} \big\{Y_{ji}\big(D_{ji}(1), \textbf{D}_{j(-i)}(0)\big) - Y_{ji}\big(D_{ji}(0), \textbf{D}_{j(-i)}(0)\big)\big\}\Big]
    \nonumber\\&
    = \bigg\{ \rho\big(1,0\big) +\int_{\textbf{b}_{.}}\mathbf{E}_{\textbf{X},n}\Big[\frac{1}{n}\sum_{i=1}^{n}\int_{d_{.i}}\int_{\textbf{d}_{.(-i)}}   \mathbf{E}\Big[  Y_{.i}\big| Z=1, D_{.i} = d_{.i}, \textbf{D}_{.(-i)} = \textbf{d}_{.(-i)},  \textbf{X}_{.i} = \textbf{x}_{.i}, n, b_{.}^{Y}\Big]   \nonumber\\& dF_{\textbf{D}_{.(-i)}|  Z=0, \textbf{X}_{.i},n, b_{.}^{D}}(\textbf{d}_{.(-i)})  dF_{D_{.i} |Z=1, \textbf{X}_{.i},n, b_{.}^{D}}(d_{.i})\Big] \times  dF_{\textbf{b}_{.}}\bigg\}
    \nonumber\\&\quad-
    \nonumber\\& \int_{\textbf{b}_{.}}\mathbf{E}_{\textbf{X},n}\Big[\frac{1}{n}\sum_{i=1}^{n}\int_{d_{.i}}\int_{\textbf{d}_{.(-i)}}  \mathbf{E}\Big[  Y_{.i}\big| Z=0, D_{.i} = d_{.i}, \textbf{D}_{.(-i)} = \textbf{d}_{.(-i)},  \textbf{X}_{.i} = \textbf{x}_{.i}, n, b_{.}^{Y}\Big]   dF_{\textbf{D}_{.(-i)}|  Z=0, \textbf{X}_{.i},n, b_{.}^{D}}(\textbf{d}_{.(-i)})\nonumber\\&  dF_{D_{.i} |Z=0, \textbf{X}_{.i},n, b_{.}^{D}}(d_{.i})\Big] \times
    dF_{\textbf{b}_{.}}, \text{ and }
\end{align}
\begin{align}
    \text{ITT}^C_{SE} &= \mathbf{E}\Big[\frac{1}{n_j} \sum_{i=1}^{n_j} \big\{Y_{ji}\big(D_{ji}(1), \textbf{D}_{j(-i)}(1)\big) - Y_{ji}\big(D_{ji}(1), \textbf{D}_{j(-i)}(0)\big)\big\}\Big]
    \nonumber\\&
    = \int_{\textbf{b}_{.}}\mathbf{E}_{\textbf{X},n}\Big[\frac{1}{n}\sum_{i=1}^{n}\int_{d_{.i}}\int_{\textbf{d}_{.(-i)}}  \mathbf{E}\Big[  Y_{.i}\big| Z=1, D_{.i} = d_{.i}, \textbf{D}_{.(-i)} = \textbf{d}_{.(-i)},  \textbf{X}_{.i} = \textbf{x}_{.i}, n, b_{.}^{Y}\Big]   dF_{\textbf{D}_{.(-i)}|  Z=1, \textbf{X}_{.i},n, b_{.}^{D}}(\textbf{d}_{.(-i)})\nonumber\\&  dF_{D_{.i} |Z=1, \textbf{X}_{.i},n, b_{.}^{D}}(d_{.i})\Big] \times  dF_{\textbf{b}_{.}}
    \nonumber\\&\quad-
    \nonumber\\&       \bigg\{ \rho\big(1,0\big) + \int_{\textbf{b}_{.}}\mathbf{E}_{\textbf{X},n}\Big[\frac{1}{n}\sum_{i=1}^{n}\int_{d_{.i}}\int_{\textbf{d}_{.(-i)}}   \mathbf{E}\Big[  Y_{.i}\big| Z=1, D_{.i} = d_{.i}, \textbf{D}_{.(-i)} = \textbf{d}_{.(-i)},  \textbf{X}_{.i} = \textbf{x}_{.i}, n, b_{.}^{Y}\Big] \nonumber\\&  dF_{\textbf{D}_{.(-i)}|  Z=0, \textbf{X}_{.i},n, b_{.}^{D}}(\textbf{d}_{.(-i)})  dF_{D_{.i} |Z=1, \textbf{X}_{.i},n, b_{.}^{D}}(d_{.i})\Big] \times  dF_{\textbf{b}_{.}} \bigg\}.
\end{align}
Note that we do not require sensitivity analysis for the total effect estimates, as Assumption 6 is only necessary for the identification of direct and spillover effect parameters.

Similarly, we can express
the sensitivity-adjusted cluster-LATE direct and spillover effects  using $\rho\big(1,0\big)$ as:
\begin{align}
    \text{LATE}^C_{DE} &= \mathbf{E}\Big[\frac{1}{n_j} \sum_{i=1}^{n_j} \big\{Y_{ji}\big(D_{ji}(1), \textbf{D}_{j(-i)}(0)\big) - Y_{ji}\big(D_{ji}(0), \textbf{D}_{j(-i)}(0)\big)\big\} \big| D_{ji}(1) =1, D_{ji}(0) = 0\Big]
    \nonumber\\&
    = \bigg\{ \rho\big(1,0\big) + \int_{\textbf{b}_{.}}\mathbf{E}_{\textbf{X},n}\Big[\frac{1}{n}\sum_{i=1}^{n} \int_{\textbf{d}_{.(-i)}}  \mathbf{E}\Big[  Y_{.i}\big| Z = 1, D_{.i} = 1, \textbf{D}_{.(-i)} = \textbf{d}_{.(-i)},   \textbf{X}_{.i} = \textbf{x}_{.i},n, b_{.}^{Y}\Big] \nonumber\\&   dF_{\textbf{D}_{.(-i)}| Z = 0, \textbf{X}_{.i},n, b_{.}^{D}}(\textbf{d}_{.(-i)})\Big] \times  dF_{\textbf{b}_{.}} \bigg\}
    \nonumber\\&\quad-
    \nonumber\\& \int_{\textbf{b}_{.}}\mathbf{E}_{\textbf{X},n}\Big[\frac{1}{n}\sum_{i=1}^{n} \int_{\textbf{d}_{.(-i)}} \mathbf{E}\Big[  Y_{.i}\big| Z = 0, D_{.i} = 0, \textbf{D}_{.(-i)} = \textbf{d}_{.(-i)},   \textbf{X}_{.i} = \textbf{x}_{.i},n, b_{.}^{Y}\Big] \nonumber\\&   dF_{\textbf{D}_{.(-i)}| Z = 0, \textbf{X}_{.i},n, b_{.}^{D}}(\textbf{d}_{.(-i)})\Big] \times  dF_{\textbf{b}_{.}}, \text{ and }
\end{align}

\begin{align}
    \text{LATE}^C_{SE} &= \mathbf{E}\Big[\frac{1}{n_j} \sum_{i=1}^{n_j} \big\{Y_{ji}\big(D_{ji}(1), \textbf{D}_{j(-i)}(1)\big) - Y_{ji}\big(D_{ji}(1), \textbf{D}_{j(-i)}(0)\big)\big\} \big| D_{ji}(1) =1, D_{ji}(0) = 0\Big]
    \nonumber\\&
    = \int_{\textbf{b}_{.}}\mathbf{E}_{\textbf{X},n}\Big[\frac{1}{n}\sum_{i=1}^{n} \int_{\textbf{d}_{.(-i)}} \mathbf{E}\Big[  Y_{.i}\big| Z = 1, D_{.i} = 1, \textbf{D}_{.(-i)} = \textbf{d}_{.(-i)},   \textbf{X}_{.i} = \textbf{x}_{.i},n, b_{.}^{Y}\Big]  \nonumber\\&  dF_{\textbf{D}_{.(-i)}| Z = 1, \textbf{X}_{.i},n, b_{.}^{D}}(\textbf{d}_{.(-i)})\Big] \times  dF_{\textbf{b}_{.}}
    \nonumber\\&\quad-
    \nonumber\\& \bigg\{ \rho\big(1,0\big) + \int_{\textbf{b}_{.}}\mathbf{E}_{\textbf{X},n}\Big[\frac{1}{n}\sum_{i=1}^{n} \int_{\textbf{d}_{.(-i)}}  \mathbf{E}\Big[  Y_{.i}\big| Z = 1, D_{.i} = 1, \textbf{D}_{.(-i)} = \textbf{d}_{.(-i)},   \textbf{X}_{.i} = \textbf{x}_{.i},n, b_{.}^{Y}\Big]  \nonumber\\&   dF_{\textbf{D}_{.(-i)}| Z = 0, \textbf{X}_{.i},n, b_{.}^{D}}(\textbf{d}_{.(-i)})\Big] \times  dF_{\textbf{b}_{.}} \bigg\}.
\end{align}

\subsection{Sensitivity analysis results}
Table~\ref{Table::SenAnalysisResults} compares the unadjusted and sensitivity-adjusted posterior summaries for all ITT and LATE effects. The posterior means of the adjusted direct effects remain negative and close to their unadjusted counterparts. The sign direction is preserved across all six effect estimates, indicating that no plausible violation of Assumption~\ref{Assump:6}, as calibrated by our sensitivity framework, reverses the qualitative conclusions. The adjusted credible intervals are substantially wider than the unadjusted intervals, which is expected as the sensitivity parameter $\rho(1,0)$ introduces additional uncertainty that reflects our ignorance about the magnitude and direction of the cross-world bias. For instance, the adjusted $95\%$ credible interval for the ITT direct effect widens from $(-0.272, 0.016)$ to $(-0.583, 0.350)$, now covering zero. This widening reflects the conservative nature of the analysis rather than evidence against the effect itself as the Beta$(2,1)$ prior on $\kappa$ places substantial mass on large departures from Assumption~\ref{Assump:6}. The total effects are unaffected by the sensitivity adjustment, as expected from the decomposition, because $\rho(1,0)$ enters the direct and spillover components with opposite signs and cancels in the sum.

Overall, the sensitivity analysis provides reassurance that the main findings from Section~\ref{Sec9} are not an artifact of the cross-world homogeneity assumption. The direction and approximate magnitude of the estimated effects persist under a range of plausible departures from Assumption~\ref{Assump:6}.

\begin{table}[htbp]
\caption{Sensitivity analysis for violations of Assumption~\ref{Assump:6}: unadjusted versus adjusted posterior summaries from the PSDP case study. The sensitivity parameter $\kappa \sim \text{Beta}(2,1)$ with random sign; calibration yields $R^2_1 = 0.184$ and $\text{Var}(Y \mid Z=1) = 0.198$.}
\label{Table::SenAnalysisResults}
\centering
\begin{tabular}{@{}llcccccc@{}}
\hline
 &  & \multicolumn{3}{c}{Unadjusted} & \multicolumn{3}{c}{Adjusted} \\
\cmidrule(lr){3-5} \cmidrule(lr){6-8}
 &  & Mean & $2.5\%$ & $97.5\%$ & Mean & $2.5\%$ & $97.5\%$ \\
\hline
\textit{ITT} & Direct  & $-0.123$ & $-0.272$ & $\phantom{-}0.016$ & $-0.138$ & $-0.583$ & $0.350$ \\
          & Spillover  & $-0.027$ & $-0.160$ & $0.098$ & $-0.013$ & $-0.483$ & $0.441$ \\
          & Total   & $-0.150$ & $-0.277$ & $-0.029$ & $-0.150$ & $-0.277$ & $-0.029$ \\[4pt]
\textit{LATE} & Direct  & $-0.175$ & $-0.320$ & $-0.034$ & $-0.193$ & $-0.652$ & $0.292$ \\
          & Spillover  & $-0.027$ & $-0.161$ & $\phantom{-}0.092$ & $-0.009$ & $-0.477$ & $0.428$ \\
          & Total   & $-0.202$ & $-0.324$ & $-0.080$ & $-0.202$ & $-0.324$ & $-0.080$ \\
\hline
\end{tabular}
\end{table}

\section{Sensitivity analysis for the violation of Assumption 7}\label{AppendixE}

Assumption~\ref{Assump:7} requires that, conditional on baseline covariates and cluster size, the potential treatment receipt of individual $i$ under assignment $z$ is independent of the potential treatment receipt of a different individual $i'$ in the same cluster under a different assignment $z_*$. This cross-world independence is inherently untestable from the observed data, since the treatment receipt vectors $\textbf{D}_j(1)$ and $\textbf{D}_j(0)$ are never jointly observed for the same cluster. In this section, we develop a sensitivity analysis to examine how the estimated causal effects change as cross-world dependence is introduced. The approach adapts the Gaussian copula sensitivity framework of \citet{ohnishi2025identification} to the binary treatment receipt setting. Note that this sensitivity analysis is complementary to the one in Appendix~\ref{AppendixD}: Assumption~\ref{Assump:6} concerns cross-world exchangeability in the outcome model, whereas Assumption~\ref{Assump:7} concerns cross-world independence in the treatment receipt model. Violations of these two assumptions affect the causal estimates through distinct channels, and we conduct separate sensitivity analyses for each.

\subsection{Sensitivity parameter on the latent probit scale}

The treatment receipt model in Section~\ref{Sec5} specifies $D_{ji} = \mathbf{1}(U^D_{ji} > 0)$ with $U^D_{ji} = f^D(Z_j, \mathbf{X}_{ji}, n_j) + b_j^D + e_{ji}^D$ and $e_{ji}^D \sim \mathcal{N}(0,1)$. In the G-computation algorithm, counterfactual treatment receipt draws are generated conditional on $b_j^D$ (sampled from its posterior), so Assumption~\ref{Assump:7} is operationally imposed conditional on $(\mathbf{X}_{ji}, n_j, b_j^D)$. Note that the shared cluster random effect $b_j^D$ already induces within-cluster dependence in treatment receipt at the marginal level (integrating over $b_j^D$). The sensitivity analysis developed here targets residual cross-world dependence \emph{beyond} what is captured by $b_j^D$. Formally, we relax the conditional independence of $e_{ji}^D(z)$ and $e_{ji'}^D(z_*)$ for $i \neq i'$ and $z \neq z_*$, given $(\mathbf{X}_{ji}, n_j, b_j^D)$, by introducing a sensitivity parameter $\rho_7 \in (-\rho_{\max}, \rho_{\max})$ such that
\[
\text{Corr}\big(e_{ji}^D(z),\; e_{ji'}^D(z_*) \mid \mathbf{X}_{ji}, n_j, b_j^D\big) = \rho_7, \qquad i \neq i',\; z \neq z_*.
\]

When $\rho_7 = 0$, Assumption~\ref{Assump:7} holds conditional on $b_j^D$. When $\rho_7 > 0$, unobserved factors beyond the cluster random effect, such as community health attitudes or peer pressure dynamics not captured by $b_j^D$, induce positive cross-world dependence in treatment receipt across individuals within a cluster. Although $\rho_7$ is mathematically defined on $(-\rho_{\max}, \rho_{\max})$, in the PSDP trial context we expect $\rho_7$ to be non-negative, since the plausible unobserved mechanisms (shared community health attitudes, peer pressure, common exposure to outreach) all act in the same direction across individuals within a cluster.

\subsection{Correlation structure and positive-definiteness constraint}

 \citet{ohnishi2025identification} model the full joint distribution of cross-world intermediate variables within a cluster through a $2n_j \times 2n_j$ Gaussian copula correlation matrix with a $2 \times 2$ block structure,
\[
\boldsymbol{\Omega} = \begin{pmatrix} \mathbf{C}_{11} & \mathbf{C}_{10} \\ \mathbf{C}_{10}^{\top} & \mathbf{C}_{00} \end{pmatrix},
\]
where $\mathbf{C}_{11}$ and $\mathbf{C}_{00}$ are $n_j \times n_j$ equicorrelation matrices governing within-world dependence under treatment ($\gamma_1$) and control ($\gamma_0$), respectively, and the cross-world block $\mathbf{C}_{10}$ has diagonal entries $\rho$ (same individual, different worlds) and off-diagonal entries $\rho^{*} = (\gamma_0 + \gamma_1)\rho/2$ (different individuals, different worlds). In their setting, $\gamma_0$ and $\gamma_1$ are identifiable from observed data because within-cluster correlations of the intermediate variables can be estimated from each treatment arm separately.

In our framework, the G-computation algorithm draws counterfactual treatment receipt vectors \emph{conditional} on the cluster random effect $b_j^D$, which is sampled from its posterior at each MCMC iteration. Conditioning on $b_j^D$ absorbs the shared within-cluster component that drives same-world correlations, so that the treatment receipt model in Section~\ref{Sec5} implies conditional independence: $D_{ji}(z)$ and $D_{ji'}(z)$ are independent given $(Z_j, \mathbf{X}_j, n_j, b_j^D)$ for $i \neq i'$. On the latent probit scale, this corresponds to $\gamma_1 = \gamma_0 = 0$, and the diagonal blocks reduce to identity matrices: $\mathbf{C}_{11} = \mathbf{C}_{00} = \mathbf{I}_{n_j}$. As for the cross-world block $\mathbf{C}_{10}$, the same-individual cross-world correlation (i.e., $\text{Corr}(e_{ji}^D(1), e_{ji}^D(0))$, represented by the diagonal entries of $\mathbf{C}_{10}$) does not directly govern Assumption~\ref{Assump:7}, which concerns \emph{different} individuals. Our sensitivity parameter $\rho_7$ instead targets the off-diagonal entries of $\mathbf{C}_{10}$: the cross-world correlation $\text{Corr}(e_{ji}^D(z), e_{ji'}^D(z_*))$ for $i \neq i'$, which is precisely the quantity that Assumption~\ref{Assump:7} sets to zero. We treat the same-individual cross-world correlation as a nuisance that does not affect the G-computation for the estimands of interest, since the focal individual's treatment receipt is drawn from its own marginal (ITT) or fixed deterministically (LATE).

The G-computation for $\Theta_{(1,0)}$ requires, for each focal individual $i$, a joint draw of $D_{ji}(1)$ (the focal individual's treatment receipt under $z = 1$) and $\{D_{ji'}(0)\}_{i' \neq i}$ (all peers' treatment receipts under $z_* = 0$). This is a subvector of length $n_j$ extracted from the full $2n_j$-dimensional latent vector. On the latent probit scale, define the $n_j$-dimensional error vector
\[
\mathbf{e}^{(i)}_j = \big(e_{ji}^D(1),\; e_{j,i'_1}^D(0), \ldots, e_{j,i'_{n_j-1}}^D(0)\big)^{\top},
\]
where the first component is the focal individual's error under $z = 1$ and the remaining $n_j - 1$ components are the peers' errors under $z_* = 0$. Under the simplified copula ($\gamma_0 = \gamma_1 = 0$, $\rho^* = 0$), the correlation matrix of $\mathbf{e}^{(i)}_j$ is
\[
\boldsymbol{\Sigma}_j(\rho_7) =
\begin{pmatrix}
1 & \rho_7 & \rho_7 & \cdots & \rho_7 \\
\rho_7 & 1 & 0 & \cdots & 0 \\
\rho_7 & 0 & 1 & \cdots & 0 \\
\vdots & \vdots & \vdots & \ddots & \vdots \\
\rho_7 & 0 & 0 & \cdots & 1
\end{pmatrix},
\]
where the $(1,k)$ and $(k,1)$ entries equal $\rho_7$ for $k = 2, \ldots, n_j$, and all remaining off-diagonal entries are zero. The structure reflects three features: (i) each peer's error under $z_* = 0$ is correlated with the focal individual's error under $z = 1$ at strength $\rho_7$ (the parameter governed by Assumption~\ref{Assump:7}); (ii) peers' same-world errors are mutually uncorrelated conditional on $b_j^D$ (from the model specification); and (iii) the common-$\rho_7$ structure across all focal--peer pairs imposes exchangeability, yielding a parsimonious single-parameter sensitivity analysis. We note that, by symmetry of the relabeling, $\boldsymbol{\Sigma}_j(\rho_7)$ has the same structure regardless of which individual serves as the focal unit.

 The matrix $\boldsymbol{\Sigma}_j(\rho_7)$ can be written as a rank-$2$ perturbation of the identity:
\[
\boldsymbol{\Sigma}_j(\rho_7) = \mathbf{I}_{n_j} + \rho_7 \big(\mathbf{e}_1 \mathbf{v}^{\top} + \mathbf{v}\mathbf{e}_1^{\top}\big),
\]
where $\mathbf{e}_1 = (1, 0, \ldots, 0)^{\top} \in \mathbb{R}^{n_j}$ and $\mathbf{v} = (0, 1, 1, \ldots, 1)^{\top} \in \mathbb{R}^{n_j}$, with $\|\mathbf{v}\| = \sqrt{n_j - 1}$. The perturbation matrix $\mathbf{B} = \mathbf{e}_1 \mathbf{v}^{\top} + \mathbf{v}\mathbf{e}_1^{\top}$ has rank $2$, since its column space is $\operatorname{span}\{\mathbf{e}_1, \mathbf{v}\}$. Any eigenvector of $\boldsymbol{\Sigma}_j$ orthogonal to both $\mathbf{e}_1$ and $\mathbf{v}$ is an eigenvector of $\mathbf{I}_{n_j}$ with eigenvalue $1$. There are $n_j - 2$ such linearly independent eigenvectors. The remaining two eigenvalues are found by restricting to $\operatorname{span}\{\mathbf{e}_1, \mathbf{v}\}$. In this two-dimensional subspace, $\mathbf{B}$ has the representation
\[
\mathbf{B}\big|_{\operatorname{span}\{\mathbf{e}_1, \mathbf{v}\}} = \begin{pmatrix} 0 & \|\mathbf{v}\| \\ \|\mathbf{v}\| & 0 \end{pmatrix},
\]
whose eigenvalues are $\pm \|\mathbf{v}\| = \pm \sqrt{n_j - 1}$. Therefore, the eigenvalues of $\boldsymbol{\Sigma}_j(\rho_7) = \mathbf{I} + \rho_7 \mathbf{B}$ are
\[
\lambda_1 = 1 + \rho_7\sqrt{n_j - 1}, \qquad \lambda_2 = 1 - \rho_7\sqrt{n_j - 1}, \qquad \lambda_3 = \cdots = \lambda_{n_j} = 1.
\]
Positive definiteness requires all eigenvalues to be strictly positive. The binding constraint is $\lambda_2 > 0$ when $\rho_7 > 0$ (or $\lambda_1 > 0$ when $\rho_7 < 0$), giving
\begin{equation}\label{Eq::A7_PD_constraint}
|\rho_7| < \frac{1}{\sqrt{n_j - 1}} \eqqcolon \rho_{\max}(n_j).
\end{equation}
For the PSDP dataset, the largest cluster has $n_j = 80$ students, yielding $\rho_{\max}(80) = 1/\sqrt{79} \approx 0.112$. When a common $\rho_7$ is used across all clusters, the binding constraint is determined by the largest cluster. Smaller clusters admit larger values of $|\rho_7|$. For instance, a cluster of size $n_j = 25$ gives $\rho_{\max}(25) \approx 0.204$. This constraint has an important practical implication: in settings with large clusters, even modest departures from Assumption~\ref{Assump:7} on the latent scale are limited by the positive-definiteness requirement, so the feasible range of $\rho_7$ is inherently narrow. This provides a structural form of robustness, as the degree to which Assumption~\ref{Assump:7} can be violated is bounded by the cluster size.

\subsection{Modified G-computation under the relaxed assumption}

The sensitivity analysis modifies only the G-computation for $\Theta_{(1,0)}^{ITT}$ and $\Theta_{(1,0)}^{LATE}$, since these are the only estimand components that involve cross-world treatment receipt draws. The estimands $\Theta_{(1,1)}$ and $\Theta_{(0,0)}$ are unaffected because all treatment receipt draws occur under the same assignment. Consequently, the total effects $\text{ITT}^C$ and $\text{LATE}^C$ are invariant to Assumption~\ref{Assump:7}, and only the direct and spillover decompositions are sensitive.

For a given value of $\rho_7$, the modified G-computation for the ITT estimand $\Theta_{(1,0)}^{ITT}$ proceeds as follows. At each retained MCMC iteration $q$, for each cluster $j$ and focal individual $i$:

\begin{enumerate}
\item Compute marginal treatment probabilities from the stored BART posterior:
\[
p_{ji}(1) = \Phi\big(f^{D,(q)}(1, \mathbf{X}_{ji}, n_j) + b_j^{D,(q)}\big), \qquad p_{ji'}(0) = \Phi\big(f^{D,(q)}(0, \mathbf{X}_{ji'}, n_j) + b_j^{D,(q)}\big).
\]
\item Compute the Cholesky factor $\mathbf{L}_j$ of $\boldsymbol{\Sigma}_j(\rho_7)$, which needs to be computed only once per unique cluster size $n_j$ and value of $\rho_7$.
\item For each Monte Carlo replicate $m = 1, \ldots, M$:
\begin{enumerate}
\item[(a)] Draw $\boldsymbol{\eta}^{(m)} \sim \mathcal{N}(\mathbf{0}, \mathbf{I}_{n_j})$ and compute $\mathbf{W}^{(m)} = \mathbf{L}_j \boldsymbol{\eta}^{(m)}$.
\item[(b)] Set $D_{ji}^{(q,m)}(1) = \mathbf{1}\big\{\Phi(W_1^{(m)}) < p_{ji}(1)\big\}$ and $D_{ji'_k}^{(q,m)}(0) = \mathbf{1}\big\{\Phi(W_{k+1}^{(m)}) < p_{ji'_k}(0)\big\}$ for $k = 1, \ldots, n_j - 1$.
\item[(c)] Compute $\bar{D}_{j(-i)}^{(q,m)} = \frac{1}{n_j - 1}\sum_{k=1}^{n_j-1} D_{ji'_k}^{(q,m)}(0)$.
\item[(d)] Predict the outcome probability: $\widehat{y}_{ji}^{(q,m)} = \Phi\big(f^{Y,(q)}(1, D_{ji}^{(q,m)}, \bar{D}_{j(-i)}^{(q,m)}, \mathbf{X}_{ji}, n_j) + b_j^{Y,(q)}\big)$.
\end{enumerate}
\item Compute the individual-level average $\widehat{\Phi}_{ji}^{(q)} = \frac{1}{M}\sum_{m=1}^M \widehat{y}_{ji}^{(q,m)}$ and aggregate to the cluster level.
\end{enumerate}

For the LATE estimand $\Theta_{(1,0)}^{LATE}$, the algorithm is identical except that in step 3(b), the focal individual's treatment receipt is fixed at $D_{ji}^{(q,m)}(1) = 1$ (the complier stratum value) rather than being drawn stochastically. The peers' treatment receipt draws $D_{ji'_k}^{(q,m)}(0)$ are generated from the $(n_j - 1)$-dimensional marginal of $\mathbf{W}^{(m)}$ conditional on $W_1^{(m)}$ exceeding the complier threshold. In practice, this conditioning can be implemented by rejection sampling: retain only those Monte Carlo replicates for which $\Phi(W_1^{(m)}) < p_{ji}(1)$ (i.e., $D_{ji}^{(q,m)}(1) = 1$), and discard the rest.

The sensitivity-adjusted $\widehat{\Theta}_{(1,0)}^{SA,(q)}(\rho_7)$ is then computed as a Bayesian Bootstrap--weighted average over clusters, exactly as in Algorithms~\ref{Algo1} and~\ref{Algo2}. The adjusted direct and spillover effects are
\[
\text{ITT}^{C,SA}_{DE}(\rho_7) = \widehat{\Theta}_{(1,0)}^{SA}(\rho_7) - \widehat{\Theta}_{(0,0)}, \qquad \text{ITT}^{C,SA}_{SE}(\rho_7) = \widehat{\Theta}_{(1,1)} - \widehat{\Theta}_{(1,0)}^{SA}(\rho_7),
\]
with analogous expressions for the LATE effects. The total effects remain unchanged.

\subsection{Sensitivity analysis results}

We evaluate the sensitivity of the PSDP causal effect estimates to violations of Assumption~\ref{Assump:7} over a grid of four values: $\rho_7 \in \{0.025, 0.05, 0.075, 0.10\}$, spanning nearly the entire feasible range ($\rho_{\max}(80) \approx 0.1125$). The sensitivity analysis is computed using the same BART-RE with clever covariate tree ensemble and the same Bayesian Bootstrap weights as the main G-computation at each retained iteration.

Figure~\ref{Fig::SA_A7} displays the sensitivity-adjusted posterior means and $95\%$ credible intervals alongside the unadjusted estimates. The results exhibit near-complete insensitivity to the cross-world correlation parameter. Specifically, the maximum absolute deviation of any sensitivity-adjusted posterior mean from its unadjusted counterpart across the entire $\rho_7$ grid is less than $2 \times 10^{-6}$ on the probability scale. No sign direction changes occur at any value of $\rho_7$, and the posterior means show no visible trend as $\rho_7$ increases from $0.025$ to $0.10$. This insensitivity has a structural explanation. The positive-definiteness constraint~\eqref{Eq::A7_PD_constraint} restricts $|\rho_7| < 1/\sqrt{n_j - 1}$, which for the largest cluster in the PSDP ($n_j = 80$) limits the cross-world correlation to a narrow band of width approximately $0.1125$. Within this band, the latent correlation between the focal individual's treatment receipt error under $z = 1$ and a peer's error under $z_* = 0$ is too weak to materially alter the distribution of peer treatment receipt draws, and therefore has negligible impact on the G-computation integrals. In settings with large clusters, this constraint provides an inherent structural safeguard against violations of Assumption~\ref{Assump:7}.

\begin{figure}[htbp]
\centering
\includegraphics[width=0.95\textwidth]{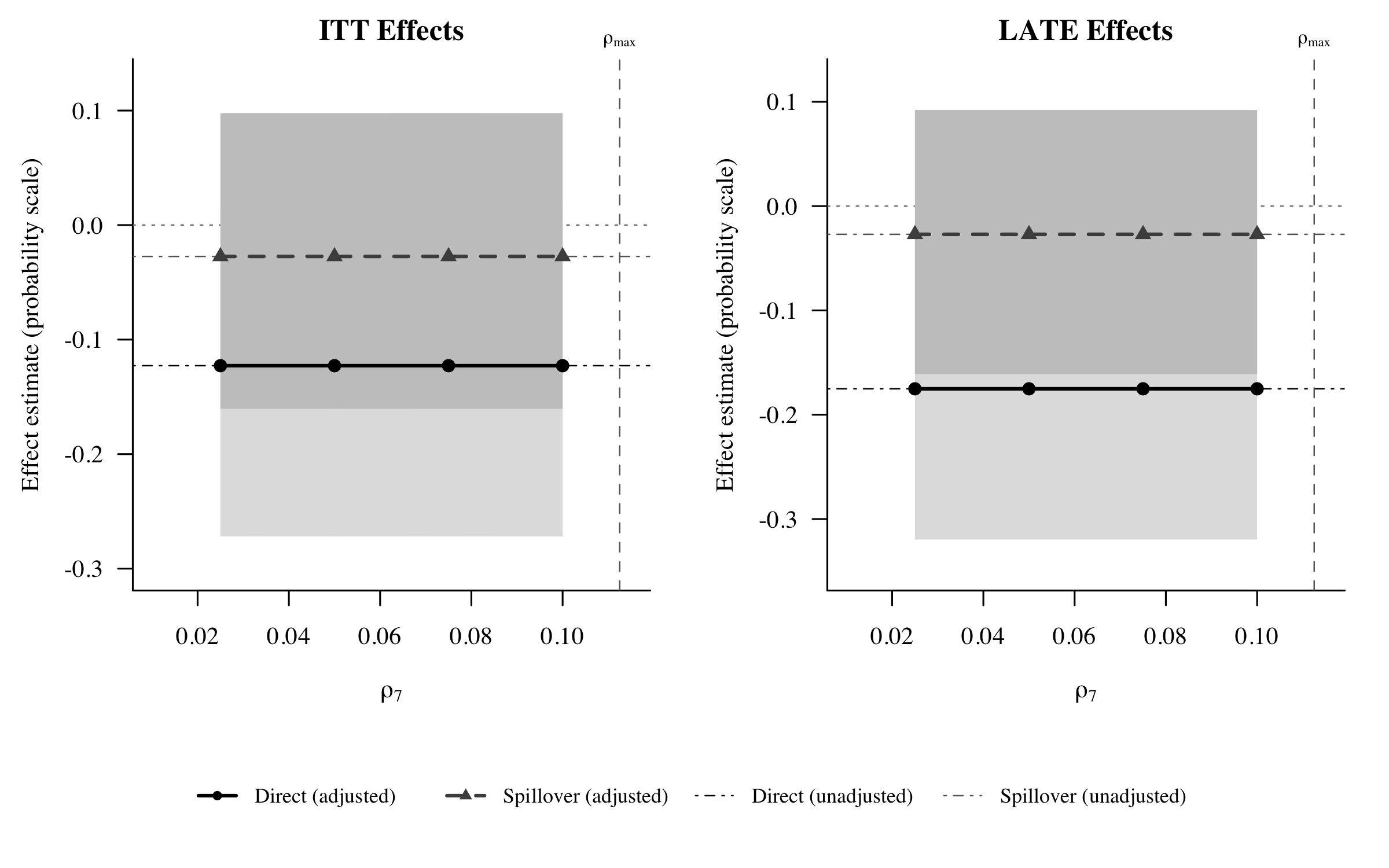}
\caption{Sensitivity of direct and spillover effect estimates to the cross-world correlation parameter $\rho_7$ (Assumption~\ref{Assump:7}), computed with $M = 10{,}000$ Monte Carlo draws. Left panel: ITT effects. Right panel: LATE effects. Solid lines with filled circles denote the direct effect; dashed lines with triangles denote the spillover effect. Shaded bands represent $95\%$ credible intervals. The dot-dash horizontal lines indicate the unadjusted estimates from the main analysis. The vertical dashed line marks $\rho_{\max} \approx 0.112$, the positive-definiteness bound for the largest cluster. The adjusted and unadjusted estimates are visually indistinguishable in this implementation. With structural one-sided noncompliance, control-world receipt is constant, so this stability does not assess general violations of receipt independence.}
\label{Fig::SA_A7}
\end{figure}

\section{Details on simulation study}\label{AppendixF}

This appendix provides further details on the simulation designs and estimation procedures. Appendix~\ref{AppendixFa} gives the parametric comparison, Appendix~\ref{AppendixFb} gives the paired clever-covariate study and diagnostics of the effect decomposition, and Appendix~\ref{AppendixFc} examines strong spillover with smaller clusters.

\subsection{Comparison with a parametric probit model}\label{AppendixFa}

\subsubsection{Data generating mechanism}\label{AppendixF1}

We first fit the following parametric probit model to the PSDP analysis sample:
\begin{equation}\label{Sim::ParamModel}
    \begin{aligned}
    Y_{ji}\mid D_{ji},\bar{D}_{j(-i)},Z_j,\mathbf{X}_{ji},b_j^Y
    &\sim \mathrm{Bern}\left\{
    \Phi\left(\gamma_0+\gamma_DD_{ji}
    +\gamma_{\bar D}\bar D_{j(-i)}
    +\gamma_ZZ_j+\boldsymbol{\gamma}_X^\top\mathbf{X}_{ji}
    +b_j^Y\right)\right\},\\
    D_{ji}\mid Z_j,\mathbf{X}_{ji},b_j^D
    &\sim \mathrm{Bern}\left\{
    \Phi\left(\xi_0+\xi_ZZ_j
    +\boldsymbol{\xi}_X^\top\mathbf{X}_{ji}+b_j^D\right)\right\},\\
    b_j^Y&\sim\mathcal{N}(0,\tau_Y^2),\qquad
    b_j^D\sim\mathcal{N}(0,\tau_D^2),
    \end{aligned}
\end{equation}
where $\mathbf{X}_{ji}=(X_{1,ji},X_{2,ji},X_{3,j})^\top$, $\Phi(\cdot)$ is the standard normal CDF, and the cluster-level random effects capture residual between-cluster heterogeneity.  We consider three continuous baseline covariates: age ($X_1$), weight-for-age $z$-score ($X_2$), and the zone baseline infection rate ($X_3$). The posterior mean estimates from this fitted model provide the parameter values for the DGM and, together with Algorithms~\ref{Algo1} and~\ref{Algo2}, define the corresponding ground-truth causal effects. We then consider both this correctly specified linear DGM and a misspecified nonlinear DGM that violates the functional-form assumptions of the parametric model. For each DGM, we evaluate recovery of the direct, spillover, and total effects at two sample sizes.

The fitted outcome-model variance is $\tau_Y^2=0.244$ (95\% credible interval $[0.157,0.367]$), corresponding to a latent-scale ICC of approximately $0.20$. Similarly, the fitted treatment-receipt variance is $\tau_D^2=0.254$, yielding a similar ICC.

The parametric probit model in \eqref{Sim::ParamModel} is fitted using \texttt{Nimble} \citep{de2017programming}, running four MCMC chains for 7,500 iterations each, discarding the first 5,000 iterations as burn-in, and retaining every tenth draw. This yields 1,000 posterior draws across chains for inference. We fit BART-RE using 50 trees for each of the outcome and treatment-receipt models, with 5,000 burn-in iterations and 1,000 retained posterior draws after thinning.

At each retained posterior draw, we apply Algorithms~\ref{Algo1} and~\ref{Algo2} using $M=10{,}000$ Monte Carlo draws for treatment-receipt integration. For each replicate, we obtain posterior means and 95\% credible intervals for the cluster-level ITT and LATE direct, spillover, and total effects. We evaluate performance using bias, root mean squared error (RMSE), and 95\% coverage.

We consider two simulation settings. In the first, each replicated dataset contains $J=50$ clusters with a total sample size of $N=1{,}500$, corresponding to an average cluster size of 30, with 25 clusters assigned to treatment and 25 to control. In the second, we increase to $J=75$ clusters and $N=2{,}250$, maintaining the same average cluster size of 30, with 37 or 38 clusters assigned to each arm. Cluster sizes are sampled from a discrete uniform distribution centered at $N/J$ with a range of $\pm5$ and then rescaled to achieve the target sample size while maintaining a minimum cluster size of two. Individual-level covariates are sampled with replacement from their empirical PSDP distributions, and cluster-level infection rates are sampled from the corresponding empirical distribution. We generate cluster random effects from their fitted normal distributions. Finally, we generate treatment receipt and the binary outcome from \eqref{Sim::ParamModel}.

We made two modifications to the PSDP-calibrated treatment-receipt model to ensure sufficient overlap for identifying separate direct and spillover effects. First, the original one-sided noncompliance produces almost no treatment receipt under control and therefore leaves little overlap in the peer-treatment proportion. To address this, we set $\xi_0=0$, $\xi_Z=1.0$, and $\tau_D^2=1.0$, which yields approximately
\[
P(D=1\mid Z=0)\approx0.22,\qquad
P(D=1\mid Z=1)\approx0.51,
\]
and reduces the correlation between $Z_j$ and $\bar D_{j(-i)}$ to approximately $0.34$--$0.41$. Second, because individual treatment receipt and peer treatment proportion remain correlated through the shared cluster random effect, we introduce moderate individual-level heterogeneity in compliance:
\[
e_{ji}\overset{\mathrm{i.i.d.}}{\sim}\mathcal{N}(0,1),
\]
so that the treatment-receipt model becomes
\[
D_{ji}\mid Z_j,\mathbf{X}_{ji},b_j^D,e_{ji}
\sim
\mathrm{Bern}\left\{
\Phi\left(\xi_0+\xi_ZZ_j+
\boldsymbol{\xi}_X^\top\mathbf{X}_{ji}+b_j^D+e_{ji}\right)
\right\}.
\]
Because the additional term is Gaussian, the marginal treatment-receipt model remains probit after integrating over $e_{ji}$, with a corresponding rescaling of the regression coefficients. The same treatment-receipt mechanism is used when computing the ground-truth effects.

To assess robustness to functional-form misspecification, we construct a nonlinear DGM by augmenting both the outcome and treatment-receipt linear predictors with
\[
\gamma_{X_1^2}\tilde X_{1,ji}^2+
\gamma_{\log X_2}\log(|\tilde X_{2,ji}|+1)+
\gamma_{X_1\times X_2}\tilde X_{1,ji}\tilde X_{2,ji},
\]
where $\tilde X_k$ denotes the standardized version of $X_k$ based on the observed PSDP mean and standard deviation. The corresponding coefficients are $(1.5,0.75,-2.5)$ for both models. The outcome model additionally includes the interaction
\[
\gamma_{D\times X_1}D_{ji}\tilde X_{1,ji},
\qquad \gamma_{D\times X_1}=1.75,
\]
allowing the treatment effect to vary with age. The parametric probit model in \eqref{Sim::ParamModel} omits these nonlinear terms and the treatment--covariate interaction and is therefore misspecified under this DGM. Finally, for both models, we use an $\mathrm{InvGamma}(3,0.6)$ prior for $\tau_Y^2$ and an $\mathrm{InvGamma}(3,2)$ prior for $\tau_D^2$.

\subsubsection{Additional results}\label{AppendixF2}

Tables~\ref{TabSuppS1} and~\ref{TabSuppS2} report performance for potential-outcome means using the estimator with the clever covariate. Tables~\ref{TabSuppS3} and~\ref{TabSuppS4} report average credible interval widths for the causal contrasts.

\begin{table}[htbp]
\caption{Simulation results for potential-outcome means under the correctly specified  DGM, with the clever covariate: bias, RMSE, and $95\%$ credible interval coverage across $500$ replicated datasets. Potential outcomes are $\Theta_{(1,1)}$ (all clusters assigned to treatment), $\Theta_{(0,0)}$ (all clusters assigned to control), and $\Theta_{(1,0)}$ (individual assigned to treatment, peers assigned to control).}
\label{TabSuppS1}
\scalebox{0.88}{%
\begin{tabular}{@{}llcccccccccc@{}}
\hline
Estimand & $N$ & $J$ & Potential outcome & Ground Truth & \multicolumn{3}{c}{Parametric} & \multicolumn{3}{c}{BART-RE} \\
\cline{6-8}\cline{9-11}
 & & & & & Bias & Cov. & RMSE & Bias & Cov. & RMSE \\
\hline
       & 1500 & 50 & $\Theta_{(1,1)}$ & $0.49$ & $-0.045$ & $96.8\%$ & $0.082$ & $\phantom{-}0.0052$ & $97.8\%$ & $0.045$ \\
       &      &    & $\Theta_{(0,0)}$ & $0.40$ & $-0.038$ & $95.6\%$ & $0.070$ & $-0.0016$ & $98.6\%$ & $0.043$ \\
ITT    &      &    & $\Theta_{(1,0)}$ & $0.55$ & $-0.033$ & $97.6\%$ & $0.073$ & $-0.086$ & $96.8\%$ & $0.096$ \\
\cline{2-11}
       & 2250 & 75 & $\Theta_{(1,1)}$ & $0.49$ & $-0.044$ & $92.4\%$ & $0.073$ & $\phantom{-}0.0044$ & $99.4\%$ & $0.036$ \\
       &      &    & $\Theta_{(0,0)}$ & $0.40$ & $-0.033$ & $93.4\%$ & $0.062$ & $-0.0020$ & $98.6\%$ & $0.037$ \\
       &      &    & $\Theta_{(1,0)}$ & $0.55$ & $-0.027$ & $96.2\%$ & $0.062$ & $-0.082$ & $97.2\%$ & $0.090$ \\
\hline
       & 1500 & 50 & $\Theta_{(1,1)}$ & $0.38$ & $-0.034$ & $95.8\%$ & $0.067$ & $-0.0008$ & $97.6\%$ & $0.047$ \\
       &      &    & $\Theta_{(0,0)}$ & $0.44$ & $-0.027$ & $96.6\%$ & $0.059$ & $\phantom{-}0.0006$ & $98.8\%$ & $0.043$ \\
LATE   &      &    & $\Theta_{(1,0)}$ & $0.44$ & $-0.026$ & $97.0\%$ & $0.064$ & $-0.092$ & $94.2\%$ & $0.10$ \\
\cline{2-11}
       & 2250 & 75 & $\Theta_{(1,1)}$ & $0.38$ & $-0.033$ & $94.0\%$ & $0.058$ & $-0.0020$ & $98.4\%$ & $0.036$ \\
       &      &    & $\Theta_{(0,0)}$ & $0.44$ & $-0.023$ & $94.8\%$ & $0.053$ & $-0.0004$ & $98.0\%$ & $0.037$ \\
       &      &    & $\Theta_{(1,0)}$ & $0.44$ & $-0.021$ & $96.6\%$ & $0.053$ & $-0.089$ & $94.4\%$ & $0.097$ \\
\hline
\end{tabular}}
\end{table}

\begin{table}[htbp]
\caption{Simulation results for potential-outcome means under the misspecified  DGM, with the clever covariate: bias, RMSE, and $95\%$ credible interval coverage across $500$ replicated datasets.}
\label{TabSuppS2}
\scalebox{0.88}{%
\begin{tabular}{@{}llcccccccccc@{}}
\hline
Estimand & $N$ & $J$ & Potential outcome & Ground Truth & \multicolumn{3}{c}{Parametric} & \multicolumn{3}{c}{BART-RE} \\
\cline{6-8}\cline{9-11}
 & & & & & Bias & Cov. & RMSE & Bias & Cov. & RMSE \\
\hline
       & 1500 & 50 & $\Theta_{(1,1)}$ & $0.65$ & $\phantom{-}0.023$ & $97.8\%$ & $0.041$ & $-0.0017$ & $98.0\%$ & $0.028$ \\
       &      &    & $\Theta_{(0,0)}$ & $0.59$ & $\phantom{-}0.050$ & $91.2\%$ & $0.065$ & $-0.0057$ & $98.2\%$ & $0.029$ \\
ITT    &      &    & $\Theta_{(1,0)}$ & $0.68$ & $\phantom{-}0.100$ & $48.0\%$ & $0.11$ & $-0.030$ & $99.4\%$ & $0.061$ \\
\cline{2-11}
       & 2250 & 75 & $\Theta_{(1,1)}$ & $0.65$ & $\phantom{-}0.024$ & $95.4\%$ & $0.037$ & $-0.0024$ & $99.2\%$ & $0.022$ \\
       &      &    & $\Theta_{(0,0)}$ & $0.59$ & $\phantom{-}0.054$ & $78.7\%$ & $0.063$ & $-0.0048$ & $98.2\%$ & $0.024$ \\
       &      &    & $\Theta_{(1,0)}$ & $0.68$ & $\phantom{-}0.103$ & $22.1\%$ & $0.11$ & $-0.031$ & $99.4\%$ & $0.058$ \\
\hline
       & 1500 & 50 & $\Theta_{(1,1)}$ & $0.61$ & $\phantom{-}0.156$ & $12.0\%$ & $0.16$ & $\phantom{-}0.011$ & $97.6\%$ & $0.031$ \\
       &      &    & $\Theta_{(0,0)}$ & $0.66$ & $-0.085$ & $79.8\%$ & $0.10$ & $-0.028$ & $94.0\%$ & $0.042$ \\
LATE   &      &    & $\Theta_{(1,0)}$ & $0.64$ & $\phantom{-}0.216$ & $\phantom{0}0.2\%$ & $0.22$ & $-0.012$ & $99.8\%$ & $0.053$ \\
\cline{2-11}
       & 2250 & 75 & $\Theta_{(1,1)}$ & $0.61$ & $\phantom{-}0.159$ & $\phantom{0}1.4\%$ & $0.16$ & $\phantom{-}0.0084$ & $97.2\%$ & $0.023$ \\
       &      &    & $\Theta_{(0,0)}$ & $0.66$ & $-0.079$ & $69.4\%$ & $0.092$ & $-0.020$ & $93.8\%$ & $0.033$ \\
       &      &    & $\Theta_{(1,0)}$ & $0.64$ & $\phantom{-}0.221$ & $\phantom{0}0.0\%$ & $0.22$ & $-0.016$ & $99.8\%$ & $0.050$ \\
\hline
\end{tabular}}
\end{table}

\begin{table}[htbp]
\caption{Average $95\%$ credible interval widths for causal effect estimates under the correctly specified DGM across $500$ replicated datasets.}
\label{TabSuppS3}
\scalebox{0.9}{%
\begin{tabular}{@{}lcccccc@{}}
\hline
Effect type & $N$ & $J$ & Effects & \multicolumn{2}{c}{Avg.\ CI Width} \\
\cline{5-6}
 &  &  &  & Parametric & BART-RE \\
\hline
      & 1500 & 50 & Direct   & $0.209$ & $0.319$ \\
          &  &  & Spillover & $0.106$ & $0.296$ \\
cluster-ITT    &  &  & Total    & $0.198$ & $0.216$ \\
\cline{2-6}
effects    & 2250 & 75 & Direct   & $0.171$ & $0.307$ \\
            &  &  & Spillover  & $0.088$ & $0.295$ \\
            &  &  & Total     & $0.162$ & $0.179$ \\
\hline
     & 1500 & 50 & Direct   & $0.236$ & $0.331$ \\
             &  &  & Spillover  & $0.105$ & $0.295$ \\
cluster-LATE    &  &  & Total   & $0.212$ & $0.227$ \\
\cline{2-6}
effects   & 2250 & 75 & Direct   & $0.193$ & $0.315$ \\
                   &  &  & Spillover  & $0.087$ & $0.293$ \\
                   &  &  & Total   & $0.173$ & $0.188$ \\
\hline
\end{tabular}}
\end{table}

\begin{table}[htbp]
\caption{Average $95\%$ credible interval widths for causal effect estimates under the misspecified DGM across $500$ replicated datasets.}
\label{TabSuppS4}
\scalebox{0.9}{%
\begin{tabular}{@{}lcccccc@{}}
\hline
Effect type & $N$ & $J$ & Effects & \multicolumn{2}{c}{Avg.\ CI Width} \\
\cline{5-6}
 &  &  &  & Parametric & BART-RE \\
\hline
      & 1500 & 50 & Direct   & $0.148$ & $0.435$ \\
          &  &  & Spillover & $0.096$ & $0.423$ \\
cluster-ITT    &  &  & Total    & $0.159$ & $0.151$ \\
\cline{2-6}
effects    & 2250 & 75 & Direct   & $0.117$ & $0.398$ \\
            &  &  & Spillover  & $0.075$ & $0.389$ \\
            &  &  & Total     & $0.125$ & $0.118$ \\
\hline
     & 1500 & 50 & Direct   & $0.183$ & $0.436$ \\
             &  &  & Spillover  & $0.089$ & $0.412$ \\
cluster-LATE    &  &  & Total   & $0.187$ & $0.172$ \\
\cline{2-6}
effects   & 2250 & 75 & Direct   & $0.145$ & $0.394$ \\
                   &  &  & Spillover  & $0.070$ & $0.377$ \\
                   &  &  & Total   & $0.148$ & $0.133$ \\
\hline
\end{tabular}}
\end{table}

\clearpage
\subsection{Contribution of the clever covariate as a function of strength of spillover}\label{AppendixFb}

\subsubsection{Additional details on data generating mechanism and models fit}\label{AppendixFb1}

We generate $J=49$ clusters of $n_j=48$ individuals and assign $24$ clusters to treatment by complete randomization. Individual covariates are independently distributed as $\mathbf{X}_{ji}\sim\mathcal{N}(\mathbf{0},\mathbf{I}_2)$, with no cluster-level covariates. Treatment receipt and the outcome follow
\begin{equation}\label{Sim::CCModel}
    \begin{aligned}
    D_{ji}(1)\mid\mathbf{X}_{ji}
    &\sim \mathrm{Bern}\left\{\Phi\left(\xi_0+\boldsymbol{\xi}_X^\top\mathbf{X}_{ji}\right)\right\},\\
    Y_{ji}\mid D_{ji},\bar{D}_{j(-i)},\mathbf{X}_{ji},b_j^Y
    &\sim \mathrm{Bern}\left\{
    \Phi\left(\gamma_DD_{ji}
    +\gamma_{\bar D}\bar D_{j(-i)}
    +\boldsymbol{\gamma}_X^\top\mathbf{X}_{ji}
    +b_j^Y\right)\right\},\qquad
    b_j^Y\sim\mathcal{N}(0,\tau_Y^2),
    \end{aligned}
\end{equation}
where $\boldsymbol{\xi}_X=(1,0.5)^\top$, $\boldsymbol{\gamma}_X=(0.4,0.4)^\top$, $\gamma_D=-0.5$, and $\tau_Y^2=0.25$. We set $\xi_0=\Phi^{-1}(0.82)(1+\|\boldsymbol{\xi}_X\|^2)^{1/2}=1.373$ to obtain marginal treated-arm uptake of $0.82$, and set observed receipt to $D_{ji}=Z_jD_{ji}(1)$. We vary $\gamma_{\bar D}$ over $\{0,-0.25,-0.5,-0.75,-1\}$, yielding outcome prevalence between $0.33$ and $0.43$. The DGM has no treatment receipt random effect, although both fitted models include one.

Under one-sided noncompliance, all peers remain untreated under control assignment. The true mean $\Theta_{(1,0)}$ therefore remains constant as spillover strength changes, allowing us to check whether its estimates also remain constant. The 10th and 90th percentiles of treated-cluster uptake are approximately $0.75$ and $0.89$. These datasets contain no treated individuals whose peers are all untreated, so estimating their outcomes requires extrapolation. The clever covariate varies little within each assignment arm and is highly correlated with observed peer receipt (correlation at least $0.995$). We examine whether including it reduces bias in the direct and spillover estimates despite this need for extrapolation.

For each spillover coefficient, we calculate the true potential-outcome means using $100{,}000$ independently generated clusters. We integrate over $b_j^Y$ using a closed-form expression and sum over all possible values of individual receipt and the number of treated peers. Monte Carlo standard errors for the effects are below $10^{-4}$. The true values are $\Theta_{(0,0)}=0.500$ and $\Theta_{(1,0)}=0.368$. Thus, the cluster-ITT direct effect remains $-0.132$, while the spillover effect ranges from $0$ to $-0.210$.

We generate $500$ datasets for each value of the peer receipt coefficient $\gamma_{\bar D}$ and fit both models using $50$ trees, $5{,}000$ burn-in iterations, and $1{,}000$ retained draws after thinning by $5$. Both outcome models include $(Z_j,D_{ji},\bar D_{j(-i)},\mathbf{X}_{ji})$. The model with the clever covariate additionally includes $\hat{\pi}^{\bar D}_{j(-i)}$. Both models use the priors in Appendix~\ref{AppendixF1}: $\mathrm{InvGamma}(3,0.6)$ for $\tau_Y^2$ and $\mathrm{InvGamma}(3,2)$ for $\tau_D^2$.

We fit the treatment-receipt BART-RE model and the preliminary parametric probit model only in treated clusters, setting control receipt probabilities to zero. Following Section~\ref{Sec51}, we include $\hat\pi^{\bar D}_{j(-i)}(Z_j)$ (clever covariate), calculated using each cluster's observed assignment, when fitting the outcome model. To predict outcomes for $\Theta_{(z,z_*)}$, we calculate the clever covariate using the peer assignment $z_*$. It equals zero when $z_*=0$ and the predicted mean peer uptake under treatment when $z_*=1$. For each simulated dataset, we fit the treatment-receipt model once. At each posterior draw, we use the same estimated receipt probabilities, receipt random-effect variance, and Bayesian-bootstrap weights in the g-computation analyses with and without the clever covariate. We also use the same random-number seed for the two outcome-model MCMC runs. The outcome models differ only in the inclusion of the clever covariate.

We sum over the possible numbers of treated peers using the Poisson-binomial distribution and integrate over the treatment receipt random effect using Gauss--Hermite quadrature. This replaces the Monte Carlo integration over receipt in Algorithm~\ref{Algo1}. We use $10$--$60$ quadrature nodes depending on the posterior draw of $\tau_D^2$, keeping the integration error of order $10^{-4}$ or smaller. We compare bias, RMSE, and coverage of $95\%$ credible intervals for the cluster-ITT effects and their potential-outcome means. We also calculate the ratio of MSE with the clever covariate to MSE without it, using the same simulated datasets for both estimates. To quantify uncertainty in this ratio, we use the delta method to approximate the standard error of its logarithm, accounting for the dependence between the two MSE estimates. We construct a $95\%$ confidence interval on the log scale and exponentiate its limits to ensure positive endpoints.

\subsubsection{Additional results}\label{AppendixFb2}

Table~\ref{TableSim3} provides the values from Figure~\ref{FigSimCC}, together with the true effects and credible interval coverage. Table~\ref{TabSuppS5} reports estimation results for $\Theta_{(1,1)}$, $\Theta_{(1,0)}$, and $\Theta_{(0,0)}$. Table~\ref{TabSuppS6} provides additional summaries of posterior uncertainty, including credible interval widths.

When spillover is present, the model without the clever covariate underestimates the contribution of peer receipt, with increasing bias as spillover strengthens. Including the clever covariate reduces this bias under moderate or strong spillover. With the strongest spillover, the estimated proportion of the total effect attributable to peer receipt increases from $32\%$ without the clever covariate to $46\%$ with it, although it remains below the true proportion of $62\%$. When spillover is absent, including the clever covariate increases the estimated proportion from $14\%$ to $23\%$, incorrectly attributing more of the total effect to peer receipt. Under weak spillover, it reduces direct-effect MSE but increases spillover-effect MSE.

Both models estimate $\Theta_{(1,1)}$ and $\Theta_{(0,0)}$ with small bias and approximately nominal credible interval coverage. Their estimates differ mainly for $\Theta_{(1,0)}$, the mean outcome when individual receipt is generated under treatment assignment and peer receipt under control assignment. Although its true value is constant across spillover strengths, its estimated value changes with the treated-arm mean. Across the five scenarios, the slope relating the average estimate of $\Theta_{(1,0)}$ to the true $\Theta_{(1,1)}$ is $0.56$ without the clever covariate and $0.38$ with it, compared with a true slope of zero. Including the clever covariate reduces this association but does not eliminate it. Since the covariate varies little within each assignment arm, these results suggest that it changes how BART attributes the difference between treatment and control outcomes to individual and peer receipt.

The $95\%$ credible intervals for the direct and spillover effects cover the true effects in every replicate. Their average widths are approximately $0.45$--$0.54$, several times the magnitude of the true direct effect. The ratio of average posterior standard deviation to the empirical standard deviation of posterior means ranges from $3.6$ to $5.3$ for these effects, compared with approximately one for the total effect. Including the clever covariate generally widens the direct and spillover intervals. Both models estimate the mean outcomes under treatment and control assignment relatively well, but must extrapolate to predict outcomes for treated individuals whose peers are all untreated. Uncertainty in these predictions contributes to the wide intervals for the decomposed effects. Thus, including the clever covariate reduces MSE for the point estimates without improving their precision.

For moderate and strong spillover ($\gamma_{\bar D}\leq-0.5$), the clever covariate reduces the direct-effect MSE by approximately $65\%$--$70\%$ and spillover-effect MSE by $57\%$--$66\%$. It also gives smaller absolute errors in $74\%$--$99.8\%$ of the corresponding paired datasets. Total-effect MSE ratios remain close to one across all scenarios. These comparisons demonstrate the benefits described in Section~\ref{Sec82}.

\begin{table}[htbp]
\caption{Cluster-ITT effects in the clever-covariate study ($J=49$, $n_j=48$): true value, bias, Monte Carlo standard error (MCSE) of the bias, $95\%$ credible interval coverage, and RMSE of BART-RE without and with the clever covariate, and the MSE ratio with/without the clever covariate with its $95\%$ confidence interval, across $500$ replicated datasets for each spillover coefficient $\gamma_{\bar D}$.}
\label{TableSim3}
\scalebox{0.74}{%
\begin{tabular}{@{}lccccccccccc@{}}
\hline
$\gamma_{\bar D}$ & Effects & Truth & \multicolumn{4}{c}{Without clever covariate} & \multicolumn{4}{c}{With clever covariate} & MSE ratio \\
\cline{4-7}\cline{8-11}
 & & & Bias & MCSE & Cov. & RMSE & Bias & MCSE & Cov. & RMSE & (95\% CI) \\
\hline
$\phantom{-}0$ & Direct & $-0.132$ & $0.0241$ & $0.0015$ & $100.0\%$ & $0.0407$ & $0.0350$ & $0.0012$ & $100.0\%$ & $0.0445$ & $1.196$ $(1.120, 1.277)$ \\
 & Spillover & $\phantom{-}0.000$ & $-0.0177$ & $0.0012$ & $100.0\%$ & $0.0319$ & $-0.0291$ & $0.0015$ & $100.0\%$ & $0.0441$ & $1.910$ $(1.784, 2.045)$ \\
 & Total & $-0.132$ & $0.0064$ & $0.0020$ & $\phantom{0}96.0\%$ & $0.0457$ & $0.0059$ & $0.0020$ & $\phantom{0}95.8\%$ & $0.0459$ & $1.009$ $(0.993, 1.025)$ \\
\cline{1-12}
$-0.25$ & Direct & $-0.132$ & $-0.0069$ & $0.0015$ & $100.0\%$ & $0.0347$ & $0.0141$ & $0.0013$ & $100.0\%$ & $0.0326$ & $0.882$ $(0.783, 0.994)$ \\
 & Spillover & $-0.061$ & $0.0147$ & $0.0011$ & $100.0\%$ & $0.0287$ & $-0.0074$ & $0.0014$ & $100.0\%$ & $0.0321$ & $1.256$ $(1.080, 1.460)$ \\
 & Total & $-0.193$ & $0.0078$ & $0.0020$ & $\phantom{0}96.0\%$ & $0.0450$ & $0.0066$ & $0.0020$ & $\phantom{0}95.8\%$ & $0.0453$ & $1.015$ $(1.000, 1.031)$ \\
\cline{1-12}
$-0.50$ & Direct & $-0.132$ & $-0.0346$ & $0.0015$ & $100.0\%$ & $0.0477$ & $-0.0048$ & $0.0012$ & $100.0\%$ & $0.0283$ & $0.352$ $(0.317, 0.389)$ \\
 & Spillover & $-0.117$ & $0.0490$ & $0.0011$ & $100.0\%$ & $0.0549$ & $0.0174$ & $0.0014$ & $100.0\%$ & $0.0361$ & $0.433$ $(0.400, 0.469)$ \\
 & Total & $-0.249$ & $0.0144$ & $0.0020$ & $\phantom{0}93.8\%$ & $0.0470$ & $0.0126$ & $0.0020$ & $\phantom{0}93.8\%$ & $0.0470$ & $1.000$ $(0.986, 1.014)$ \\
\cline{1-12}
$-0.75$ & Direct & $-0.132$ & $-0.0649$ & $0.0014$ & $100.0\%$ & $0.0725$ & $-0.0264$ & $0.0013$ & $100.0\%$ & $0.0396$ & $0.297$ $(0.281, 0.315)$ \\
 & Spillover & $-0.167$ & $0.0796$ & $0.0010$ & $100.0\%$ & $0.0829$ & $0.0390$ & $0.0014$ & $100.0\%$ & $0.0497$ & $0.359$ $(0.339, 0.380)$ \\
 & Total & $-0.299$ & $0.0147$ & $0.0019$ & $\phantom{0}95.0\%$ & $0.0439$ & $0.0127$ & $0.0019$ & $\phantom{0}95.8\%$ & $0.0437$ & $0.991$ $(0.974, 1.009)$ \\
\cline{1-12}
$-1.00$ & Direct & $-0.132$ & $-0.0941$ & $0.0015$ & $100.0\%$ & $0.0999$ & $-0.0467$ & $0.0014$ & $100.0\%$ & $0.0559$ & $0.313$ $(0.297, 0.330)$ \\
 & Spillover & $-0.210$ & $0.1064$ & $0.0010$ & $100.0\%$ & $0.1086$ & $0.0565$ & $0.0013$ & $100.0\%$ & $0.0636$ & $0.344$ $(0.327, 0.361)$ \\
 & Total & $-0.342$ & $0.0122$ & $0.0018$ & $\phantom{0}95.0\%$ & $0.0412$ & $0.0098$ & $0.0018$ & $\phantom{0}95.0\%$ & $0.0409$ & $0.983$ $(0.966, 1.001)$ \\
\hline
\end{tabular}}
\end{table}

\begin{table}[htbp]
\caption{Potential-outcome means in the clever-covariate study ($J=49$, $n_j=48$): true value, bias, RMSE, and $95\%$ credible interval coverage of BART-RE without and with the clever covariate across $500$ replicated datasets for each spillover coefficient $\gamma_{\bar D}$.}
\label{TabSuppS5}
\scalebox{0.85}{%
\begin{tabular}{@{}lccccccccc@{}}
\hline
$\gamma_{\bar D}$ & Potential outcome & Truth & \multicolumn{3}{c}{Without clever covariate} & \multicolumn{3}{c}{With clever covariate} \\
\cline{4-6}\cline{7-9}
 & & & Bias & Cov. & RMSE & Bias & Cov. & RMSE \\
\hline
$\phantom{-}0$ & $\Theta_{(1,1)}$ & $\phantom{-}0.368$ & $\phantom{-}0.004$ & $\phantom{0}96.6\%$ & $0.033$ & $\phantom{-}0.003$ & $\phantom{0}97.0\%$ & $0.033$ \\
 & $\Theta_{(0,0)}$ & $\phantom{-}0.500$ & $-0.003$ & $\phantom{0}95.0\%$ & $0.033$ & $-0.002$ & $\phantom{0}94.2\%$ & $0.033$ \\
 & $\Theta_{(1,0)}$ & $\phantom{-}0.368$ & $\phantom{-}0.021$ & $100.0\%$ & $0.037$ & $\phantom{-}0.033$ & $100.0\%$ & $0.044$ \\
\cline{1-9}
$-0.25$ & $\Theta_{(1,1)}$ & $\phantom{-}0.307$ & $\phantom{-}0.006$ & $\phantom{0}97.2\%$ & $0.031$ & $\phantom{-}0.006$ & $\phantom{0}97.6\%$ & $0.031$ \\
 & $\Theta_{(0,0)}$ & $\phantom{-}0.500$ & $-0.001$ & $\phantom{0}94.4\%$ & $0.033$ & $-0.001$ & $\phantom{0}94.6\%$ & $0.033$ \\
 & $\Theta_{(1,0)}$ & $\phantom{-}0.368$ & $-0.008$ & $100.0\%$ & $0.029$ & $\phantom{-}0.013$ & $100.0\%$ & $0.032$ \\
\cline{1-9}
$-0.50$ & $\Theta_{(1,1)}$ & $\phantom{-}0.251$ & $\phantom{-}0.011$ & $\phantom{0}96.4\%$ & $0.030$ & $\phantom{-}0.009$ & $\phantom{0}96.6\%$ & $0.029$ \\
 & $\Theta_{(0,0)}$ & $\phantom{-}0.500$ & $-0.004$ & $\phantom{0}93.6\%$ & $0.035$ & $-0.003$ & $\phantom{0}94.0\%$ & $0.035$ \\
 & $\Theta_{(1,0)}$ & $\phantom{-}0.368$ & $-0.038$ & $100.0\%$ & $0.047$ & $-0.008$ & $100.0\%$ & $0.030$ \\
\cline{1-9}
$-0.75$ & $\Theta_{(1,1)}$ & $\phantom{-}0.201$ & $\phantom{-}0.011$ & $\phantom{0}95.6\%$ & $0.029$ & $\phantom{-}0.010$ & $\phantom{0}96.0\%$ & $0.029$ \\
 & $\Theta_{(0,0)}$ & $\phantom{-}0.500$ & $-0.004$ & $\phantom{0}94.6\%$ & $0.035$ & $-0.003$ & $\phantom{0}95.2\%$ & $0.035$ \\
 & $\Theta_{(1,0)}$ & $\phantom{-}0.368$ & $-0.069$ & $100.0\%$ & $0.074$ & $-0.029$ & $100.0\%$ & $0.043$ \\
\cline{1-9}
$-1.00$ & $\Theta_{(1,1)}$ & $\phantom{-}0.158$ & $\phantom{-}0.011$ & $\phantom{0}96.0\%$ & $0.025$ & $\phantom{-}0.009$ & $\phantom{0}96.2\%$ & $0.025$ \\
 & $\Theta_{(0,0)}$ & $\phantom{-}0.500$ & $-0.002$ & $\phantom{0}94.0\%$ & $0.034$ & $-0.001$ & $\phantom{0}94.4\%$ & $0.034$ \\
 & $\Theta_{(1,0)}$ & $\phantom{-}0.368$ & $-0.096$ & $100.0\%$ & $0.100$ & $-0.047$ & $100.0\%$ & $0.057$ \\
\hline
\end{tabular}}
\end{table}

\begin{table}[htbp]
\caption{Average $95\%$ credible interval widths and ratio of the average posterior standard deviation to the empirical standard deviation of the posterior means for the cluster-ITT effects ($J=49$, $n_j=48$), without and with the clever covariate, across $500$ replicated datasets for each spillover coefficient $\gamma_{\bar D}$.}
\label{TabSuppS6}
\centering
\begin{tabular}{@{}lccccc@{}}
\hline
$\gamma_{\bar D}$ & Effects & \multicolumn{2}{c}{Avg.\ CI Width} & \multicolumn{2}{c}{Post.\ SD / Emp.\ SD} \\
\cline{3-4}\cline{5-6}
 & & Without & With & Without & With \\
\hline
$\phantom{-}0$ & Direct & $0.486$ & $0.535$ & $3.8$ & $5.0$ \\
 & Spillover & $0.471$ & $0.534$ & $4.5$ & $4.1$ \\
 & Total & $0.188$ & $0.188$ & $1.1$ & $1.1$ \\
\cline{1-6}
$-0.25$ & Direct & $0.476$ & $0.541$ & $3.6$ & $4.7$ \\
 & Spillover & $0.460$ & $0.541$ & $4.8$ & $4.4$ \\
 & Total & $0.183$ & $0.184$ & $1.1$ & $1.0$ \\
\cline{1-6}
$-0.50$ & Direct & $0.473$ & $0.534$ & $3.7$ & $4.9$ \\
 & Spillover & $0.456$ & $0.531$ & $4.7$ & $4.3$ \\
 & Total & $0.178$ & $0.178$ & $1.0$ & $1.0$ \\
\cline{1-6}
$-0.75$ & Direct & $0.472$ & $0.536$ & $3.8$ & $4.7$ \\
 & Spillover & $0.455$ & $0.533$ & $5.0$ & $4.5$ \\
 & Total & $0.171$ & $0.172$ & $1.1$ & $1.0$ \\
\cline{1-6}
$-1.00$ & Direct & $0.466$ & $0.544$ & $3.6$ & $4.6$ \\
 & Spillover & $0.448$ & $0.538$ & $5.3$ & $4.8$ \\
 & Total & $0.162$ & $0.163$ & $1.1$ & $1.0$ \\
\hline
\end{tabular}
\end{table}

\clearpage

\subsection{Strong spillover with smaller clusters}\label{AppendixFc}

We assess whether our findings with a strong spillover effect persist with smaller clusters. We use the same DGM in \eqref{Sim::CCModel}, but use $J=50$ clusters of $n_j=30$ individuals, assigning $25$ clusters to treatment, and setting $\gamma_{\bar D}=-1$. The true direct, spillover, and total effects are $-0.132$, $-0.210$, and $-0.342$, respectively. Smaller clusters provide more variation in treated-arm uptake, with 10th and 90th percentiles of approximately $0.73$ and $0.90$. However, estimating $\Theta_{(1,0)}$ still requires predicting outcomes for treated individuals whose peers are all untreated, a treatment pattern not observed in these datasets.

Table~\ref{TabSuppS7} and Figure~\ref{FigSuppR5} show that including the clever covariate roughly halves the absolute biases of the direct and spillover effect estimates and reduces their MSE by about two-thirds. The paired MSE ratios are close to those under strong spillover in Section~\ref{Sec82}, and the model with the clever covariate gives generally smaller absolute errors. The improvement therefore persists across these two cluster-size designs and independent datasets. Total-effect estimation changes little because including the clever covariate primarily improves estimation of $\Theta_{(1,0)}$, while estimates of the two assignment-specific means remain similar.

For the potential-outcome results, the MSE ratio for $\Theta_{(1,0)}$ is $0.32$, while ratios for $\Theta_{(1,1)}$ and $\Theta_{(0,0)}$ are close to one. Including the clever covariate increases the estimated spillover share from $33\%$ to $48\%$, bringing it closer to the true share of $61\%$. Thus, the gain reflects a more accurate decomposition of the intervention benefit, although the residual component bias remains. For both models, the $95\%$ credible intervals for the direct and spillover effects are wide and cover the true effects in more than $95\%$ of simulated datasets, while coverage for the total effect is approximately $95\%$.

These results support including the clever covariate when spillover is strong. However, both models still rely on extrapolation to predict outcomes for treated individuals with untreated peers. The comparisons under weak or no spillover in Section~\ref{Sec82} also show that the clever covariate does not improve estimation in every setting. Table~\ref{TabSuppS7} provides the full results for the causal effects and potential-outcome means.

\begin{table}[htbp]
\caption{Supplementary clever-covariate study ($J=50$, $n_j=30$, $\gamma_{\bar D}=-1$): bias, $95\%$ credible interval coverage, RMSE, and average interval width of BART-RE without and with the clever covariate for the cluster-ITT effects and potential-outcome means across $500$ replicated datasets. MSE ratio: with/without the clever covariate over the same datasets, with $95\%$ confidence interval.}
\label{TabSuppS7}
\scalebox{0.78}{%
\begin{tabular}{@{}lcccccccccc@{}}
\hline
 & Ground Truth & \multicolumn{4}{c}{Without clever covariate} & \multicolumn{4}{c}{With clever covariate} & MSE ratio \\
\cline{3-6}\cline{7-10}
 & & Bias & Cov. & RMSE & Width & Bias & Cov. & RMSE & Width & (95\% CI) \\
\hline
Direct & $-0.132$ & $-0.091$ & $100.0\%$ & $0.099$ & $0.481$ & $-0.043$ & $100.0\%$ & $0.055$ & $0.564$ & $0.31$ $(0.29, 0.33)$ \\
Spillover & $-0.210$ & $\phantom{-}0.103$ & $100.0\%$ & $0.106$ & $0.461$ & $\phantom{-}0.051$ & $100.0\%$ & $0.061$ & $0.557$ & $0.33$ $(0.31, 0.35)$ \\
Total & $-0.342$ & $\phantom{-}0.011$ & $\phantom{0}95.2\%$ & $0.042$ & $0.168$ & $\phantom{-}0.009$ & $\phantom{0}95.0\%$ & $0.042$ & $0.168$ & $0.97$ $(0.96, 0.99)$ \\
\hline
$\Theta_{(1,1)}$ & $\phantom{-}0.158$ & $\phantom{-}0.009$ & $\phantom{0}94.6\%$ & $0.025$ & $0.100$ & $\phantom{-}0.008$ & $\phantom{0}95.2\%$ & $0.025$ & $0.100$ & $0.95$ $(0.94, 0.97)$ \\
$\Theta_{(0,0)}$ & $\phantom{-}0.500$ & $-0.002$ & $\phantom{0}95.0\%$ & $0.034$ & $0.137$ & $-0.001$ & $\phantom{0}95.0\%$ & $0.034$ & $0.136$ & $0.99$ $(0.98, 1.01)$ \\
$\Theta_{(1,0)}$ & $\phantom{-}0.368$ & $-0.093$ & $100.0\%$ & $0.098$ & $0.473$ & $-0.044$ & $100.0\%$ & $0.056$ & $0.561$ & $0.32$ $(0.30, 0.34)$ \\
\hline
\end{tabular}}
\end{table}

\begin{figure}[htbp]
\centering
\includegraphics[width=\textwidth]{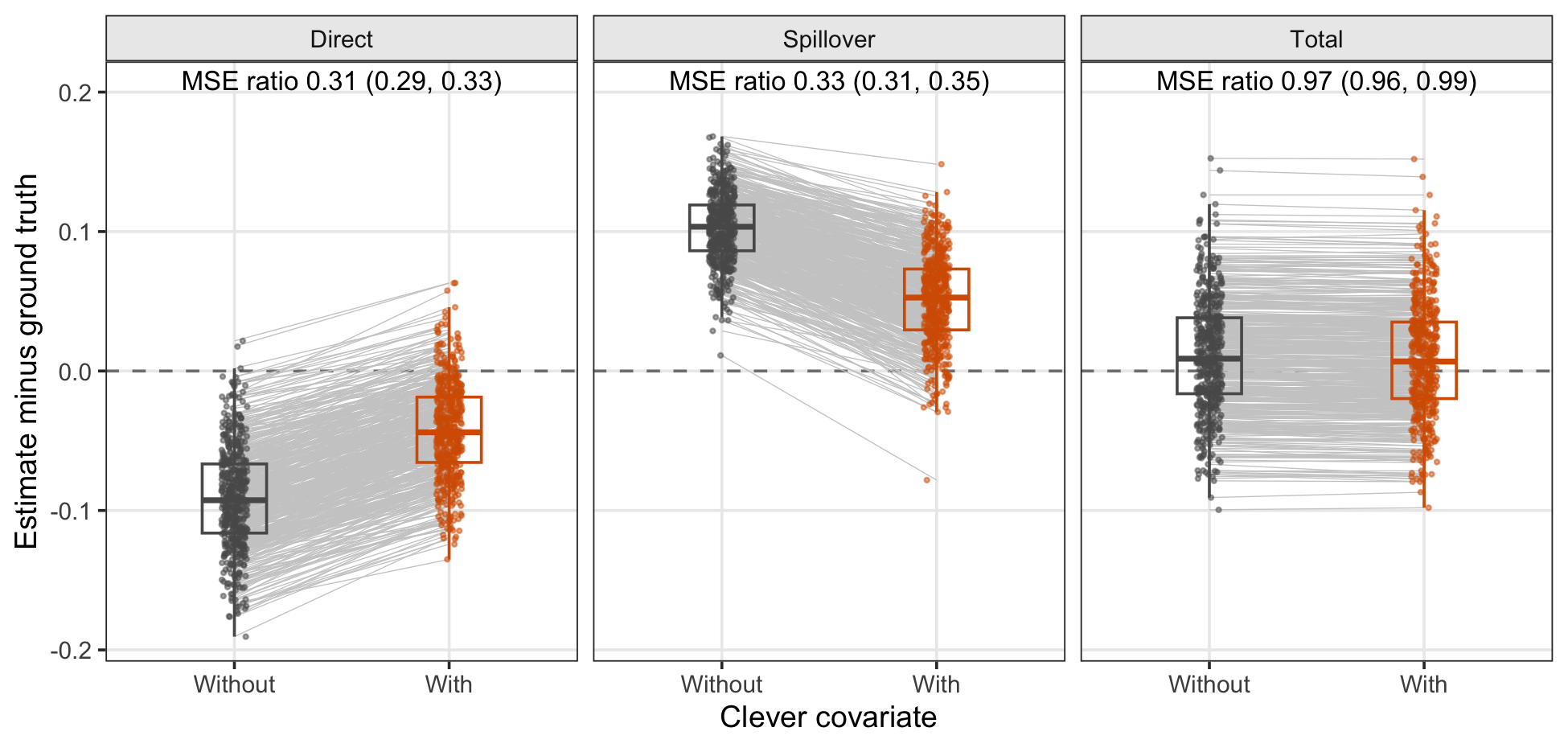}
\caption{Supplementary clever-covariate study ($J=50$, $n_j=30$, $\gamma_{\bar D}=-1$): estimation error of the cluster-ITT direct, spillover, and total effects in each of $500$ replicated datasets without and with the clever covariate. Lines join the two estimates from the same dataset, and boxplots summarize the errors. Each panel reports the MSE ratio with/without the clever covariate and its $95\%$ confidence interval.}
\label{FigSuppR5}
\end{figure}

\newpage

\section{Directed Acyclic Graph (DAG)}\label{AppendixG}
\begin{figure}[htbp]
\centering
\begin{tikzpicture}[
 every node/.style={circle,draw=black,thick,minimum size=1.2cm,font=\small,align=center},
 arrow/.style={->,thick,>=Stealth}]
\node (X) at (0,0) {$\mathbf{X}_j$};
\node (Z) at (0,2.8) {$Z_j$};
\node (D1) at (3,1.4) {$D_{ji}$};
\node (D2) at (3,-1.4) {$D_{ji'}$};
\node (Y1) at (7,1.4) {$Y_{ji}$};
\node (Y2) at (7,-1.4) {$Y_{ji'}$};
\draw[arrow] (Z) -- (D1);
\draw[arrow] (Z) to[bend right=15] (D2);
\draw[arrow] (X) -- (D1);
\draw[arrow] (X) -- (D2);
\draw[arrow] (X) to[bend left=45] (Y1);
\draw[arrow] (X) to[bend right=45] (Y2);
\draw[arrow] (D1) -- (Y1);
\draw[arrow] (D2) -- (Y2);
\draw[arrow,blue] (D1) -- (Y2);
\draw[arrow,blue] (D2) -- (Y1);
\end{tikzpicture}
\caption{DAG of treatment receipt and outcome pathways for two individuals in cluster $j$. The shared node $Z_j$ denotes randomized cluster assignment, and $\mathbf{X}_j$ includes individual and cluster baseline covariates. Blue arrows indicate spillover pathways. An assignment-to-outcome arrow is absent under network exclusion. Random effects and other cluster members are omitted for clarity.}
\label{Fig:DAG}
\end{figure}

\end{document}